\documentclass[reqno,12pt,letterpaper]{amsart}
\usepackage{amsmath,amssymb,amsthm,graphicx,mathrsfs,url}
\usepackage[usenames,dvipsnames]{color}
\usepackage[colorlinks=true,linkcolor=Red,citecolor=Green]{hyperref}
\usepackage{amsxtra}
\usepackage{subcaption}


\setlength{\marginparwidth}{0.6in}

\def\?[#1]{\textbf{[#1]}\marginpar{\Large{\textbf{??}}}}

\def\smallsection#1{\smallskip\noindent\textbf{#1}.}
\let\epsilon=\varepsilon 

\setlength{\textheight}{8.50in} \setlength{\oddsidemargin}{0.00in}
\setlength{\evensidemargin}{0.00in} \setlength{\textwidth}{6.08in}
\setlength{\topmargin}{0.00in} \setlength{\headheight}{0.18in}
\setlength{\marginparwidth}{1.0in}
\setlength{\abovedisplayskip}{0.2in}
\setlength{\belowdisplayskip}{0.2in}
\setlength{\parskip}{0.05in}

\newcommand{\RR}{{\mathbb R}}
\newcommand{\NN}{{\mathbb N}}
\newcommand{\CC}{{\mathbb C}}
\newcommand{\TT}{{\mathbb T}}
\newcommand{\ZZ}{{\mathbb Z}}

\DeclareGraphicsRule{*}{mps}{*}{}

\newtheorem{theo}{Theorem}
\newtheorem{prop}{Proposition}[section]	
\newtheorem{defi}[prop]{Definition}

\newtheorem{lemm}[prop]{Lemma}

\newtheorem{rem}{Remark}
\numberwithin{equation}{section}

\DeclareMathOperator{\Spec}{Spec}

\let\Im=\Imag

\DeclareMathOperator{\Op}{Op}

\let\Re=\Real
\DeclareMathOperator{\sgn}{sgn}

\DeclareMathOperator{\supp}{supp}
\DeclareMathOperator{\vol}{vol}
\DeclareMathOperator{\WF}{WF}
\DeclareMathOperator{\tr}{tr}

\def\WFh{\WF_h}
\def\indic{\operatorname{1\hskip-2.75pt\relax l}}
\usepackage{scalerel}

\newcommand\reallywidehat[1]{\arraycolsep=0pt\relax%
\begin{array}{c}
\stretchto{
  \scaleto{
    \scalerel*[\widthof{\ensuremath{#1}}]{\kern-.5pt\bigwedge\kern-.5pt}
    {\rule[-\textheight/2]{1ex}{\textheight}} 
  }{\textheight} %
}{0.5ex}\\           
#1\\                 
\rule{-1ex}{0ex}
\end{array}
}

\title[Magnetic oscillations in graphene]{Magnetic oscillations in a model of graphene}

\author{Simon Becker}
\email{simon.becker@damtp.cam.ac.uk}
\address{DAMTP, University of Cambridge, Wilberforce Rd, Cambridge CB3 0WA, UK}
\author{Maciej Zworski}
\email{zworski@math.berkeley.edu}
\address{Department of Mathematics, University of California,
Berkeley, CA 94720, USA}

\begin{document}

\begin{abstract}
We consider a quantum graph as a model of graphene in constant magnetic field and describe the density of states in terms of relativistic Landau levels satisfying a Bohr--Sommerfeld quantization condition. That provides semiclassical corrections (with the magnetic flux as the semiclassical parameter) in the study of magnetic oscillations. \end{abstract}

\maketitle


\section{Introduction and statement of results}

\label{s:intr}

The purpose of this paper is to describe the 
density of states for a model of graphene in constant magnetic field
and to relate it to the {\em Shubnikov-de Haas} and {\em de Haas--van Alphen} effects.

We use a quantum graph model introduced by Kuchment--Post \cite{KP} 
with the magnetic field formalism coming from Br\"uning--Geyler--Pankrashin 
\cite{BGP}. Quantum graphs help to investigate spectral properties
of complex systems:
the complexity is captured by the graph but analytic aspects remain
one dimensional and hence relatively simple. In particular, existence of
{\em Dirac points} in the Bloch--Floquet dispersion relation -- see \S \ref{s:dir} -- follows from a straightforward computation. This should be compared with the subtle study by Fefferman--Weinstein \cite{FW1} which starts with 
periodic Sch\"odinger operators on $ \RR^2$. 

\begin{figure}
\includegraphics[height=5.65cm]{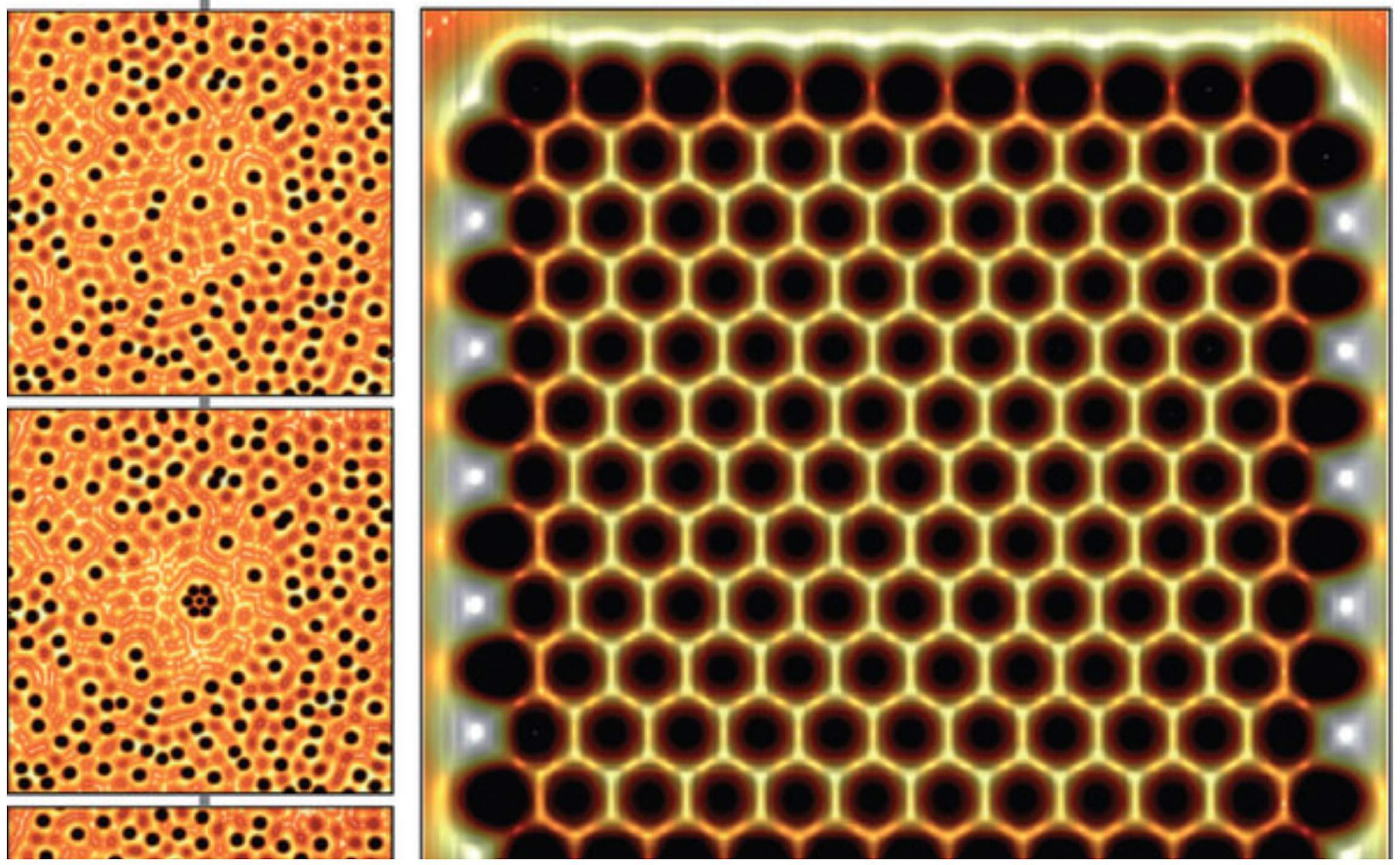}
\caption{A molecular graphene \cite{hari} 
in which the CO molecules confine the Bloch electron to a one dimensional hexagonal structure. \label{fig:hari}}
\end{figure}

One experimental setting for which quantum graphs could be a reasonable model is {\em molecular graphene} studied by the Manoharan group \cite{hari} -- see also \cite{PoMa} for a general discussion. In that case CO molecules placed on a copper plate confine 
the electrons to a one dimensional hexagonal structure -- see Figure
 \ref{fig:hari}.

The ideas behind rigorous study of the density of states and of magnetic oscillations come from the works of Helffer--Sj\"ostrand 
\cite{HS0},\cite{HS1},\cite{HS20},\cite{HS2},\cite{BGHKS} (to which we refer for background and additional references). However, the simplicity of our model allows us to give an essentially self-contained presentation. 

\begin{figure}
\includegraphics[height=6.5cm]{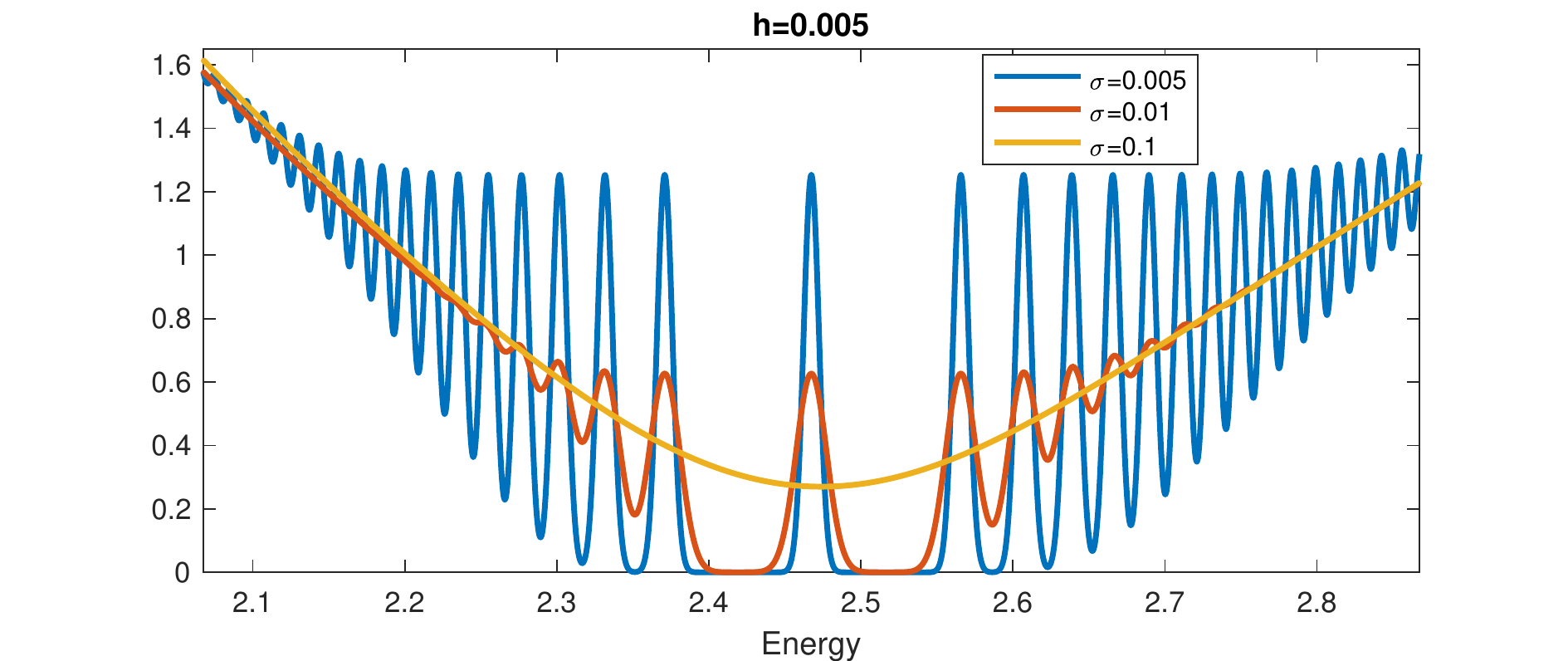}
\caption{\label{fig:DOS1} Graphs of smoothed-out density of states
 $ \mu \mapsto \widetilde \rho_B ( \exp ( ( \bullet - \mu)^2 /2 \sigma^2 ) / \sqrt{ 2 \pi} \sigma )  $ for $ h = 0.005$ and for  different values of $ \sigma $ (using an approximation \eqref{eq:Diracmeasures}). When $ \sigma $ is large 
 the function $ x \mapsto \exp ( ( x - \mu)^2 /2 \sigma^2 ) / \sqrt{ 2 \pi} \sigma ) $ is uniformly smooth and we see no oscillations as predicted by 
 \eqref{eq:smooth}. Figure \ref{fig:DOS4} compares these graphs with 
 the ones based on the ``perfect cone" approximation \eqref{eq:physde}; see
 also Figure \ref{Fig:DOS} for the density of states of the zero magnetic field.}
\end{figure}

The main object of our study is the density of state (DOS) for a magnetic Hamiltonian, $ H_B $, on a hexagonal quantum graph -- see \eqref{magop}.
The DOS is defined as a non-negative distribution 
$ \rho_B \in {\mathscr D' }( \RR ) $ (that is, a measure) produced by a renormalized trace: for $ f \in C_{\rm{c}} ( \RR) $,
\begin{equation}
\label{eq:DOS}
\rho_B ( f ) = \widetilde{\operatorname{tr}}(f(H^B)) := 
\lim_{ R \to \infty } \frac{ \tr \indic_{ B ( R ) } f ( H_B ) }{ \vol(B (R )) } , \ \ B ( R ) := \{ x \in \RR^2 : |x| < R \} ,
\end{equation}
see Definition \ref{de:DOS}. 

Our desription of the density of states comes close to formal expressions for the density of states $\rho_B $
given in the physics literature,
\begin{gather}
\label{eq:physde}
\begin{gathered} \rho_B  ( E ) = \frac{ B } {  \pi } \sum_{ n \in \ZZ }
\delta ( E - E_n ) , \ \ E_n := {\rm{sign}} (n ) v_F \sqrt{ |n|  B } ,\\
v_F = \text{ Fermi velocity } , \ \ B = \text{ strength of the magnetic field}, 
\end{gathered}
\end{gather}
see for instance \cite[(42)]{cau} or \cite[(4.2)]{sgb}. The energies $ E_n $ 
are the (approximate) relativistic Landau levels. 

Theorem \ref{theorem3} gives the following rigorous version of \eqref{eq:physde}. It is convenient to consider a semiclassical parameter given by the magnetic flux through a cell of the hexagonal lattice, see 
Figure \ref{fig:1}:
\[  h:= \frac{3\sqrt{3}} 2 B = | b_1 \wedge b_2 | B .\]
Then, if $ I $ is a neighbourhood of a Dirac energy (see \S 
\ref{s:dir}) and 
$ f \in C_{\rm{c}}^\alpha ( I )$, $ \alpha > 0 $,
\begin{equation}
\label{eq:rhoBh}  \rho_B (f ) = \frac h {\pi|b_1 \wedge b_2| } \sum_{ n\in \ZZ } f ( z_n ( h ) ) + 
\mathcal O  (\| f \|_{ C^\alpha } h^\infty) , \ \ \ h \to 0 , 
\end{equation}
where $ z_n ( h ) $ satisfy natural quantization conditions 
\eqref{eq:g2F} and \eqref{eq:tracef}. They are approximately given by 
$ z_D + E_n$ with $ E_n$'s in \eqref{eq:physde}, where $ z_D $ is the Dirac energy.  This simple asymptotic formula should be contrasted with the complicated structure of the spectrum of $ H^B $ -- see
the analysis by Becker--Han--Jitomirskaya \cite{BHJ17}. 

\begin{figure}
  \centering
   \includegraphics[width=11cm]{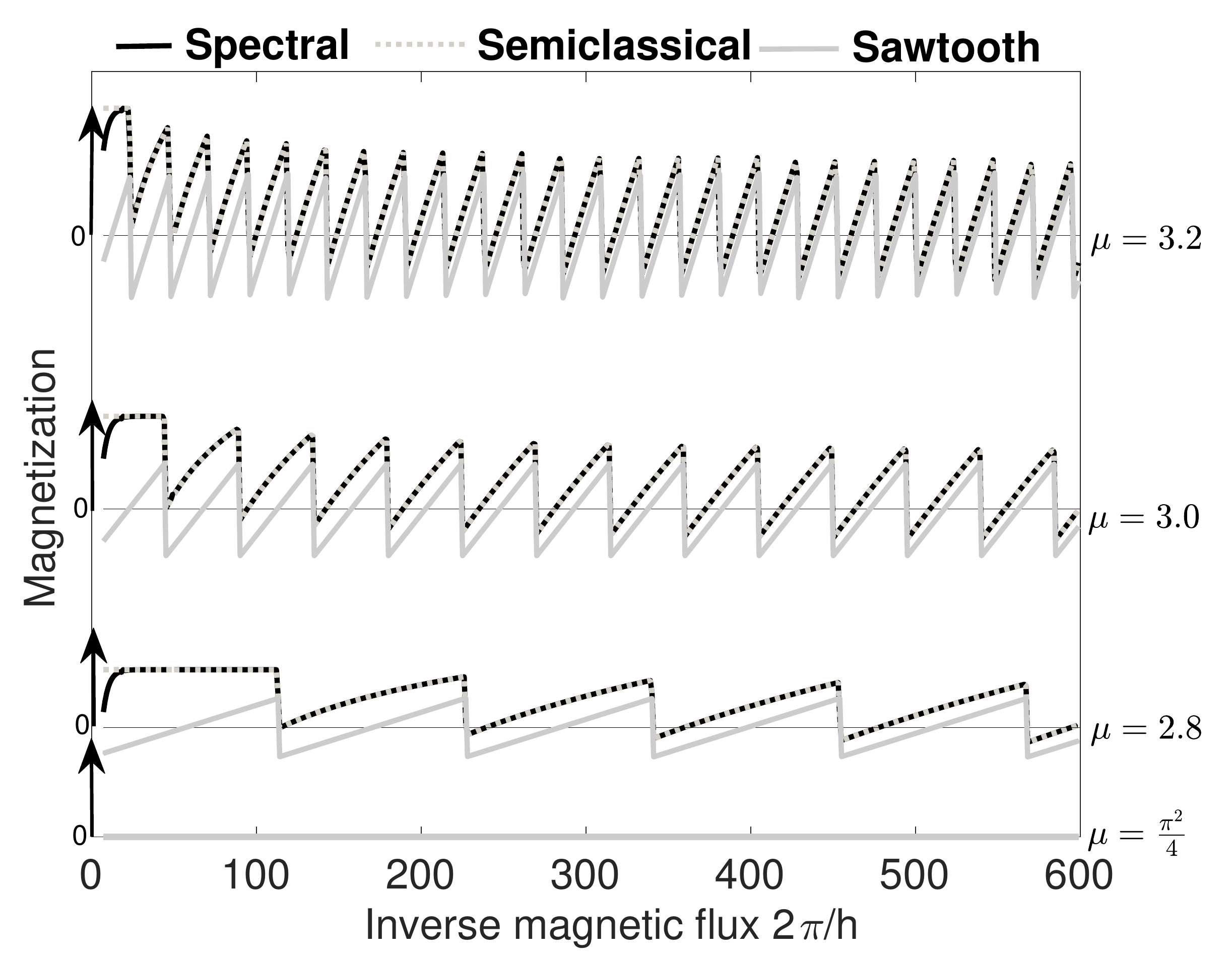} 
\caption{\label{Fig:spec} Plots of different approximations of 
the magnetization localized to the upper cone: the spectral one using
$ M_\infty $ defined in \eqref{eq:grand} (see \S \ref{spectral}), the semiclassical approximation 
$ m_\infty $ given in \eqref{eq:formal} and the sawtooth approximation 
\eqref{eq:sawtooth1}.  The agreement of $ M_\infty $ and $ m_\infty $ is remarkable even for relatively large values $ h $. The sawtooth approximation gives the correct oscillations but with amplitude errors
$ \mathcal O (\sqrt h ) $.}
\end{figure}

The importance of considering functions which are not smooth is their appearance in condensed matter calculations -- see \S 7. Oscillations as functions of $ 1/B $ are not seen for smooth functions in view of 
Theorem \ref{t:smooth}:
\begin{equation}
\label{eq:smooth} \rho_B (f ) \sim \sum_{j=0}^\infty  A_j ( f ) h^j , \ \ A_0 ( f ) = 
\rho_0 ( f ) , \ \ h \to 0 , \ \ 
f \in C^\infty_{\rm{c}} ( I ) .\end{equation}
Roughly speaking, this expansion follows from the expansion of the Riemann
sum given by \eqref{eq:rhoBh} -- see \cite{HS2}. Here the proof follows
\cite[Chapter 8]{D-S}. 

Many physical quantities are computed using DOS, in particular  grand-canonical potentials and magnetizations at temperature $ T = 1/\beta$
localized to an energy interval using a function $ \eta $:
\begin{equation}
\label{eq:grand}
\begin{gathered}
\Omega_\beta ( \mu , h ) := \rho_B ( \eta ( \bullet ) f_\beta ( \mu - \bullet) ) , \ \  f_\beta ( x )  := - \beta^{-1} \log ( e^{ \beta x } +  1) , \\
M_{\beta}(\mu, h): = - \left\lvert b_1 \wedge b_2 \right\rvert \frac{\partial }{\partial h} \Omega_{\beta}(\mu, h).
\end{gathered}
\end{equation}
For $ \beta = \infty $ we take $ f_\infty = x_+ $ which is a Lipschitz
function. Hence \eqref{eq:rhoBh} applies and for $ \beta > h^{-M_0 }$
one could also obtain expansions for $ M_\beta $ -- see \S \ref{dHvA}. 
We take a simpler approach and calculate a semiclassical approximation,
$ m_\infty ( \mu, h ) $, to 
magnetization -- see \eqref{eq:formal} and compare it to (almost) exact 
spectral calculations -- see \S\S \ref{spectral},\ref{compa}. The agreement 
between $ M_\infty $ computed spectrally and the approximation $ m_\infty $
is remarkable already at fairly high values of the magnetic field. In Theorem \ref{t:sawtooth} we derive a simple ``sawtooth" approximation for
$ m_\infty $ confirming approximations seen in the physics literature
\cite{sgb},\cite{CM01}:
\begin{equation}
\label{eq:sawtooth1}
\begin{gathered}
m_\infty ( \mu, h )  
= \frac{1}{\pi} \sigma \left( \frac{g ( \mu)} h\right) 
\frac{ g ( \mu) }{ g' ( \mu ) } + \mathcal O ( h^{\frac12} ), 
\\ \sigma (y ) :=   y - [y] - \tfrac12 .
\end{gathered}
\end{equation}
The function $ g $ comes from the dispersion relation for the quantum graph model of graphene \cite{KP} (see \S \ref{s:dir}):
\[  g ( \mu ) :=  \frac{1}{4 \pi} \int_{ \gamma_{ \Delta (\mu)^2 } } \xi  d x , \ \ 
\gamma_\omega = \left\{ ( x, \xi) \in 
\RR^2/ 2 \pi \ZZ^2 : \frac{| 1 + e^{ix } + e^{i\xi} |^2}9 = \omega \right\}, \]
where $ \Delta ( \mu ) $ is the Floquet discriminant of the potential on the edges (and is equal to $ \cos \sqrt \lambda $ for the zero potential). 
The Dirac energy, $ z_D $, for a given band is determined by $ z_D = \Delta|_{ B_k }^{-1} ( 0 ) $.

\smallsection{Notation}
We write $ f_\alpha =                                                                  
\mathcal O_\alpha ( g )_H $ for $ \| f \|_H \leq C_\alpha g $, 
that is we have a bound with constants depending on $ \alpha $. In particular,
$ f = \mathcal O ( h^\infty )_H  $ means that for any $ N $ there exists
$ C_N $ such that $ \|f \|_H \leq C_N h^N $. 
We denote $\langle x\rangle:=\sqrt{1+|x|^2}$. 


\smallsection{Acknowledgements} 
We gratefully acknowledge support by
the UK Engineering and Physical Sciences Research Council (EPSRC) grant EP/L016516/1 for the University of Cambridge Centre for Doctoral Training, the Cambridge Centre for Analysis (SB),  
by the National Science Foundation under the grant
DMS-1500852 and by the Simons Foundation (MZ). We would also like to thank
Nicolas Burq for useful discussions, Semyon Dyatlov for help with MATLAB coding and insightful comments and Hari Manoharan
for introducing us to molecular graphene and for allowing us to use
Figures \ref{fig:hari} and \ref{Fig:DOS}(B).

\section{Hexagonal quantum graphs}

Quantum graphs provide a simple model for a graphene-like structure 
in which many features can be rigorously derived with minimal technical effort. Hence we consider a hexagonal graph,  $ \Lambda $, with Schr\"odinger operators defined on each edge \cite{KP}.
The graph $\Lambda$ is obtained by translating its fundamental cell $W_{\Lambda}$, consisting of vertices
\begin{equation}
\label{eq:r0r}
r_0:=(0,0) , \ \ \  r_1:=\left(\tfrac{1}{2},\tfrac{\sqrt{3}}{2}\right)
\end{equation}
and edges
\begin{equation}
\label{eq:fgh}
\begin{split}
f & :=\operatorname{conv}\left(\left\{r_0,r_1\right\}\right) \ \backslash \ \left\{r_0,r_1\right\},  \\ 
g & :=\operatorname{conv}\left(\left\{r_0,\left(-1,0\right)\right\}\right) \ \backslash \left\{r_0,\left(-1,0\right)\right\} , \\
h& :=\operatorname{conv}\left(\left\{r_0,-r_1\right\}\right) \ \backslash \ \left\{r_0,-r_1\right\},
\end{split}
\end{equation}
along the basis vectors of the lattice.
The basis vectors are
\begin{align}
b_1:= \left(\tfrac{3}{2}, \tfrac{\sqrt{3}}{2} \right) \  \text{ and } \ 
b_2:= \left(0,\sqrt{3}\right)
\end{align}
and so the hexagonal graph $\Lambda \subset \mathbb{R}^2$ is given by the range of a $\mathbb{Z}^2$-action on the fundamental domain $W_{\Lambda}$
\begin{equation}
\label{lattice}
\Lambda:= \left\{ x \in \mathbb{R}^2; x = \gamma_1b_1+\gamma_2b_2+[x] \text{ for } \gamma \in \mathbb{Z}^2 \text{ and } [x] \in W_{\Lambda} \right\}.
\end{equation}
\begin{figure}
\includegraphics[height=7cm]{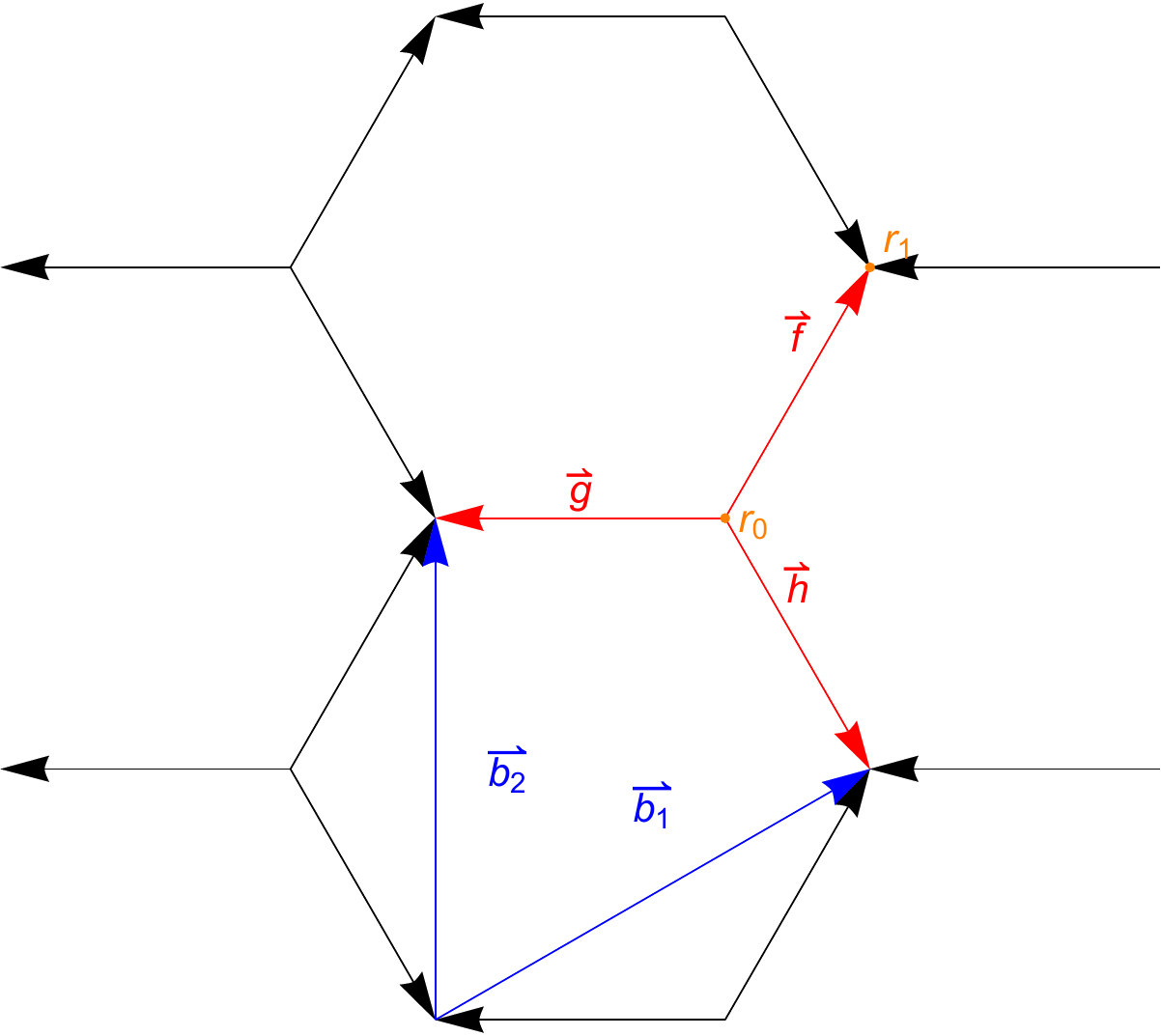} 
\caption{The fundamental cell and lattice basis vectors of $\Lambda$. \label{fig:1}}
\end{figure}
The set of edges of $\Lambda$ is denoted by $
\mathcal E = \mathcal{E}(\Lambda),$ the set of vertices by $ \mathcal V = \mathcal{V}(\Lambda)$ and the set of edges adjacent to a given vertex $v \in \mathcal{V}$ by $\mathcal{E}_v.$  We drop $ \Lambda $ in the notation if no 
confusion is likely to arise. 

When we say that $ u \in C ( \Lambda ) $ 
we mean that $ u $ is a continuous function on $ \Lambda $, a closed subset of  $\RR^2 $ -- see \eqref{lattice}.


For any edge $e \in \mathcal{E} $ we denote by $[e] \in \mathcal{E}(W_{\Lambda})$ the unique edge (thought of as a vector in $ \RR^2 $) for which there is $\gamma \in \mathbb{Z}^2$ such that
$ e=\gamma_1b_1+\gamma_2b_2+[e]$.
We impose a global orientation on the graph by orienting the edges in terms of {\em initial} and {\em terminal} maps
\begin{equation*}
i :\mathcal{E} \rightarrow \mathcal{V},  \ \ \ 
t :\mathcal{E} \rightarrow \mathcal{V}
\end{equation*}
where $i$ and $t$ map edges to their initial and terminal ends.
It suffices to specify the orientation on the fundamental domain
\[ 
i(f) =i(g)=i(h)=r_0 , \ \
t(f) =r_1, \ t(g)=r_1-b_1, \ \ t(h)=r_1-b_2.
\] 
For arbitrary $e \in \mathcal{E} $, we then extend those maps by
\begin{align}
i(e):=\gamma_1b_1+\gamma_2b_2+i([e]) \text{ and }
t(e):=\gamma_1b_1+\gamma_2b_2+t([e]).
\end{align}

In the case of our special graph with orientations showed in Figure~\ref{fig:1} a given vertex is either initial or terminal and hence we wrote
\begin{equation}
\label{eq:defV} 
\mathcal V = \mathcal V^i \sqcup \mathcal V^t, \ \ 
\mathcal V^\bullet := \{ v : \text{ $ v = \bullet ( e) $ for some $ e \in \mathcal E$} \}, \ \ \bullet = i,t .
\end{equation}

The fundamental domain of the dual lattice can be identified, because the lattice is spanned by a $\mathbb{Z}^2$-action, with the dual $2$-torus
\begin{equation}
\label{eq:dualT}
\mathbb{T}^2_*:= \mathbb{R}^2 / (2 \pi \mathbb{Z})^2.
\end{equation}
We assume every edge $e \in \mathcal{E} $ is of length one and has a standard chart
\begin{equation}
\label{chart}
\kappa_e: e \rightarrow (0,1), \ \ \ \kappa_e (ti(e)+(1-t)t(e)) = t.
\end{equation}  
Thus, for $n \in \mathbb{N}_0$, the Sobolev space $H^n(\mathcal{E})$ on $\Lambda$ is given by the Hilbert space direct sum
\begin{equation}
H^n(\mathcal{E}):= \bigoplus_{e \in \mathcal{E}} H^n(e).
\end{equation}
On edges $e \in \mathcal{E}$ we define the maximal Schr\"odinger operator 
\begin{equation}
\label{maximaloperator}
H_e: H^2(e) \subset L^2(e) \rightarrow L^2(e) , \ \ \
H_e \psi_e := -\psi''_e+V \psi_e
\end{equation}
with potential $V  \in L^2 ((0,1)) \simeq L^2 ( e ) $ which is the same on every edge and even with respect to the edge's centre. 

\subsection{Relation to Hill operators}
Using the potential introduced in the previous section, we define the $\mathbb{Z}$-periodic Hill potential $V_{\operatorname{per}} \in L^2_{\text{loc}}(\mathbb{R})$
\begin{equation}
\label{Hillpotential}
V_{\text{per}}(x + n ):=V (x ), \ \ n \in \ZZ, \ x \in ( 0 , 1 ) . 
\end{equation}
Next, we study the associated self-adjoint Hill operator on the real line
\begin{equation*}
H_{\text{per}} : H^2(\mathbb{R}) \subset L^2(\mathbb{R}) \rightarrow L^2(\mathbb{R})  \ \ 
H_{\text{per}} \psi:=-\psi''+ V \psi.
\end{equation*}
There are always two linearly independent solutions $c_{\lambda},s_{\lambda} \in H^2_{\text{loc}}(\mathbb{R})$ to $
H_{\text{per}}\psi= \lambda \psi
$
satisfying 
\begin{equation}
\label{eq:csla}
\text{ $ c_{\lambda}(0) =1, \  c_{\lambda}'(0) =0 \ $ and $ \ s_{\lambda}(0)=0, \  s_{\lambda}'(0)=1$.}
\end{equation}
Consider the Dirichlet operator on $(0,1)$
\begin{equation*}
\Lambda^D_{(0,1)}: H_0^1(0,1) \cap H^2(0,1) \subset L^2(0,1) \rightarrow L^2(0,1) \ \ \Lambda^D_{(0,1)}\psi=-\psi''+V_{\text{per}}\psi.
\end{equation*}
Any function $\psi_{\lambda} \in H^2(0,1)$ satisfying $-\psi_{\lambda}'' +V_{\text{per}} \psi_{\lambda} = \lambda \psi_{\lambda}$ with  $\lambda \notin \operatorname{Spec}(\Lambda^D_{(0,1)} )$ (that is with
 $s_{\lambda}(1) \neq 0$) can be written as a linear combination of $s_{\lambda}, c_{\lambda}$:
\begin{equation}
\label{representsol}
\psi_{\lambda} (t) = \frac{\psi_{\lambda}(1)-\psi_{\lambda}(0)c_{\lambda}(1)}{s_{\lambda}(1)}s_{\lambda}(t)+\psi_{\lambda}(0) c_{\lambda}(t).
\end{equation}

For $\lambda \notin \operatorname{Spec}(\Lambda^D_{(0,1)} )$, we define the Dirichlet-to-Neumann map
\begin{equation}
\label{DtN}
m(\lambda) := \frac{1}{s_{\lambda}(1)} \left( \begin{matrix} -c_{\lambda}(1) & 1 \\ 1 & -s'_{\lambda}(1)  \end{matrix} \right), \ \ 
\left( \begin{matrix} \ \ \psi_{\lambda}'(0) \\ -\psi_{\lambda}'(1) \end{matrix} \right) = m(\lambda) \left( \begin{matrix} \psi_{\lambda}(0) \\ \psi_{\lambda}(1) \end{matrix} \right).
\end{equation}
\begin{rem}
Since $V_{\rm{per}}$ is assumed to be symmetric with respect to $ \frac{1}{2}$ on the interval $(0,1)$,  $c_{\lambda}(1)=s'_{\lambda}(1)$. The Dirichlet eigenfunctions are consequently either even or odd with respect to $\frac{1}{2}$.
\end{rem}

The monodromy matrix associated with $H_{\text{per}}$ is the matrix valued entire function of $\lambda$:
\begin{equation*}
\mathcal M (\lambda):= \left( 
\begin{matrix}
  c_{\lambda}(1) & s_{\lambda}(1)  \\
  c_{\lambda}'(1) & s_{\lambda}'(1)  
 \end{matrix}\right).
 \end{equation*}
Its normalized trace 
\begin{equation}
\label{eq:Floq}
\Delta({\lambda}):= \frac{\operatorname{tr}(\mathcal M(\lambda))}{2}
\end{equation}
is called the Floquet discriminant. 
In the case when $ V \equiv 0 $ we have 
\begin{equation}
\label{eq:Floqf}
\Delta ( \lambda ) = \cos \sqrt \lambda ,
\end{equation}
and this will serve as an example throughout the paper.

The spectrum of the Hill operator, $ H_{\rm{per}} $ is purely absolutely continuous spectrum and is given by 
\begin{equation}
\label{Hillbands}
\begin{gathered}
\operatorname{Spec}(H_{\text{per}})= \left\{\lambda \in \mathbb{R} \, : \, \left\lvert \Delta(\lambda) \right\rvert \le 1 \right\}=\bigcup_{n=1}^{\infty} B_n \\
B_n:=[\alpha_n,\beta_n], \ \ \beta_n \le \alpha_{n+1}, \ \ 
\Delta'\vert_{\operatorname{int}(B_n)}\neq 0 , 
\end{gathered}
\end{equation}
see \cite[\S XIII]{RS4}.

\section{Magnetic Hamiltonians on quantum graphs}

The vector potential $\textbf{A} $ is a one form on $ \RR^2 $ and the magnetic field is given by $ \textbf{B} = d \textbf{A} $. For a homogeneous magnetic field
\begin{equation}
\textbf{B} : =B \ dx_1 \wedge dx_2
\end{equation}
we can choose a symmetric gauge, that is $ \mathbf A $ given as follows:
\begin{equation}
\textbf{B} = d\textbf{A} , \ \ \ \textbf{A}=\tfrac12 {B} \left(-x_2 \, dx_1 +x_1 \, dx_2 \right).
\end{equation}

The scalar vector potential $A_e \in C^{\infty}(e)$ along edges $e \in \mathcal{E}$ is obtained by evaluating the form on the graph along the vector field generated by the respective edge $[e]$:
\begin{equation}
\begin{split}
A_{e}(t)&:= \textbf{A}\left(i(e)+t[e] \right)\left([e]_1 \partial_1 + [e]_2 \partial_2 \right) \\
&=\textbf{A}\left(i(e)\right)\left([e]_1 \partial_1 + [e]_2 \partial_2 \right)+ \underbrace{tA([e])\left([e]_1 \partial_1 + [e]_2 \partial_2 \right)}_{=0} \\
&=\textbf{A}\left(i(e)\right)\left([e]_1 \partial_1 + [e]_2 \partial_2 \right)
\end{split}
\label{eq:Aet}
\end{equation}
which is constant along any single edge.

In terms of the magnetic differential operator $(D^B \psi)_e:= - i \psi_e' - A_{e}\psi_e $, the \newline
Schr\"odinger operator modeling graphene in a magnetic field becomes
\begin{equation}
\label{magop}
H^B :D(H^B) \subset L^2(\mathcal{E}) \rightarrow L^2(\mathcal{E}) , \ \ (H^B\psi)_{{e}}:=(D^B D^B \psi)_{{e}}+V\psi_{e},
\end{equation}
where $ D(H^B) $ is defined as the set of $ \psi \in {H}^2(\mathcal{E}) $ satisfying
\begin{equation*}
\begin{split}
&  
\psi_{e_1} ( v ) = \psi_{e_2} ( v) , \  {e}_1,{e}_2 \in \mathcal{E}_{v} ,  \ \ \ 
\sum_{{e} \in \mathcal{E}_{v}}  \left(D^B\psi\right)_{{e}}(v) 
 = 0 .
\end{split}
\end{equation*}
 \begin{rem}
The Hamiltonian $H^B$ for any magnetic field with constant flux per hexagon is unitarily equivalent to the setting of a constant magnetic field with the same flux per hexagon. 
\end{rem}

The unitary Peierls' substitution is the multiplication operator
\begin{equation}
\label{Peierl}
P:  L^2(\mathcal{E}) \rightarrow L^2(\mathcal{E}) , \ \
\psi_e ( t )  \mapsto e^{i A_{e} t }  \psi_e ( t) , \ \ 
t \in ( 0,1 ) . 
\end{equation} 
The operator $ P $
transforms $H^B$ into
\begin{equation}
\label{LambdaB}
\Lambda^B:=P^{-1} H^B P ,  \ \ 
(\Lambda^B \psi )_ e = - \psi_e'' +V \psi_{e} .
\end{equation}
The domain of $\Lambda^B$ consists of $ \psi \in H^2(\mathcal{E})$
such that, in the notation of \eqref{eq:defV},
\begin{equation*}
\begin{split} 
&  v \in 
 \mathcal V^i \ \Longrightarrow \  \psi_{e_1}(v)=\psi_{e_2}(v) , 
 \  \ e_1, e_2 \in \mathcal E_v, \ \ \ \sum_{e \in \mathcal{E}_{v}} \psi'_e(v) =0, \\
&  v \in 
 \mathcal V^t \ \Longrightarrow \  e^{  iA_{e_1}}\psi_{e_1}(v)=e^{ iA_{e_2} }\psi_{e_2}(v), \  \ e_1, e_2 \in \mathcal E_v, 
 \ \ \ \sum_{e \in \mathcal{E}_{v}} e^{ iA_{e}}\psi'_e(v) =0.  
\end{split}
\end{equation*}

Thus, the problem reduces to the study of non-magnetic Schr\"odinger operators with the magnetic field moved into the boundary conditions.
We note that the magnetic Dirichlet operator, 
\begin{equation}
\label{magopd}
H^{D} :\bigoplus_{e \in \mathcal{E}(\Lambda)}\left(H_0^1(e)\cap H^2(e) \right)  \rightarrow L^2(\mathcal{E}) , \ \ 
(H^{D} \psi)_e:=(D^B D^B \psi)_e+V_e\psi_e, 
\end{equation}
is (using Peierls' substitution \eqref{Peierl}) unitarily equivalent to the Dirichlet operator without magnetic field
\begin{equation}
\label{LambdaD}
\Lambda^D:= \bigoplus_{e \in \mathcal{E}(\Lambda)} \Lambda^{D}_e = P^{-1}H^{D}P, 
\end{equation}
where $ \Lambda^D_e $ is the Dirichlet realization of $ - \partial_t^2 + V_e $ on $ e $. 
Thus, the spectrum of the Dirichlet operator does not change under magnetic perturbations.

\subsection{Effective Hamiltonian}

We now follow  Pankrashin \cite{P} and Br\"uning--Geyler--Pankrashin\cite{BGP} and use the Krein resolvent formula to reduce the operator $\Lambda^B$ into a term containing only parts of the Dirichlet spectrum and an effective operator that will be further investigated afterwards. We will find that the contribution of Dirichlet eigenvalues to the spectrum of $H^B$ is fully explicit and thus we will be left with an effective operator which will be used to describe the density of states.

We define 
\begin{equation*}
\textbf{H}:D(\textbf{H}) \subset L^2(\mathcal{E}) \rightarrow L^2(\mathcal{E}) , \ \ \ 
(\textbf{H}\psi)_e:=(D^B  D^B \psi)_e+V_e\psi_e
\end{equation*}
where $ D ( \textbf{H} ) $ consists of $ \psi \in H^2(\mathcal{E}) $ satifying  (using notation of \eqref{eq:defV})
\begin{equation}
\label{eq:domH}
\begin{split}
&v \in \mathcal V^i  \ \Longrightarrow \ 
\psi_{e_1}(v)=\psi_{e_2}(v) , \  \ e_1, e_2 \in \mathcal E_v,  \\ 
& v \in \mathcal V^t \ \Longrightarrow \   e^{  iA_{e_1}}\psi_{e_1}(t(e_1))=e^{ iA_{e_2} }\psi_{e_2}(t(e_2)) , \ \ \ e_1, e_2 \in \mathcal E_v.
\end{split}
\end{equation}
With this domain $\textbf{H} $ is a closed operator.

Then, we consider the map $\pi:D(\textbf{H})\rightarrow \ell^2(\mathcal{V})$ defined by
\begin{equation}
\label{eq:defpi}
\pi(\psi)(v):= \left\{ \begin{array}{ll}
\ \  \psi_{e}(v), &  v \in \mathcal V^i , \  \  e \in \mathcal E_v, \\
 e^{ iA_{e}} \psi_{e}(v), & v \in \mathcal V^t,  \  \  e \in \mathcal E_v . 
\end{array} \right.
\end{equation}
The operator $ \pi $ is well defined because of \eqref{eq:domH} and 
is an isomorphism from $\operatorname{ker}(\textbf{H} - \lambda)$ onto $\ell^2(\mathcal{V})$ for any $\lambda \notin \Spec (\Lambda^D)$.
This leads to the definition of the gamma-field 
\begin{equation}
\label{gfield}
\gamma: \complement \Spec(\Lambda^D) \rightarrow \mathcal L\left(\ell^2(\mathcal{V}),D(\textbf{H})\right) , \ \ \ 
\gamma(\lambda):= \left(\pi \vert_{\operatorname{ker}(\textbf{H} - \lambda )} \right)^{-1}.
\end{equation} 
In the notation of \eqref{eq:csla},  the gamma-field is given by 
\begin{equation}
\label{eq:defga}
(\gamma(\lambda)z)_{e}(t) = \frac{\left(s_{\lambda}(1)c_{\lambda}(t)-s_{\lambda}(t)c_{\lambda}(1)\right)z(i(e))+e^{-iA_e}z(t(e))s_{\lambda}(t)}{s_{\lambda}(1)}.
\end{equation}
Using this we can then state Krein's formula from \cite{P} and \cite{BGP}. For that 
we define
\begin{equation}
\label{eq:defM}
M(\lambda) :=  {s_{\lambda}(1)}^{-1} ( {K_{\Lambda}-\Delta(\lambda)}) 
\end{equation}
where 
\begin{equation}
\label{discreter}
(K_\Lambda z)(v):=\tfrac{1}{3} \left(\sum_{e \in \mathcal{E}, i(e)=v } e^{-i  A_e} z(t(e)) +  \sum_{e \in \mathcal{E}, t(e)=v } e^{iA_e} z(i(e)) \right)
\end{equation} 
defines an operator on $\ell^2(\mathcal{V})$ with $\left\lVert K_{\Lambda} \right\rVert \le 1.$

\begin{prop}[Krein's resolvent formula]
\label{KRF}
Let $ \Lambda^B $ and $ \Lambda^D $ be given by \eqref{LambdaB} and 
\eqref{LambdaD} respectively. 
For $\lambda \notin \Spec(\Lambda^D)\cup \Spec(\Lambda^B)$ the operator $ M(\lambda)$ is invertible and satisfies
\begin{equation}
\label{Kreinresolvform}
(\Lambda^B-\lambda)^{-1} = (\Lambda^D-\lambda)^{-1}  - \gamma(\lambda) M ( \lambda)^{-1} \gamma(\overline{\lambda})^*,
\end{equation}
where $ M (\lambda ) $ is given by \eqref{eq:defM}.
\end{prop}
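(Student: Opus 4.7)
The plan is to follow the standard boundary-triple/Krein strategy: for $ f \in L^2(\mathcal E)$, decompose $ u := (\Lambda^B - \lambda)^{-1} f $ into a Dirichlet part and a part in $ \ker(\mathbf H - \lambda )$, then use the Kirchhoff-type boundary conditions of $ \Lambda^B $ to determine the latter. First I would set $ u_D := (\Lambda^D - \lambda )^{-1} f $; because $ u_D $ vanishes at every vertex, the difference $ w := u - u_D $ automatically satisfies the continuity/gauge conditions \eqref{eq:domH} defining $ D(\mathbf H ) $, and it solves $(H_e - \lambda) w = 0 $ on each edge. Hence $ w \in \ker( \mathbf H - \lambda ) $ and $ w = \gamma(\lambda) z $ for $ z := \pi(w) = \pi(u) \in \ell^2 (\mathcal V )$.

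Next I would encode the Kirchhoff conditions of $ \Lambda^B $ as an equation on $ z $. Substituting $ u = u_D + \gamma(\lambda) z $ into $ \sum_{e \in \mathcal E_v} u_e'(v) = 0 $ at $ v \in \mathcal V^i $ and $ \sum_{e \in \mathcal E_v} e^{iA_e} u_e'(v) = 0 $ at $ v \in \mathcal V^t $, I would compute the vertex derivatives of $ \gamma(\lambda) z $ directly from \eqref{eq:defga} using $ c_\lambda(0) = s_\lambda'(0)=1 $, $ c_\lambda'(0)=s_\lambda(0)=0 $, the Wronskian identity $ c_\lambda(1) s_\lambda'(1) - c_\lambda'(1) s_\lambda(1) = 1 $, and the symmetry $ c_\lambda(1) = s_\lambda'(1) = \Delta(\lambda) $. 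After grouping the $ z(t(e)) $ and $ z(i(e)) $ terms via the definition \eqref{discreter} of $ K_\Lambda $, the vertex contribution of $ \gamma(\lambda) z $ reduces, up to a universal constant, to $ s_\lambda(1)^{-1}(K_\Lambda - \Delta(\lambda)) z(v) = M(\lambda) z(v) $.

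The remaining task is to identify the contribution $ \sum_{e \in \mathcal E_v}(u_D)_e'(v) $ (with the appropriate phase at terminal vertices) as $ (\gamma(\bar\lambda)^* f)(v) $. For this I would pair $ \gamma(\lambda) z $ against an arbitrary $ f \in L^2(\mathcal E) $, write $ f = (\Lambda^D - \bar\lambda)(\Lambda^D - \bar\lambda)^{-1} f $, and integrate by parts edge by edge: the bulk terms cancel because $ (-\partial^2 + V - \lambda )(\gamma(\lambda)z)_e = 0 $, while the boundary terms survive only at the ends where $ \gamma(\lambda) z $ does not vanish. Using $ (\gamma(\lambda)z)_e(0)=z(i(e)) $ and $ (\gamma(\lambda)z)_e(1) = e^{-iA_e} z(t(e)) $, and regrouping the sum over edges into a sum over vertices, yields the explicit formula for $ \gamma(\bar\lambda)^* f $ in terms of $ \partial_\nu (\Lambda^D - \lambda)^{-1} f $ at the vertices (with phase $ e^{iA_e} $ at terminal ones). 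This turns the Kirchhoff system into $ M(\lambda) z = -\gamma(\bar\lambda)^* f $, giving \eqref{Kreinresolvform} once invertibility of $ M(\lambda) $ is in hand.

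The main obstacle is the invertibility of $ M(\lambda) $ for $ \lambda \notin \Spec(\Lambda^D) \cup \Spec(\Lambda^B) $. Solvability of the above equation for every $ f $ yields surjectivity; for injectivity I would observe that if $ M(\lambda) z_0 = 0 $ with $ z_0 \neq 0 $, then $ \gamma(\lambda) z_0 \in \ker( \mathbf H - \lambda ) $ is nontrivial (since $ \pi \circ \gamma(\lambda) = \mathrm{id} $), and the vanishing Kirchhoff conditions derived above show $ \gamma(\lambda) z_0 \in D(\Lambda^B) $, contradicting $ \lambda \notin \Spec(\Lambda^B) $. Care with constants (the $ \tfrac13 $ in \eqref{discreter}, the degree-$3$ regularity of $ \Lambda $, and the sign coming from the Wronskian) will need to be tracked throughout, but no further analytic input beyond Green's identity on each edge is required.
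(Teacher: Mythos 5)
Your strategy is correct, and it is worth noting at the outset that the paper offers no proof of Proposition \ref{KRF} at all: the formula is quoted from Pankrashkin \cite{P} and Br\"uning--Geyler--Pankrashkin \cite{BGP}, and your argument is precisely the boundary-triple derivation used in those references (decompose $(\Lambda^B-\lambda)^{-1}f$ into the Dirichlet solution plus an element of $\ker(\mathbf{H}-\lambda)$, convert the Kirchhoff conditions into a linear equation for the vertex data, and identify the inhomogeneity via Green's identity). Two points should be made explicit when you write it out. First, in the invertibility step, solvability of $M(\lambda)z=-c\,\gamma(\bar\lambda)^*f$ for every $f$ gives surjectivity of $M(\lambda)$ only because $\gamma(\bar\lambda)^*$ is itself surjective; this follows from the statement after \eqref{gfield} that $\pi|_{\ker(\mathbf{H}-\bar\lambda)}$ is an isomorphism onto $\ell^2(\mathcal V)$, so that $\gamma(\bar\lambda)$ is bounded below and its adjoint is onto, and boundedness of $M(\lambda)^{-1}$ then comes from the open mapping theorem. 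Second, the constant you flag is genuinely there: since the graph is $3$-regular and $K_\Lambda$ in \eqref{discreter} carries the factor $\tfrac13$, the Kirchhoff sum of derivatives of $\gamma(\lambda)z$ at a vertex computed from \eqref{eq:defga} and \eqref{DtN} equals $3M(\lambda)z(v)$ rather than $M(\lambda)z(v)$, so your computation yields $(\Lambda^B-\lambda)^{-1}=(\Lambda^D-\lambda)^{-1}-\gamma(\lambda)\bigl(3M(\lambda)\bigr)^{-1}\gamma(\bar\lambda)^*$. This normalization discrepancy with \eqref{Kreinresolvform} as printed is immaterial for everything the paper uses the formula for (the set where $0\in\Spec(M(\lambda))$ and the logarithmic derivative $\partial_z M\,M^{-1}$ are both invariant under $M\mapsto 3M$), but your proof should record which normalization it actually produces.
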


As a consequence of \eqref{Kreinresolvform} we see that 
\[ \operatorname{Spec}(\Lambda^B)\backslash \operatorname{Spec}(\Lambda^D) = \{\lambda \in \complement \Spec(\Lambda^D); 0 \in \operatorname{Spec}(M(\lambda))\}.\] 
If $\lambda \notin \Spec(\Lambda^D)$  it follows that $\gamma(\lambda)\operatorname{ker}(M(\lambda)) = \operatorname{ker}(\Lambda^B-\lambda).$ This implies that both null-spaces are of equal dimension.

\begin{rem}
The general theory of spectral triples gives the following formula for the 
derivative for $ M $, 
\begin{equation}
\label{eq:derM}  \partial_\lambda M ( \lambda ) =  \gamma( \bar \lambda)^* \gamma ( \lambda )  ,
\end{equation}
see  \cite[Proposition 14.5]{Sch}. This will be important later.
\end{rem}

\subsection{Magnetic translations}

The magnetic Schr\"odinger operator $H^B$ does not  commute with standard lattice translation operators 
\begin{equation}
\label{stl}
T_{\gamma} \psi ( x ) :=\psi(x -\gamma_1 b_1- \gamma_2 b_2).
\end{equation}
It does however commute with modified translations which do not commute with each other in general.
Those magnetic translations $T^{B}_{\gamma} : L^2(\mathcal{E}) \rightarrow L^2(\mathcal{E})$ are unitary operators defined by
\begin{equation}
\label{MagTra}
T^{B}_{\gamma} \psi := u^B(\gamma)T _{\gamma}\psi, 
\ \ \ \psi=(\psi_e)_{e \in \mathcal{E}} \in 
L^2(\mathcal{E}), \ \ \gamma \in \mathbb{Z}^2,
\end{equation}
To define $u^B$ we first consider it as $ 
u^B :\ZZ^2\to C ( W_{\Lambda}) $  where 
$ W_\Lambda $ is the fundamental domain defined in \eqref{eq:r0r} and 
\eqref{eq:fgh}:
\begin{equation}
\begin{gathered}
u^B(\gamma)_e(s\, i(e)+(1-s)\, t(e)):=e^{i \alpha_e(\gamma) s}, \ \ 
e \in W_\Lambda, \\ 
\alpha_e(\gamma):= A(\gamma_1 b_1+\gamma_2 b_2)([e]_1\partial_1+[e]_2 \partial_2) , \\
u^B(\gamma)(r_0) :=1 , \ \  \ 
u^B(\gamma)\left(r_1\right) :=e^{i \alpha_f(\gamma)}.
\end{gathered}
\end{equation}
We then extend $ u^B $ to the graph using translations. 
Using \eqref{eq:Aet} we see that 
\begin{equation}
\label{Vertauscher}
\begin{split}
\alpha_f(\gamma) &= \frac{B}{2} \frac{\sqrt{3}}{2}\left(\gamma_1 -\gamma_2\right) =  \frac{h}{6}(\gamma_1-\gamma_2) , \\
\alpha_g(\gamma) & = \frac{B}{2} \frac{\sqrt{3}}{2} \left(\gamma_1 +2\gamma_2\right) =  \frac{h}{6}(\gamma_1+2\gamma_2) , \\
\alpha_h(\gamma ) & = -\frac{B}{2} \frac{\sqrt{3}}{2} \left(2\gamma_1 +\gamma_2\right) =  -\frac{h}{6}(2\gamma_1+\gamma_2)
\end{split}
\end{equation}
where 
\begin{equation}
\label{eq:defh} h:= \tfrac{3\sqrt{3}} 2 B = {B}{ | b_1 \wedge b_2 |} 
\end{equation} 
is the magnetic flux through one hexagon of the graph.
For any $\gamma,\delta \in \mathbb{Z}^2$
\begin{equation}
\label{eq:transl}
u^B(\gamma)_{[e] - \delta_{1}b_1-\delta_2b_2}:=
e^{i\frac{h \omega(\delta,\gamma)}{2}}  u^B(\gamma)_{[e]}
\end{equation}
where $\omega(\delta,\gamma):=\delta_{1}\gamma_2-\delta_2\gamma_1$ is the standard symplectic form on $\mathbb{R}^2.$
A computation shows that $T^B_\bullet $ satisfies the commutation relation
\begin{equation}
\label{commrelkl}
T_{\gamma}^BT_{\delta}^B = e^{i h \omega \left( \gamma,\delta \right)  }  T_{\delta}^BT_{\gamma}^B.
\end{equation}
It also follows that $T^B_{\gamma}\left(D(H^B)\right) = D(H^B)$, 
and that 
$ T_\gamma^B $ are unitary operators.

Since
\begin{equation}
\label{eq:noncom}
T_{\gamma}^B H^B = H^B T_{\gamma}^B.
\end{equation}
it follows that for every bounded measurable function $f: \mathbb{R} \rightarrow \mathbb{C}$
\begin{equation}
\label{functionalcomm}
T_{\gamma}^B f(H^B) = f(H^B) T_{\gamma}^B.
\end{equation}

  \subsection{Dirac points and band velocities}
  \label{s:dir}
It is well-known that the energy as a function of quasimomenta for graphene has two conical cusps at energies {\em Dirac energies}:  
\[ z_D:=\Delta\vert_{B_n}^{-1}(0) \ \ \text{ (we drop the index $ n$)}
\]
 Those cones (see Figure \ref{Fig:firstbands}) in the energy-quasimomentum representation are referred to as \emph{Dirac cones}. The name is derived from the linear energy-momentum relation for relativistic massless fermions the Dirac equation predicts. 
 \begin{figure}
\centerline{\includegraphics[height=10cm]{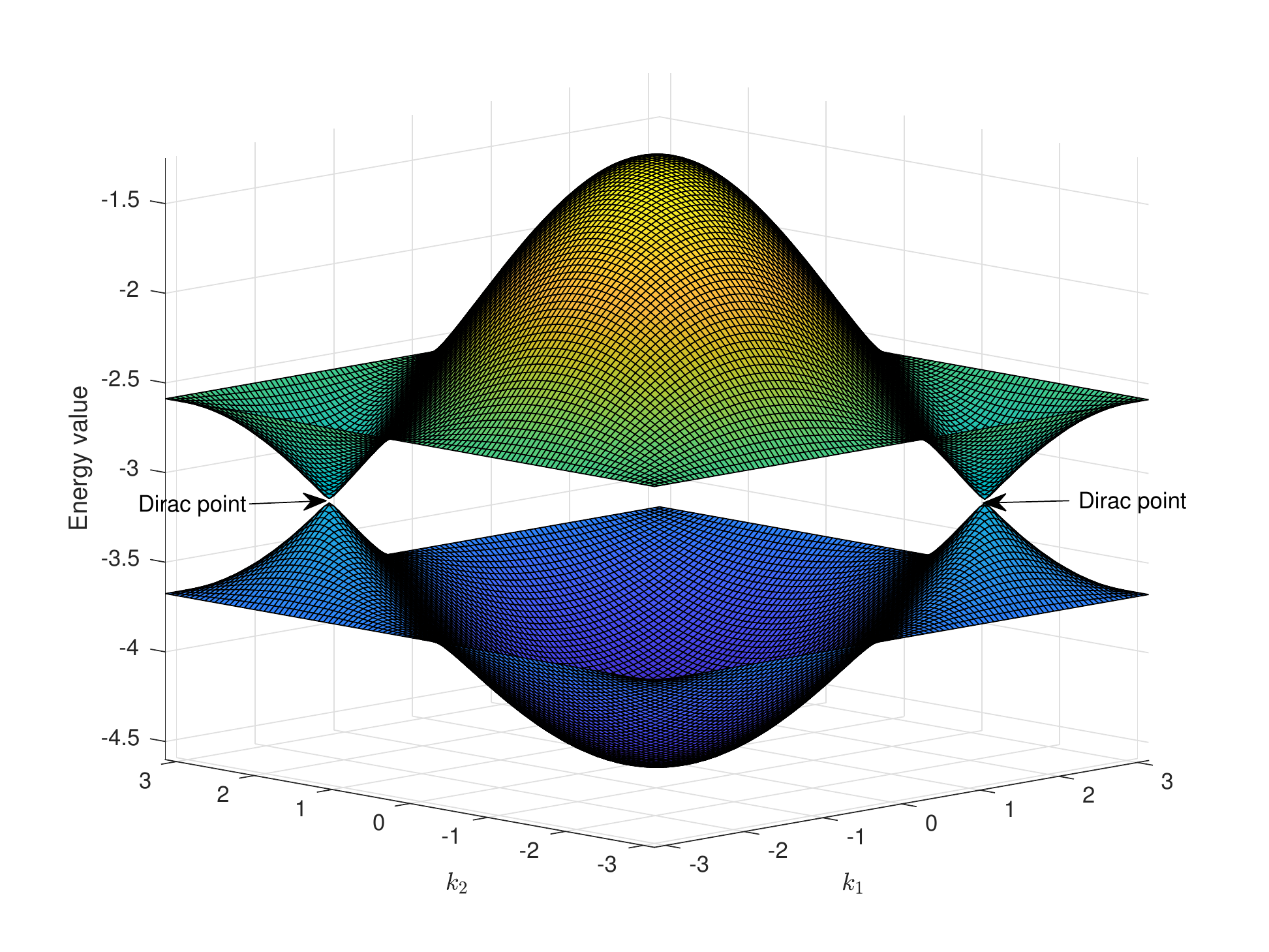}} 
\caption{The first two bands of the Schr\"odinger operator with a Mathieu potential without magnetic perturbation showing the characteristic conical Dirac points at energy level $\approx -\pi$ where the two bands touch.\label{Fig:firstbands}}
\end{figure}

The Hamiltonian $H^B$ with $B=0$ is translational invariant, that is, it commutes with translation operators $T_{\gamma}$ defined in  \eqref{stl}. Using standard Floquet-Bloch theory, one can then diagonalize the operator $H^{B=0}$ as in Kuchment--Post \cite{KP} to write the spectrum for quasimomenta $(k_1,k_2)\in \mathbb{T}^2_*$ (see \eqref{eq:dualT}) in terms of a two-valued function 
\begin{equation}
\label{lambda}
\mathbb{T}^2_* \ni k \mapsto \lambda^{\pm}\vert_{B_n}(k):=\Delta\vert_{B_n}^{-1} \left( \pm \frac{\left\lvert 1+e^{ik_1}+e^{ik_2} \right\rvert}{3}\right)
\end{equation}
 on every Hill band $B_n$ \eqref{Hillbands}.
Expanding $\lambda^{\pm}\vert_{B_n}$ in polar coordinates at the Dirac points $k=\pm \left(\frac{2\pi}{3},-\frac{2\pi}{3}\right)$ yields the linearized energy level sets above $(+)$ and below $(-)$ the conical point
\begin{equation}
\label{TaylorexDir}
\lambda^{\pm}\vert_{B_n}(r,\varphi):=z_D \pm \frac{\Delta\vert_{B_n}^{-1'}(0)}{3} \sqrt{1-\frac{\sin(2 \varphi)}{2} }\ r+ o(r)
\end{equation}
where $r$ is the distance from $k=\pm \left(\frac{2\pi}{3},-\frac{2\pi}{3}\right).$
\begin{defi}[Band velocities]
The Bloch state velocity associated with quasimomenta $(k_1,k_2) \neq \pm \left(\frac{2\pi}{3},-\frac{2\pi}{3}\right)$ is just
\begin{equation}
v^{\pm}\vert_{B_n}(k)
 = \nabla \lambda^{\pm}\vert_{B_n}(k)
\end{equation}
and is fully explicit using \eqref{lambda}. 
\end{defi}
 \begin{figure}
\centerline{\includegraphics[width=12cm]{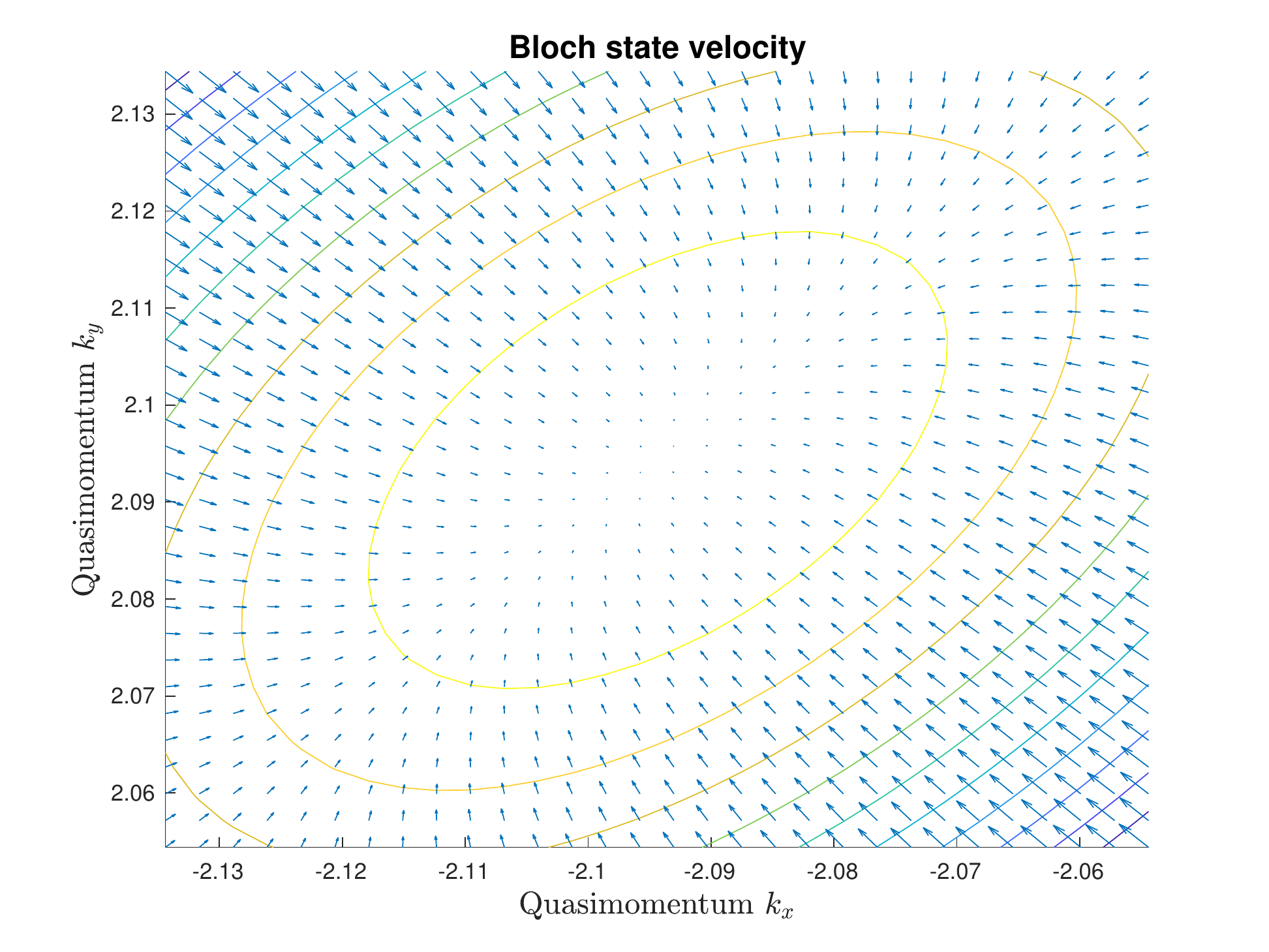}
} 
\caption{The Bloch state velocity of the upper cone near the Dirac point located at $(k_x,k_y)=\frac{2\pi}{3}(-1,1)$ for zero potential $V_e=0.$ In particular, the Bloch state velocity is not rotationally invariant. }
\label{BSV}
\end{figure}

\begin{rem}
The notion of a Fermi velocity in 
the physics literature corresponds to a Bloch state velocity at the conical points. From \eqref{TaylorexDir} and also Figure \ref{BSV} we see that such a limit (if taken in norm) would depend on the angle from which we approach the conical points. Thus, this quantity is not well-defined in this model. Likewise, there has been some controversy about the nature of this quantity in graphene \cite{S17}. See \eqref{eq:gc} for an approximation in our setting.
\end{rem}

\subsection{Different representations of the effective Hamiltonian}
\label{Representations2}
Since any vertex is an integer translate of either of the two vertices $r_0,r_1 \in W_{\Lambda}$ by basis vectors $b_1,b_2$, we indentify $\ell^2(\mathcal{V}) \simeq \ell^2(\mathbb{Z}^2;\mathbb{C}^2)$. Our next Lemma provides the equivalent form of $K_{\Lambda}$ \eqref{discreter} under this identification. 

\begin{lemm} 
\label{eq:unit}
The operator $K_{\Lambda}$ given by \eqref{discreter} 
is unitarily equivalent to an operator $Q_{\Lambda} \in \mathcal L (  \ell^2(\mathbb{Z}^2;\mathbb{C}^2))$ 
\begin{equation}
\label{Q-op}
Q_{\Lambda}  := \tfrac{1}{3} \left(\begin{matrix} 0 && 1+\tau^0+\tau^1 \\ \left(1 +\tau^0+\tau^1 \right)^* && 0 \end{matrix} \right)
\end{equation}
where $ \tau^0, \tau^1 \in \mathcal L ( \ell^2(\mathbb{Z}^2; \mathbb{C}))$
are defined by  
\begin{equation}
\label{translatop}
\begin{gathered}
\tau^0(r)(\gamma):= r(\gamma_1-1,\gamma_2) \nonumber \ \ \ 
\tau^1(r)(\gamma):= e^{i h \gamma_1}r(\gamma_1,\gamma_2-1),   \ \ 
\gamma \in \mathbb{Z}^2, \ \  r \in \ell^2(\mathbb{Z}^2;\mathbb{C}) 
\end{gathered}
\end{equation} 
and satisfy  the Weyl commutation relation
\begin{equation}
\label{CCRDT}
\tau^1 \tau^0  = e^{i h} \tau^0 \tau^1 .\end{equation}
\end{lemm}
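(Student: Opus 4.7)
The plan is to construct an explicit unitary $U: \ell^2(\mathcal V) \to \ell^2(\ZZ^2; \CC^2)$ implementing a discrete gauge transformation, and to verify directly that $U K_\Lambda U^{-1} = Q_\Lambda$; the Weyl relation \eqref{CCRDT} is then an immediate unpacking of \eqref{translatop}. To set this up, I would use the vertex splitting \eqref{eq:defV}: every $v \in \mathcal V$ is uniquely of the form $r_0(\gamma) := \gamma_1 b_1 + \gamma_2 b_2$ (type $i$) or $r_1(\gamma) := \gamma_1 b_1 + \gamma_2 b_2 + r_1$ (type $t$), which identifies $\ell^2(\mathcal V) \cong \ell^2(\ZZ^2;\CC^2)$. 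The three edges incident to $r_0(\gamma)$, namely the translates of $f,g,h$ by $\gamma$, terminate at $r_1(\gamma)$, $r_1(\gamma - e_1)$, $r_1(\gamma - e_2)$ and carry phases $A_f(\gamma)$, $A_g(\gamma)$, $A_h(\gamma)$ given by \eqref{Vertauscher}. With this, \eqref{discreter} unpacks $K_\Lambda$ into a concrete formula on $\ell^2(\ZZ^2;\CC^2)$.

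I would then seek $U$ of the form
\begin{equation*}
(Uz)(\gamma) := \bigl(e^{i\phi_0(\gamma)}\, z(r_0(\gamma)),\ e^{i\phi_1(\gamma)}\, z(r_1(\gamma))\bigr)^T,
\end{equation*}
and choose $\phi_0,\phi_1: \ZZ^2 \to \RR$ so that $U K_\Lambda U^{-1}$ takes the form \eqref{Q-op}. Matching coefficients produces three scalar conditions mod $2\pi$; the first, $\phi_1(\gamma) = \phi_0(\gamma) - A_f(\gamma)$, is algebraic, and substituting it reduces the remaining two to the discrete ``derivative'' equations
\begin{equation*}
\phi_0(\gamma) - \phi_0(\gamma - e_1) \equiv A_g(\gamma) - A_f(\gamma-e_1), \qquad \phi_0(\gamma) - \phi_0(\gamma - e_2) \equiv h\gamma_1 + A_h(\gamma) - A_f(\gamma - e_2),
\end{equation*}
whose right-hand sides are affine in $\gamma$. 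The only nontrivial point is solvability: the consistency of the two mixed discrete differences is precisely the statement that the magnetic flux through one hexagonal cell equals $h$ (compare \eqref{Vertauscher} and \eqref{eq:defh}). One can then take $\phi_0(\gamma) = \tfrac{h}{2}\gamma_1\gamma_2 + \tfrac{h}{6}(\gamma_1 - \gamma_2)$ and $\phi_1(\gamma) = \tfrac{h}{2}\gamma_1\gamma_2$.

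With this $U$ in hand, the two off-diagonal blocks of $U K_\Lambda U^{-1}$ are $\tfrac13(1 + \tau^0 + \tau^1)$ and its adjoint, giving \eqref{Q-op}. The Weyl relation \eqref{CCRDT} is a one-line computation on $r \in \ell^2(\ZZ^2)$: one has $\tau^1\tau^0 r(\gamma) = e^{i h \gamma_1} r(\gamma - e_1 - e_2)$ while $\tau^0\tau^1 r(\gamma) = e^{i h(\gamma_1 - 1)} r(\gamma - e_1 - e_2)$. The main obstacle in the argument is simply the careful bookkeeping of edge orientations and phases when unpacking $K_\Lambda$ from \eqref{discreter}; once the discrete Stokes-type consistency for $\phi_0$ is verified using \eqref{Vertauscher}, the rest of the proof is substitution.
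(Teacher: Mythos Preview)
Your proposal is correct and follows essentially the same approach as the paper. The paper also constructs a unitary as the composite of a multiplication operator (gauge transformation) on $\ell^2(\mathcal V)$ and the tautological identification $\ell^2(\mathcal V)\simeq\ell^2(\ZZ^2;\CC^2)$; the only cosmetic difference is that the paper defines the phase factors $\zeta_v$ recursively (spelling out $\zeta_{r_0(\gamma)}$ and $\zeta_{r_1(\gamma)}$ via three recurrences along $f$, $g$, $h$) rather than writing down your closed-form solution $\phi_0(\gamma)=\tfrac{h}{2}\gamma_1\gamma_2+\tfrac{h}{6}(\gamma_1-\gamma_2)$, $\phi_1(\gamma)=\tfrac{h}{2}\gamma_1\gamma_2$, and keeps the two unitaries separate rather than combining them into one.
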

\begin{proof}
The unitary operator eliminating the vector potential along two of the three non-equivalent edges is the multiplication operator
\begin{equation}
\label{mul1}
Uz:=\left(\zeta_{v} z ( v) \right)_{v \in \mathcal{V}(\Lambda)}
\end{equation} 
with recursively defined factors 
\begin{gather*}
\zeta_{r_0} :=1 , \ \ 
\zeta_{\gamma_1b_1+\gamma_2b_2+r_1}:=e^{i A_{\gamma_1b_1+\gamma_2b_2+f}} \zeta_{\gamma_1b_1+\gamma_2b_2+r_0}  \\
\zeta_{\gamma_1b_1+(\gamma_2+1)b_2+r_0}:=e^{i \left(-A_{\gamma_1b_1+(\gamma_2+1)b_2 + h}-h\gamma_1+A_{\gamma_1b_1+\gamma_2b_2 + f} \right)}\zeta_{\gamma_1b_1+\gamma_2b_2+r_0} \\
\zeta_{(\gamma_1+1)b_1+\gamma_2 b_2+r_0}:=e^{i \left(-A_{(\gamma_1+1)b_1+\gamma_2b_2 + g}+A_{\gamma_1b_1+\gamma_2b_2 + f} \right)}\zeta_{\gamma_1b_1+\gamma_2b_2+r_0}. 
\end{gather*}
Defining
$ K_{\Lambda}^{\#}:=U^{*}K_{\Lambda}U$ 
we see that 
\begin{equation}
\label{Reduced form}
K_{\Lambda}^{\#}(z)(v)= \frac{1}{3} \begin{cases}  z(v + g) +z(v+f) + e^{i h \gamma_1}z(v+h),\ v \in i\left(\mathcal{V}(\Lambda)\right) \\
z(v - g)+z(v-f) +e^{-ih \gamma_1}z(v-h),\ v \in t \left(\mathcal{V}(\Lambda)\right)
\end{cases}
\end{equation}
where $\gamma_1$ is such that $v= \gamma_1b_1+\gamma_2b_2+r_{0,1}.$
In order to transform $K_{\Lambda}^{\#}$ to $Q_{\Lambda}$ we use the unitary map $
W:\ell^2(\mathcal{V}(\Lambda)) \rightarrow \ell^2(\mathbb{Z}^2, \mathbb{C}^2)$ defined as
\begin{equation}
\label{id1}
Wz\left(\gamma\right):= \left( \begin{matrix} z(r_0 + \gamma_1 b_1+\gamma_2 b_2) \ ,z(\gamma_1 b_1+\gamma_2 b_2+r_1 )\end{matrix} \right)^T.
\end{equation}
We conclude that, $Q_{\Lambda} = (UW^*)^*K_{\Lambda} (U W^*)$, proving the lemma.
\end{proof}

Consider the matrix-valued sequence $a \in \ell^2(\mathbb{Z}^2,\mathbb{C}^2)$ such that
\begin{equation}
\label{1}
\begin{gathered}
a_{(0,0)}:=\frac{1}{3} \left(\begin{matrix} 0 && 1 \\ 1 && 0 \end{matrix}\right), \ \ 
a_{(0,1)}:=\frac{1}{3} \left(\begin{matrix} 0 && 1 \\ 0 && 0 \end{matrix}\right) , \ \ 
a_{(1,0)}:=\frac{1}{3} \left(\begin{matrix} 0 && 1 \\ 0 && 0 \end{matrix}\right), \\
a_{(0,-1)}:=\frac{1}{3} \left(\begin{matrix} 0 && 0 \\ 1 && 0 \end{matrix}\right) \ \ 
a_{(-1,0)}:=\frac{1}{3} \left(\begin{matrix} 0 && 0 \\ 1 && 0 \end{matrix}\right)   \end{gathered}
\end{equation}
and $ a_{\beta}:=0 $ for any other $\beta \in \mathbb{Z}^2$.
Then, we can write \eqref{Q-op} in the compact form
\begin{equation}
Q_{\Lambda}=\sum_{\beta \in \mathbb{Z}^2; \lvert \beta \rvert \le 1}a_{\beta}(\tau^0)^{\beta_1} (\tau^1)^{\beta_2}.
\end{equation} 

We will exhibit two representations of $Q_{\Lambda}$: the first as a {\em magnetic matrix} and then as a {\em  pseudodifferential operator}. 
For that we follow the presentation of Helffer-Sj\"ostrand \cite{HS2}.
We proceed by defining the set of rapidly decaying $\mathbb{C}^{2 \times 2}$-valued functions on $\mathbb{Z}^2$:
\begin{equation*}
\mathscr{S}(\mathbb{Z}^2):=\left\{f : \mathbb{Z}^2 \rightarrow \mathbb{C}^{2 \times 2} \; : \; \forall \, N \ \exists \, C_N \ \ \left\lVert f(\gamma) \right\rVert \le C_N ( 1 + |\gamma|)^{-N}  \right\}.
\end{equation*}
\begin{defi}[Magnetic matrices]
\label{magmat}
A function $  f\in \mathscr S (\mathbb{Z}^2) $ defines a {\em magnetic matrix}\begin{equation}
\label{Magneticmatrix}
A^{h} (f) \in 
\mathcal L\left(\ell^2(\mathbb{Z}^2,\mathbb{C}^{2})\right) ,
\ \ \ 
A^{h}(f) := \left(e^{-i \frac{h}{2} \omega(\gamma,\delta)}f(\gamma-\delta)\right)_{\gamma,\delta \in \mathbb{Z}^2}
\end{equation}
which acts on $\ell^2(\mathbb{Z}^2;\mathbb{C}^2)$ by matrix-like multiplication
\begin{equation}
(A^{h}(f)u)_{\gamma} = \sum_{\delta \in \mathbb{Z}^2} 
\left(A^{h}(f)\right)_{\gamma,\delta} u_{\delta}.
\end{equation}
\end{defi}
We now consider discrete magnetic translations $\tau^B_{\gamma}$ induced by the continuous magnetic translations \eqref{MagTra} on the $\mathbb{Z}^2$-lattice $\tau^B_{\gamma} \in \mathcal L ( \ell^2(\mathbb{Z}^2))$ that are  given by
\begin{equation}
\label{eq:deftau}
\tau^B_{\delta}(f)(\gamma):= e^{-i\frac{h}{2} \omega(\gamma,\delta )} f(\gamma-\delta) , \ \ \ \omega(\gamma,\delta ) := \delta_1 \gamma_2 - \delta_2 \gamma_1 . 
\end{equation}
Just as $H^B$ commutes with the continuous magnetic translations \eqref{MagTra}, the magnetic matrices commute with discrete translations 
\begin{equation}
\left(A^{h}(f)u\right)_{\gamma} = \sum_{\delta \in \mathbb{Z}^2} \left(A^{h}(f)\right)_{\gamma,\delta} u_{\delta}=\sum_{\delta \in \mathbb{Z}^2} \left(\tau_{\delta}^Bf\right)_{\gamma}u_{\delta}, 
\end{equation}
which satisfy
\begin{equation}
\label{commrel}
\tau^B_{\gamma}{\tau}^B_{\delta} = e^{i h \omega(\gamma,\delta)}{\tau}^B_{\delta}\tau^B_{\gamma}.
\end{equation}

\begin{lemm}
\label{uniteq}
$Q_{\Lambda}$ and $A^{h}(a)$, with $a$ given by \eqref{1}, are unitary equivalent.
\end{lemm}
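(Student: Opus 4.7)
The plan is to construct an explicit gauge transformation $W$ on $\ell^{2}(\mathbb{Z}^{2};\mathbb{C}^{2})$ that converts the ``Landau-type'' presentation of $Q_{\Lambda}$ (built from the plain shift $\tau^{0}$ and the phased shift $\tau^{1}$ carrying all the magnetic information on one family of edges) into the ``symmetric-type'' presentation $A^{h}(a)$ (where the flux is spread symmetrically through the symplectic form $\omega$). Both operators are bipartite, only the off-diagonal blocks of a $2\times 2$ block structure on $\ell^{2}(\mathbb{Z}^{2})$ are non-zero, and both encode the same magnetic flux $h$ per fundamental cell, so such a $W$ should exist as a diagonal multiplication operator.

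First I would compute matrix kernels in the standard basis $\{\delta_{\gamma}\}\otimes\mathbb{C}^{2}$. Iterating the Weyl relation \eqref{CCRDT} gives $(\tau^{0})^{\beta_{1}}(\tau^{1})^{\beta_{2}}u(\gamma)=e^{ih\beta_{2}(\gamma_{1}-\beta_{1})}u(\gamma-\beta)$, hence the $(\gamma,\delta)$-entry of $Q_{\Lambda}$ equals $e^{ih\delta_{1}(\gamma_{2}-\delta_{2})}\,a_{\gamma-\delta}$, while from \eqref{Magneticmatrix} the corresponding entry of $A^{h}(a)$ equals $e^{-\frac{ih}{2}\omega(\gamma,\delta)}\,a_{\gamma-\delta}$. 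Both kernels are supported on the same ``star'' $\gamma-\delta\in\{0,\pm e_{1},\pm e_{2}\}$ (by \eqref{1}), so it suffices to match them on this finite support.

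Next I would look for $W$ of multiplication form $Wu(\gamma)=\Xi(\gamma)u(\gamma)$ with diagonal unitary $\Xi(\gamma)=\mathrm{diag}(e^{i\phi_{1}(\gamma)},e^{i\phi_{2}(\gamma)})$, so that the conjugation condition $\Xi(\gamma)^{*}(Q_{\Lambda})_{\gamma,\delta}\Xi(\delta)=(A^{h}(a))_{\gamma,\delta}$ reduces to comparing scalar phases. Using the two non-vanishing entries of $a_{\beta}$ (positions $(1,2)$ for $\beta\in\{0,e_{1},e_{2}\}$, positions $(2,1)$ for $\beta\in\{0,-e_{1},-e_{2}\}$) one reads off the prescribed differences of $\phi_{1},\phi_{2}$ along the edges of $\mathbb{Z}^{2}$; these are discrete first-order difference equations with linear-in-$\gamma$ right-hand sides involving $h$.

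Then I would construct $\phi_{1},\phi_{2}$ recursively on $\mathbb{Z}^{2}$, in direct analogy with the factors $\zeta_{v}$ of \eqref{mul1}: fix $\phi_{i}(0)=0$ and propagate along lattice paths. Single-valuedness is a discrete Stokes check around each elementary plaquette: the total phase on a closed loop must match in both gauges. This is the main obstacle, and is precisely the content of the magnetic flux condition \eqref{eq:defh}, the Weyl relation \eqref{CCRDT} (which gives phase $e^{ih}$ in the Landau description) and the defining symplectic form $\omega$ with $\omega(e_{1},e_{2})=-1$ (which gives the same net phase in the symmetric description). The bipartite off-diagonal structure of both operators guarantees that the $(2,1)$-block identity is the adjoint of the $(1,2)$-block identity, so checking one block suffices.

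Once $\Xi$ is in place, the equality $W^{*}Q_{\Lambda}W=A^{h}(a)$ follows from the local kernel identities verified in the previous step, and unitarity of $W$ is immediate since it is diagonal multiplication by unimodular numbers. I would conclude by noting that $\phi_{1},\phi_{2}$ are uniquely determined up to additive constants, corresponding to the irrelevant scalar phase freedom in any gauge transformation.
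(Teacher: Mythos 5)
Your proposal is correct and is essentially the paper's argument: both reduce the statement to conjugation by a diagonal unitary multiplication (gauge) operator on $\ell^2(\mathbb{Z}^2;\mathbb{C}^2)$, after computing the same kernel formula for $Q_\Lambda$, and your plaquette-flux consistency check does hold since both presentations carry flux $h$ per unit cell. The only difference is that the paper writes the phase down explicitly, namely $Vu(\gamma)=e^{-i\frac{h}{2}\gamma_1\gamma_2}u(\gamma)$ acting as a scalar on both components (your $\phi_1=\phi_2$ is in fact forced by the on-site term $a_{(0,0)}$), rather than solving the difference equations recursively.
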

\begin{proof} Let $u \in \ell^2(\mathbb{Z}^2;\mathbb{C}^2)$, then we have
\begin{equation*}
\begin{split} 
(Q_{\Lambda}u)(\gamma) &= \sum_{\delta \in \mathbb{Z}^2; \lvert \delta \rvert \le 1} a_{\delta} e^{i h \gamma_1 \delta_2
} u(\gamma-\delta) 
= \sum_{\delta \in \mathbb{Z}^2; \lvert \delta \rvert \le 1} a_{\delta} e^{-i \frac{h}{2} \delta_1 \delta_2 } e^{i h \gamma_1 \delta_2
} u(\gamma-\delta)  \\
&= \sum_{\delta \in \mathbb{Z}^2; \lvert \gamma-\delta \rvert \le 1} a_{\gamma-\delta} e^{-i \frac{h}{2} (\gamma_1-\delta_1) (\gamma_2-\delta_2)} e^{i h \gamma_1(\gamma_2-\delta_2)}u(\delta)\\
&=\sum_{\delta \in \mathbb{Z}^2; \lvert \gamma-\delta \rvert \le 1}  e^{i \frac{h}{2} (\gamma_1\gamma_2-\delta_1\delta_2)} e^{i \frac{h}{2} (\gamma_2 \delta_1-\delta_2 \gamma_1)} a_{\gamma-\delta}u(\delta) \\
&=\sum_{\delta \in \mathbb{Z}^2; \lvert \gamma-\delta \rvert \le 1}  e^{i \frac{h}{2} \gamma_1\gamma_2} A_{\gamma,\delta}^{h}(a) e^{-i \frac{h}{2} \delta_1\delta_2}u(\delta).
\end{split}
\end{equation*}
Hence, the unitary operator $V \in \mathcal L ( \ell^2(\mathbb{Z}^2; \mathbb{C}^2))$ acting by $\label{mul2}Vu(\gamma):=e^{-i\frac{h}{2} \gamma_1\gamma_2}u(\gamma)$, yields unitary equivalence
$ Q_{\Lambda} = V^*A^{h}(a)V $.
\end{proof}

For $ f,g \in \mathscr S (\mathbb{Z}^2)$ we define a (non-commutative) product
\begin{equation}
\label{eq:defprod}
f\#_{h}g : =A^h(f)(g)=A^{-h}(g)(f)=\sum_{\gamma \in \mathbb{Z}^2} f(\gamma) (\tau_{\gamma}^{-B} g)(\bullet) .
\end{equation}
If $f \in \mathscr{S}(\mathbb{Z}^2)$ then 
\begin{equation}
\label{eq:in1} 
 A^{h}(f)^{-1} \in \mathcal L\left( \ell^2  \left(\mathbb{Z}^2,\mathbb{C}^{2 \times 2} \right)\right) \ \Longrightarrow 
 \ \exists\, g \in \mathscr{S}(\mathbb{Z}^2), \ \ 
 A^h ( f ) ^{-1} = A^{h}(g) ,
 \end{equation}
 and 
\begin{equation*}
f \#_{h} g = g \#_{h} f = \operatorname{id}_{\mathbb{C}^{2 \times 2}}\delta_{0},
\end{equation*}
see the proof at the end of this section and \cite[Proposition $5.1$]{HS1} for a slightly different statement.

For  $f \in \mathscr{S}(\mathbb{Z}^2)$, we  define the Fourier transform as
\begin{equation*}
\widehat{f}(x,\xi):= \sum_{\gamma \in \mathbb{Z}^2} f(\gamma)e^{i\langle \gamma, (x,\xi)^T \rangle}, \text{ such that } \widehat f \in C^\infty ( \mathbb T^2_* ) .
\end{equation*}
In particular, for $a \in \mathscr{S}(\mathbb{Z}^2)$ given in \eqref{1} the Fourier transform is given by  
\begin{equation}
\widehat{a}(x,\xi) := \tfrac{1}{3} \left(\begin{matrix} 0 & 1 +e^{ix}+e^{i\xi} \\
1+e^{-ix}+e^{-i\xi} &0  \end{matrix}\right).
\end{equation}

We observe that for $\gamma:=(1,0)$ and $\delta:=(0,1)$, equation \eqref{commrel} becomes
\begin{equation}
 \tau^{-B}_{\gamma}\tau^{-B}_{\delta}= e^{-i h}\tau^{-B}_{\delta}\tau^{-B}_{\gamma}  .
\end{equation}
In semiclassical Weyl quantization (see \cite[Theorem $4.7$]{ev-zw}) 
 the same commutation relation is satisfied by
\begin{equation}
\operatorname{Op}_{h}^{\rm{w}}\left(e^{i x}\right) \operatorname{Op}_{h}^{\rm{w}} \left(e^{i\xi}\right)=e^{-i h}\operatorname{Op}_{h}^{\rm{w}} \left(e^{i \xi}\right)\operatorname{Op}_{h}^{\rm{w}}\left(e^{i x}\right).
\end{equation}
Looking at the product formula we see that when we replace 
  $\tau^{-B}_{\gamma}$  in \eqref{eq:defprod} by 
 \[ \operatorname{Op}^{\rm{w}} _{h}\left((x,\xi)\mapsto e^{i\langle \gamma, (x,\xi)^T \rangle}\right)\]
 we obtain a homomorphism
\begin{equation}
\label{star}
\begin{gathered} 
\Theta : \mathscr S ( \ZZ^2 ) \to \mathcal L\left(L^2(\mathbb{R})\right) \nonumber  , \ \ 
\Theta ( f ) := \sum_{\gamma \in \mathbb{Z}^2} f(\gamma)\operatorname{Op}^{\rm{w}} _{h}\left((x,\xi)\mapsto e^{i\langle \gamma, (x,\xi)^T \rangle}\right)= \operatorname{Op}_h^{\rm{w}}(\widehat{f}) , \\
\Theta ( f \#_h g ) = \Theta ( f ) \circ \Theta ( g ) , \ \ 
\Theta ( f ( - \bullet )^* ) = \Theta ( f)^* . 
\end{gathered}
\end{equation}

\begin{proof}[Proof of \eqref{eq:in1}] Invertibility of $ A^h ( f ) $
on $ \ell^2 $ is equivalent to invertibility of $ \Op_h^{\rm{w}} ( \widehat f ) $. A semiclassical version of Beals's lemma, due to Helffer--Sj\"ostrand (see \cite[Chapter 8]{D-S} or \cite[Theorem 8.3]{ev-zw}), shows that $ \Op_h^{\rm{w}} (\widehat f)^{-1}
= \Op_h^{\rm{w}} ( G ) $, $ G \in S ( 1 ) $. We also see that $ G $ has to be periodic and that implies that $ G = \widehat g $ for $ g \in \mathscr S $.
\end{proof}

\section{ Regularized traces }

As recalled in \S \ref{s:intr} the density of states is defined using 
regularized traces of functions of the Hamiltonian. We start with a 
general definition:
\begin{defi} Put $ B(R) := \{ x \in \RR^2 : |x| < R \} $ and 
suppose that $T\in \mathcal L ( L^2(\mathcal{E})) $ has the
property for all $R>0$ the operator $\indic_{B(R)} T \indic_{B(R)}$ is of trace-class. Then we define 
\begin{equation}
\label{conttrace}
\widetilde{\operatorname{tr}} T := \lim_{R \rightarrow \infty} \frac{\operatorname{tr}\indic_{B(R)} T \indic_{B(R)}}{\left\lvert B(R)\right\rvert}
\end{equation}
provided this limit exists.

Similarly, for a lattice $ \Gamma \subset \RR^2  $ and $ A \in \mathcal L ( \ell^2( \Gamma , \CC^2  ) ) $ given by 
\begin{equation*}
A (s) (\gamma):= \sum_{\beta \in \ZZ^2} k(\gamma,\beta)s(\beta)
\end{equation*}
with $k(\gamma,\beta) \in \mathbb{C}^{2 \times 2},$ we define \begin{equation}
\label{discretetrace}
\widehat{\operatorname{tr}}_{\Gamma}  A := \lim_{R \rightarrow \infty} \frac{1}{\left\lvert B(R) \right\rvert} \sum_{\gamma \in \Gamma  \cap B(R)} \operatorname{tr}_{\mathbb{C}^{2}} k ( \gamma, \gamma )
\end{equation}
provided the limit exists.
\end{defi}

\begin{rem}
Most of the results of this section hold for both $H^B$ and $H^{{D}}$ and the proofs do not differ for the two operators. In such case we consider $H^B $ only.
\end{rem}

We start with some general comments about $ \widehat \tr $:

\begin{lemm}
\label{trmagex}
Let $ g \in \mathscr S ( \ZZ^2 ) $ and let $ A^h(g)$ be the 
corresponding magnetic matrix (Definition \ref{magmat}). Then the regularized trace $\widehat{\operatorname{tr}}_{\mathbb{Z}^2} (A^h(g)) $ exists and is given by
\begin{equation}
\label{eq:traces}
\widehat{\operatorname{tr}}_{\mathbb{Z}^2} (A^h(g))=\operatorname{tr}_{\mathbb{C}^{2}}(g(0)) = \tfrac{1}{ ( 2 \pi)^2 }
\int_{\mathbb{T}^2_*} \operatorname{tr}_{\mathbb{C}^{2}} 
\widehat g(x,\xi)  \, {dx  d\xi} .
\end{equation}
\end{lemm}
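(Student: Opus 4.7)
The plan is to unpack the definitions: essentially every diagonal matrix entry of $A^h(g)$ equals $g(0)$, so the regularized trace reduces to counting lattice points in $B(R)$.

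First I would evaluate the diagonal of the magnetic matrix. By Definition \ref{magmat}, the integral kernel of $A^h(g)$ is $k(\gamma,\delta) = e^{-i\frac{h}{2}\omega(\gamma,\delta)} g(\gamma-\delta)$. Since $\omega$ is antisymmetric, $\omega(\gamma,\gamma) = 0$, and so $k(\gamma,\gamma) = g(0)$ for every $\gamma \in \ZZ^2$, independently of $\gamma$. Plugging this into the definition \eqref{discretetrace} gives
\begin{equation*}
\widehat{\tr}_{\ZZ^2}(A^h(g)) = \lim_{R \to \infty} \frac{1}{|B(R)|} \sum_{\gamma \in \ZZ^2 \cap B(R)} \tr_{\CC^2} g(0) = \tr_{\CC^2} g(0) \cdot \lim_{R\to\infty} \frac{|\ZZ^2 \cap B(R)|}{|B(R)|}.
\end{equation*}

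Next I would verify that the ratio of lattice points to area tends to $1$. This is the standard fact that $\ZZ^2$ has co-volume one, proved by the elementary argument of associating each $\gamma \in \ZZ^2$ to the unit square $\gamma + [0,1)^2$ and observing that the union of these squares for $\gamma \in \ZZ^2 \cap B(R)$ is sandwiched between $B(R-\sqrt 2)$ and $B(R+\sqrt 2)$, so the count equals $\pi R^2 + O(R)$. This yields $\widehat{\tr}_{\ZZ^2}(A^h(g)) = \tr_{\CC^2} g(0)$.

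For the second equality I would invoke Fourier inversion on the torus: since $\widehat g(x,\xi) = \sum_{\gamma \in \ZZ^2} g(\gamma) e^{i\langle \gamma,(x,\xi)^T\rangle}$ converges in $C^\infty(\TT^2_*)$ (because $g \in \mathscr S(\ZZ^2)$), the zeroth Fourier coefficient is
\begin{equation*}
g(0) = \frac{1}{(2\pi)^2} \int_{\TT^2_*} \widehat g(x,\xi)\, dx\, d\xi,
\end{equation*}
and taking $\tr_{\CC^2}$ of both sides (which commutes with the integral) finishes the proof.

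There is essentially no obstacle here; the only mild subtlety is the lattice-point-counting statement, which is standard and can be dispatched in one sentence. The content of the lemma is really the observation that the $h$-dependent phase $e^{-i\frac{h}{2}\omega(\gamma,\delta)}$ disappears on the diagonal, so the regularized trace is independent of $h$ and coincides with the density-of-states formula for the underlying symbol $\widehat g$.
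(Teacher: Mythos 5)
Your proof is correct and takes essentially the same route as the paper, which simply observes that the diagonal entries $A^h(g)_{\gamma,\gamma}$ all equal $g(0)$ and calls the rest immediate. Your filling in of the lattice-point count $|\ZZ^2\cap B(R)|=\pi R^2+\mathcal O(R)$ and the Fourier inversion step is exactly the detail the paper leaves to the reader.
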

\begin{proof}
Since the kernels of the magnetic matrix satisfy on the diagonal $A^h(g)_{\gamma,\gamma}=g(0)$ the proof of this equality is immediate.
\end{proof}
In view of this lemma we will abuse the notation slightly and introduce
\begin{defi}
\label{pdotrace}
Let $f  \in C^\infty ( \RR^2 ) $ be $(2\pi \mathbb{Z})^2$ periodic. Then we define the regularized trace
\begin{equation}
\widehat{\operatorname{tr}}(\operatorname{Op}_h^{w}(f)) :=\tfrac{1}{ ( 2 \pi)^2 }
 \int_{\mathbb{T}^2_*} \operatorname{tr}_{\mathbb{C}^{2 }} f(x,\xi)\,  {dx  d\xi}.
\end{equation}
\end{defi}

We now show that for $ f \in C_{\rm{c}} ( \RR ) $ the operators 
$ f ( H^B ) $ and $ f ( H^{B, D } ) $ have regularized traces. Because we are essentially in dimension one, we have stronger trace class properties:
\begin{lemm}
\label{existtra}
For $ z \in \CC \setminus \RR $ the regularized traces of $(H^\bullet-z)^{-1}$  exist and 
\begin{equation}
\label{eq:ttes}
\widetilde \tr (H^\bullet-z)^{-1} = 
\tfrac{2}{3 \sqrt 3 } \tr \indic_{ \mathcal E ( W_\Lambda ) } 
( H^\bullet - z )^{-1}, \ \  \bullet = B, D .
\end{equation}
\end{lemm}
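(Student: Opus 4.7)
\medskip

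\noindent\textbf{Proof plan.} The numerical prefactor is simply $\tfrac{2}{3\sqrt 3}=1/|b_1\wedge b_2|=1/|W_\Lambda|$, so the statement is the quantum-graph analogue of Shubin's identity: the regularized (per unit area) trace equals the ordinary trace over one fundamental cell divided by the cell area. I would prove both existence of $\widetilde\tr$ and the formula by a direct kernel computation, using magnetic translations to exhibit the periodicity of the diagonal.

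First, I would check that $\indic_{B(R)}R(z)\indic_{B(R)}$ is trace class for each $R$, where $R(z):=(H^\bullet-z)^{-1}$. Because the graph is one-dimensional on each edge and only countably many edges meet $B(R)$, the resolvent admits a jointly continuous (off the diagonal, and with an integrable singularity on it) integral kernel $k(x,y)$: on every edge one sees the usual 1D Green function glued through Krein's formula (Proposition \ref{KRF}) to the gamma-field/Dirichlet-to-Neumann correction from \eqref{eq:defga}, \eqref{eq:defM}, which is smooth in $x$. Standard 1D arguments then yield trace class with
\[
\tr\,\indic_{B(R)}R(z)\indic_{B(R)}=\int_{\Lambda\cap B(R)} k(x,x)\,dx,
\]
interpreted as a sum of edgewise integrals.

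Next, I would use magnetic-translation invariance to get periodicity of $x\mapsto k(x,x)$. By \eqref{eq:noncom} and \eqref{functionalcomm}, $R(z)=(T_\gamma^B)^*R(z)T_\gamma^B$ for every $\gamma\in\ZZ^2$. Since $T_\gamma^B$ is the composition of translation by $\gamma_1 b_1+\gamma_2 b_2$ with the unimodular multiplier $u^B(\gamma)$, this identity, read on the diagonal of the kernel, becomes
\begin{equation*}
k(x,x)=|u^B(\gamma)(x)|^{2}\,k\bigl(x-\gamma_1b_1-\gamma_2b_2,\,x-\gamma_1b_1-\gamma_2b_2\bigr)=k\bigl(x-\gamma_1b_1-\gamma_2b_2,\,x-\gamma_1b_1-\gamma_2b_2\bigr).
\end{equation*}
Hence $\int_{V_\gamma}k(x,x)\,dx=\int_{W_\Lambda}k(x,x)\,dx$ for every translated cell $V_\gamma:=W_\Lambda+\gamma_1 b_1+\gamma_2 b_2$. (For $H^{B,D}$ the same argument applies with $T_\gamma$ in place of $T_\gamma^B$.)

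Finally I would do the counting. Let $\Gamma(R):=\{\gamma\in\ZZ^2:V_\gamma\subset B(R)\}$; standard lattice estimates give $|\Gamma(R)|=|B(R)|/|W_\Lambda|+\mathcal O(R)$ and $\bigl|B(R)\setminus\bigsqcup_{\gamma\in\Gamma(R)}V_\gamma\bigr|=\mathcal O(R)$. Because $k(x,x)$ is uniformly bounded on $\Lambda$ for $z$ in compact subsets of $\CC\setminus\RR$, splitting the integral over cells yields
\[
\tr\,\indic_{B(R)}R(z)\indic_{B(R)}=|\Gamma(R)|\int_{W_\Lambda}k(x,x)\,dx+\mathcal O(R),
\]
and dividing by $|B(R)|=\pi R^2$ and sending $R\to\infty$ gives the limit $\tfrac{1}{|W_\Lambda|}\tr\,\indic_{\mathcal E(W_\Lambda)}R(z)=\tfrac{2}{3\sqrt 3}\tr\,\indic_{\mathcal E(W_\Lambda)}R(z)$, which is \eqref{eq:ttes}. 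The only delicate point is the first step: verifying continuity and uniform boundedness of $k(x,x)$ on $\Lambda$, which requires tracking the Krein correction $\gamma(z)M(z)^{-1}\gamma(\bar z)^*$ on the diagonal and using that $M(z)^{-1}$ is bounded on $\ell^2(\mathcal V)$ for $z\notin\RR$; once that is in hand the translation and counting arguments are routine.
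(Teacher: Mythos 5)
Your argument is correct in outline, and the central idea — magnetic‑translation invariance forces the ``diagonal of the resolvent'' to be cell‑periodic, after which a lattice count gives the limit — is the same as the paper's. But you execute it at the level of the integral kernel, whereas the paper stays entirely at the operator level: it gets trace class from the soft fact that $(H^B-z)^{-1}:L^2(\mathcal E)\to H^2(\mathcal E\cap B(0,2R))$ and $H^2$ of a bounded one‑dimensional set embeds trace‑class into $L^2$ (with the explicit trace‑norm bound \eqref{eq:zest}), it converts periodicity into the identity $\tr\,\indic_{\mathcal E(W_\Lambda)+\gamma b}(H^B-z)^{-1}=\tr\,T_\gamma^B\indic_{\mathcal E(W_\Lambda)}T_{-\gamma}^B(H^B-z)^{-1}=\tr\,\indic_{\mathcal E(W_\Lambda)}(H^B-z)^{-1}$ using \eqref{eq:noncom} and cyclicity of the trace, and it controls the boundary layer by a trace‑norm (not pointwise) estimate, $\|\indic_{B(R)\setminus\Omega_{m_R}}(H^B-z)^{-1}\|_{\mathcal L_1}\le CR(1+|\Re z|)|\Im z|^{-1}$. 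The trade‑off: your route needs Krein's formula and boundedness of $M(z)^{-1}$ just to justify continuity and uniform boundedness of $k(x,x)$ (plus the not‑entirely‑free fact that the trace of a trace‑class operator with continuous kernel is the integral of its diagonal), none of which the paper invokes for this lemma; on the other hand it is more concrete. Two points you should tighten if you carry this out: in one dimension the resolvent kernel is continuous \emph{across} the diagonal (no integrable singularity to worry about), and you should record the explicit $z$‑dependence of your bounds, since the paper reuses \eqref{eq:zest} and \eqref{eq:2te} later for dominated convergence when passing from resolvents to $f(H^B)$.
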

\begin{proof}
We consider $H^B$ only. Since $ D ( H^B ) \subset 
H^2 ( \mathcal E ) $, we see that for $ \psi \in C_{\rm{c}}^\infty ( B_{\mathbb R^2}  ( 0, 2 R  )) $, $ \psi ( H^B - z )^{-1} :
L^2 ( \mathcal E ) \to H^2 ( \mathcal E \cap B_{\RR^2}  ( 0 , 2 R ) ) $ is of trace class. (We are in dimension one here and the trace class property
in dimension $ n$ is obtained for maps $ L^2 ( \RR^n ) \to H^{s }( B ( 0 , 
r ) ) $, $s > n $; hence $ H^2 $ is sufficient -- see for instance 
\cite[Proposition B.20]{res}.) In addition, we have the trace norm estimate:
\begin{equation}
\begin{split} 
\label{eq:zest}
\| \psi ( H^B - z )^{-1} \|_{\mathcal L_1}  & \leq 
C_\psi \| ( H^B - z )^{-1} \|_{ L^2 \to D ( H^B ) } 
\leq C_\psi \sup_{ x \in \RR } |x -z|^{-1} ( 1 + |x|) \\
& \leq C_\psi ( 1 + |\Re z | ) | \Im z|^{-1} . 
\end{split}
\end{equation}
If we choose $ \psi \equiv 1 $ on a neighbourhood of $ B_{\RR^2} ( 0 , R ) $ then 
\[  \indic_{B_{\RR^2}  ( 0 , R ) } ( H^B - z )^{-1} = 
 \indic_{B_{\RR^2}  ( 0 , R ) } \psi ( H^B - z )^{-1} \in 
 \mathcal L_1 ( L^2 ( \mathcal E )) . \]
We now choose $ m_R , M_R \subset \ZZ^2 $ such that 
\begin{equation}
\label{eq:latap}
\begin{gathered} \Omega_{m_R} 
\subset 
    B_{\RR^2 } ( 0 , R ) \cap \mathcal E 
    \subset \Omega_{M_R}  , \\
\Omega_{ Q} := \bigcup_{ \gamma \in Q } ( \mathcal E ( W_\Lambda ) + \gamma_1 b_1 + 
\gamma_2 b_2 ) , 
    \ \ \ 
  | M_R \setminus m_R  | \leq C R . 
  \end{gathered}
\end{equation}
In particular, since the area of a hexagonal cell is given by $ \frac{ 3 \sqrt 3 }2 $, we have
\begin{equation}
\label{eq:mR} 
 | m_R | = \tfrac{2}{3 \sqrt 3 } | B_{\RR^2 } ( 0, R ) | + \mathcal {\mathcal O} (R ). \end{equation}
We now write 
\begin{equation}
\label{eq:trde} 
\begin{split} \tr \indic_ { B_{\RR^2 } ( 0 , R ) } ( H^B - z )^{-1} & = \tr \indic_ { \Omega_{m_R}   } ( H^B - z )^{-1}
+  \tr \indic_{ B_{\RR^2 } ( 0 , R ) \setminus \Omega_{m_R} } ( H^B - z )^{-1}.
\end{split}
\end{equation}
Using \eqref{eq:transl} we get
\begin{equation}
\begin{split}
 T_\gamma^B \indic_{ \mathcal E ( W_\Lambda ) } T_{-\gamma}^Bf &=  T_\gamma^B \indic_{ \mathcal E ( W_\Lambda ) } u^B(-\gamma)f(\bullet+ \gamma_1b_1+\gamma_2b_2)  \\
 &= \indic_{ \mathcal E(W_\Lambda) + \gamma_1 b_1 + \gamma_2 b_2 }f 
\end{split}
\end{equation}
so that we can expand the first term on the right hand side of \eqref{eq:trde} as follows
\begin{equation}
\label{eq:ex2}
\begin{split}
\tr \indic { \Omega_{m_R}  } ( H^B - z )^{-1} & = \sum_{\gamma \in m_R } 
\tr \indic_{ \mathcal E(W_\Lambda) + \gamma_1 b_1 + \gamma_2 b_2 } ( H^B - z )^{-1}  \\
& = \sum_{\gamma \in m_R } 
\tr T_\gamma^B \indic_{ \mathcal E(W_\Lambda)} T_{-\gamma}^B ( H^B - z )^{-1} \\ 
& = |m_R | \tr \indic_{ \mathcal E( W_\Lambda ) } ( H^B - z )^{-1} .
\end{split}
\end{equation} 
Here we used \eqref{eq:noncom} and the cyclicity of the trace.

To estimate the second term in \eqref{eq:trde} we write
\begin{equation}
\label{eq:2te} \begin{split}  \| \indic_{ B_{\RR^2 } ( 0 , R ) \setminus \Omega_{m_R} } ( H^B - z )^{-1} \|_{ \mathcal L_1 } &  \leq 
\| \indic_{ \Omega_{M_R}  \setminus \Omega_{m_R} } ( H^B - z )^{-1} \|_{ \mathcal L_1 } \\
& \leq \sum_{ \gamma \in M_R \setminus m_R } 
\|  \indic_{ \mathcal E(W_\Lambda) + \gamma_1 b_1 + \gamma_2 b_2 } ( H^B - z )^{-1} \|_{ \mathcal L_1} 
\\
& \leq  \sum_{ \gamma \in M_R \setminus m_R } 
\|  T_\gamma^B \indic_{ \mathcal E(W_\Lambda)} T_{-\gamma}^B  ( H^B - z )^{-1} \|_{ \mathcal L_1} \\
& = | M_R \setminus m_R | \| \indic_{ \mathcal E( W_\Lambda ) } ( H^B - z )^{-1} \|_{\mathcal L_1 } \\
& \leq C R ( 1 + |\Re z | ) | \Im z|^{-1} , 
\end{split}
\end{equation}
where we used \eqref{eq:zest} and \eqref{eq:latap}. Returning to 
\eqref{eq:trde} we see that \eqref{eq:ttes} follows from \eqref{eq:mR} and
\eqref{eq:ex2}.
\end{proof}

We now consider regularized traces of $ f ( H^B ) $ and
$ f ( H^D ) $ and we will use the functional calculus of Helffer--Sj\"ostrand. For that we recall that for 
any $f \in C_{\rm{c}}^{\infty}(\mathbb{R})$ can be extended to $ \widetilde f \in \mathscr S ( \CC ) $ such that $ \widetilde f|_\RR = f $ and
$ \partial_{\bar z} \widetilde f = \mathcal {\mathcal O} ( |\Im z |^\infty ) $. 
The function $ \tilde f $ is a then called an {\em almost analytic} extension of $ f $. A compact formula for $ \widetilde f $ was 
given by Mather and Jensen--Nakamura:
\begin{equation}
\label{eq:mather}
\begin{gathered} 
\widetilde f ( x+ i y ) = \tfrac{1}{ 2 \pi }  \chi ( y ) \psi ( x ) 
\int_{\RR} \chi( y \xi ) \widehat f ( \xi ) e^{ i (x + i y ) \xi } d \xi , 
\\  \chi, \psi \in C^\infty_{\rm{c}}( \RR ) , \ \  \psi| _{ \supp f + 
( - 1, 1 )} = 1, \   \ \chi |_{ (-1, 1) } = 1 ,
\end{gathered}
\end{equation}
see for instance \cite[Chapter 8]{D-S}.
The relevance of this construction here comes from the Helffer-Sj\"ostrand formula: for any self-adjoint operator $ P $, 
\begin{equation}
\label{eq:HeSj}
f(P) = \tfrac{1}{\pi} \int_{\mathbb{C}} \partial_{\overline{z}} \widetilde{f}(z) (P -z)^{-1} dm(z)
\end{equation}
where $\lambda_{\mathbb{C}}$ is the Lebesgue measure on $\mathbb{C}.$ The integral on the right hand side is  well-defined as $ \partial_{\overline{z}}\widetilde{f}(z) = \mathcal O ( |\Im z |^\infty ) $  and  $\| (P -z)^{-1} \| = \mathcal O ( 1/|\Im z |) $, by self-adjointness.

The proof of Lemma \ref{existtra} and the dominated convergence theorem 
based on \eqref{eq:zest},\eqref{eq:ex2} and \eqref{eq:2te}, immediately give
\begin{lemm}
\label{existencelemma}
Let $f \in C_c(\mathbb{R})$ then $\widetilde{\operatorname{tr}}(f(H^\bullet))$  exist and 
\begin{equation}
\widetilde{\operatorname{tr}}f(H^\bullet)=\tfrac{2}{3 \sqrt{3}} \tr\indic_{\mathcal{E}(W_{\Lambda})} {f(H^\bullet)} . 
\end{equation}
\end{lemm}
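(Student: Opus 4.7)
The plan is to combine the Helffer--Sj\"ostrand functional calculus \eqref{eq:HeSj} with the resolvent trace estimates already proved inside Lemma \ref{existtra}, and to pass to the limit $R \to \infty$ under the $z$-integral by dominated convergence.

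First I would reduce to the case $f \in C_{\rm c}^\infty(\mathbb{R})$. By \eqref{eq:zest}, $\indic_{B(R)} (H^\bullet - z)^{-1}$ is trace class, so $\indic_{B(R)} g(H^\bullet) = \indic_{B(R)} (H^\bullet + i)^{-1} \cdot (H^\bullet + i) g(H^\bullet)$ is trace class for any bounded Borel $g$ with compact support, with $\|\indic_{B(R)} g(H^\bullet)\|_{\mathcal L_1} \leq C_R \|\langle \cdot \rangle g\|_\infty$. Uniform approximation of a general $f \in C_{\rm c}(\mathbb{R})$ by smooth functions supported in a common compact set then transfers both sides continuously. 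For smooth $f$, pick an almost analytic extension $\widetilde f$ as in \eqref{eq:mather} and use \eqref{eq:HeSj} to write
\[
\tr \indic_{B(R)} f(H^\bullet) \indic_{B(R)} = \tfrac{1}{\pi} \int_{\mathbb{C}} \partial_{\bar z} \widetilde f(z)\, \tr \indic_{B(R)} (H^\bullet - z)^{-1}\, dm(z),
\]
where the trace-integral interchange is legitimate thanks to \eqref{eq:zest} combined with $\partial_{\bar z}\widetilde f = \mathcal O(|\Im z|^\infty)$.

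Next I would insert the decomposition from \eqref{eq:trde} together with the identity \eqref{eq:ex2} and the remainder bound \eqref{eq:2te}:
\[
\tr \indic_{B(R)} (H^\bullet - z)^{-1} = |m_R|\, \tr \indic_{\mathcal E(W_\Lambda)} (H^\bullet - z)^{-1} + E(R,z), \qquad |E(R,z)| \leq C R (1+|\Re z|)|\Im z|^{-1}.
\]
Dividing by $|B(R)|$ and using \eqref{eq:mR} shows that $|m_R|/|B(R)| \to \tfrac{2}{3\sqrt 3}$ pointwise, while $E(R,z)/|B(R)| = \mathcal O(R^{-1}(1+|\Re z|)|\Im z|^{-1})$, this bound being uniform in $R$. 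Multiplying by $\partial_{\bar z}\widetilde f(z)$ produces pointwise convergence of the integrand together with an $R$-uniform dominating function $C(1+|\Re z|)|\Im z|^{-1}|\partial_{\bar z}\widetilde f(z)|$, which lies in $L^1(\mathbb{C}, dm)$ thanks to the infinite order vanishing of $\partial_{\bar z}\widetilde f$ at the real axis and the compact $z$-support of $\widetilde f$.

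Dominated convergence, followed by a second use of \eqref{eq:HeSj} applied to $\indic_{\mathcal E(W_\Lambda)} f(H^\bullet)$, then yields
\[
\widetilde\tr f(H^\bullet) = \tfrac{2}{3\sqrt 3} \cdot \tfrac{1}{\pi} \int_{\mathbb{C}} \partial_{\bar z} \widetilde f(z)\, \tr \indic_{\mathcal E(W_\Lambda)} (H^\bullet - z)^{-1}\, dm(z) = \tfrac{2}{3\sqrt 3}\, \tr \indic_{\mathcal E(W_\Lambda)} f(H^\bullet),
\]
as claimed. The main obstacle, and the only delicate step, is the joint justification of the trace-integral interchange and the dominated-convergence argument near $\Im z = 0$; both hinge on the infinite order vanishing of $\partial_{\bar z}\widetilde f$ at the real axis that is built into the almost analytic extension \eqref{eq:mather}, which exactly absorbs the $|\Im z|^{-1}$ blow-up of the resolvent trace estimates.
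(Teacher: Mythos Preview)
Your proof is correct and follows essentially the same approach as the paper: the paper's one-line proof says exactly ``the proof of Lemma \ref{existtra} and the dominated convergence theorem based on \eqref{eq:zest}, \eqref{eq:ex2} and \eqref{eq:2te} immediately give'' the result, and you have simply spelled out these steps via the Helffer--Sj\"ostrand formula \eqref{eq:HeSj} (which the paper introduces just before the lemma precisely for this purpose). Your additional reduction from $C_{\rm c}$ to $C_{\rm c}^\infty$ is a standard detail the paper leaves implicit.
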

The lemma allows a rigorous definition of the density of states measure:
the functional $C_c(\mathbb{R}) \ni f \mapsto \widetilde{\operatorname{tr}}(f(H^B))$ is positive. Thus, by the Riesz-Markov theorem, it defines a Radon measure:
\begin{defi}[Density of states measure]
\label{de:DOS}
The density of states $ \rho_B \in {\mathscr D'}^0 ( \RR ) $ is the Radon measure such that
\begin{equation*}
\widetilde{\operatorname{tr}}(f(H^B))=\int_{\mathbb{R}} f(x) \rho_{B}(x) dx,
\end{equation*}
where we use the informal notation for the action of distributions of order zero on function (see \cite[\S 2.1]{H1}) The distribution function of the
measure $ \rho_B $ is called the integrated density of states.
\end{defi}

In Krein's resolvent formula \eqref{Kreinresolvform} the auxiliary operators $\Lambda^B$ and $\Lambda^D$ appear instead of $H^B$ and $H^{D}$. The following Lemma shows that their regularized traces coincide.
\begin{lemm}
\label{substitute}
For $f \in C_c(\mathbb{R})$, 
\begin{equation}
\widetilde{\operatorname{tr}}(f(\Lambda^\bullet))= \widetilde{\operatorname{tr}}(f(H^\bullet)), \ \ \bullet = B, D.
\end{equation}
\end{lemm}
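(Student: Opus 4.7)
The plan is to exploit the unitary equivalence already established in \eqref{LambdaB} and \eqref{LambdaD}, namely $\Lambda^\bullet = P^{-1} H^\bullet P$, and observe that the Peierls multiplier $P$ commutes with the spatial cutoff $\indic_{B(R)}$, so the two regularized traces agree term-by-term before passing to the limit.

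The key observation is that $P$ defined in \eqref{Peierl} acts on $L^2(\mathcal{E})$ as a fiberwise multiplication operator $\psi_e(t)\mapsto e^{iA_e t}\psi_e(t)$. Since $\indic_{B(R)}$ is also a multiplication operator (by the characteristic function of a spatial set), the two commute: $[\indic_{B(R)},P]=0$. Combining this with the functional calculus identity $f(\Lambda^\bullet)=P^{-1}f(H^\bullet)P$ yields
\begin{equation*}
\indic_{B(R)}\,f(\Lambda^\bullet)\,\indic_{B(R)} = P^{-1}\bigl(\indic_{B(R)}\,f(H^\bullet)\,\indic_{B(R)}\bigr)P.
\end{equation*}

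By Lemma \ref{existencelemma} the operator $\indic_{B(R)}f(H^\bullet)\indic_{B(R)}$ is trace class (in fact the bound from \eqref{eq:zest} together with \eqref{eq:HeSj} makes $f(H^\bullet)$ locally trace class). Since $P$ is unitary, cyclicity of the trace gives
\begin{equation*}
\tr\bigl(\indic_{B(R)}\,f(\Lambda^\bullet)\,\indic_{B(R)}\bigr) = \tr\bigl(P^{-1}\,\indic_{B(R)}\,f(H^\bullet)\,\indic_{B(R)}\,P\bigr) = \tr\bigl(\indic_{B(R)}\,f(H^\bullet)\,\indic_{B(R)}\bigr).
\end{equation*}
Dividing by $|B(R)|$ and letting $R\to\infty$ via \eqref{conttrace} yields the claimed equality.

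There is no real obstacle here; the only point to check carefully is that $\indic_{B(R)}$ and $P$ are both honest multiplication operators on $L^2(\mathcal{E})$, so their commutator vanishes even though $B(R)$ may cut through edges. Once this is noted the argument is a one-line application of cyclicity, and it applies uniformly to both choices $\bullet=B$ and $\bullet=D$.
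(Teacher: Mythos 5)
Your proposal is correct and follows essentially the same route as the paper: conjugate by the unitary Peierls operator $P$, use that $P$ and $\indic_{B(R)}$ are both multiplication operators and hence commute, and then invoke invariance of the trace under unitary conjugation before dividing by $|B(R)|$ and passing to the limit. No substantive difference.
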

\begin{proof}
By the functional calculus, the unitary Peierls' substitution $P$ satisfies \eqref{LambdaB}
\begin{equation}
f(\Lambda^B) = P^{-1} f(H^B) P.
\end{equation}
Since $P$ and $P^{-1}$ are just multiplication operators
\begin{align}
\operatorname{tr}(\indic_{B(R)}\ f(\Lambda^B)\ \indic_{B(R)})
&=\operatorname{tr}(P\ \indic_{B(R)}\ f(\Lambda^B)\ \indic_{B(R)}\ P^{-1}) \nonumber\\
&=\operatorname{tr}(P\ \indic_{B(R)}\ P^{-1} f(H^B) P \ \indic_{B(R)}\ P^{-1}) \nonumber\\
&=\operatorname{tr}(\indic_{B(R)}\ f(H^B)\  \indic_{B(R)}). 
\end{align}
Lemma \ref{existencelemma} shows the existence of the regularized trace then.
\end{proof}

We now combine \eqref{eq:HeSj} with Krein's formula \eqref{Kreinresolvform} to see that
\begin{align}
f(\Lambda^B) 
&=\tfrac{1}{\pi} \int_{\mathbb{C}} \partial_{\overline{z}} \widetilde{f}(z)\left((\Lambda^D-z)^{-1} -\gamma(z) M(z)^{-1} \gamma(\overline{z})^*\right) dm (z) \nonumber \\
&= f(\Lambda^D) - \tfrac{1}{\pi} \int_{\mathbb{C}} \partial_{\overline{z}} \widetilde{f}(z)\gamma(z) M(z)^{-1}\gamma(\overline{z})^* dm(z).
\end{align}
Using Lemma \ref{substitute}, we can apply the operator $\widetilde{\operatorname{tr}}$ to the preceding equation and obtain
\begin{equation}
\label{centraleq}
\widetilde{\operatorname{tr}}f(H^B)= \widetilde{\operatorname{tr}}f(\Lambda^D)- \tfrac{1}{\pi}\widetilde{\operatorname{tr}}\int_{\mathbb{C}}\partial_{\overline{z}} \widetilde{f}(z)\gamma(z) M(z)^{-1} \gamma(\overline{z})^* dm (z) .
\end{equation}

In the following, we will systematically analyze the terms on the right side. 
We start with the term containing operator $\Lambda^D.$
\begin{lemm}
\label{Dirichlet-contribution}
The contribution $\widetilde{\operatorname{tr}}(f(\Lambda^D))$ of the Dirichlet operator $\Lambda^D$ is given by
\begin{equation}
\label{eq:Dsum}
\widetilde{\operatorname{tr}}f (\Lambda^D)=\tfrac{2}{\sqrt{3}} \sum_{\lambda \in \operatorname{Spec}(\Lambda^D_{(0,1)})} f(\lambda)
\end{equation}
where $\Lambda^D_{(0,1)}:H_0^1(0,1)\cap H^2(0,1) \subset L^2(0,1) \rightarrow L^2(0,1)$ with $\Lambda^D_{(0,1)}\psi:=-\psi''+V\psi.$
\end{lemm}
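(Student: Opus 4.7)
The plan is to use Lemma \ref{existencelemma} to localize the regularized trace to the fundamental cell and then exploit the fact that $\Lambda^D$ is a direct sum of identical copies of $\Lambda^D_{(0,1)}$.

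First, I apply Lemma \ref{existencelemma} with $\bullet = D$ to get
\[
\widetilde{\operatorname{tr}} f(\Lambda^D) = \tfrac{2}{3\sqrt 3}\, \tr \indic_{\mathcal{E}(W_\Lambda)} f(\Lambda^D).
\]
Since $\Lambda^D = \bigoplus_{e \in \mathcal{E}(\Lambda)} \Lambda^D_e$ by \eqref{LambdaD} and $\mathcal{E}(W_\Lambda)$ consists of the three edges $f,g,h$, the operator $\indic_{\mathcal{E}(W_\Lambda)} f(\Lambda^D)$ is just the orthogonal sum of $f(\Lambda^D_e)$ over $e \in \{f,g,h\}$. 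Hence
\[
\tr \indic_{\mathcal{E}(W_\Lambda)} f(\Lambda^D) = \sum_{e \in \mathcal{E}(W_\Lambda)} \tr f(\Lambda^D_e).
\]

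Next, I use the canonical chart $\kappa_e : e \to (0,1)$ from \eqref{chart}, which is a unitary identification $L^2(e) \simeq L^2(0,1)$ that intertwines $\Lambda^D_e$ with $\Lambda^D_{(0,1)} = -\partial_t^2 + V$ (the potential being the same on every edge, see \eqref{maximaloperator}). Thus each $\Lambda^D_e$ is unitarily equivalent to $\Lambda^D_{(0,1)}$, and consequently
\[
\tr f(\Lambda^D_e) = \tr f(\Lambda^D_{(0,1)}) = \sum_{\lambda \in \Spec(\Lambda^D_{(0,1)})} f(\lambda),
\]
where the right-hand side converges because $\Lambda^D_{(0,1)}$ has compact resolvent and $f \in C_c(\RR)$; only finitely many eigenvalues lie in $\supp f$. (Dirichlet eigenvalues on $(0,1)$ are simple, so no multiplicity factor appears.)

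Combining the three identical contributions from $e = f,g,h$ gives a factor of $3$, which cancels with the $3$ in the denominator of the prefactor:
\[
\widetilde{\operatorname{tr}} f(\Lambda^D) = \tfrac{2}{3\sqrt{3}} \cdot 3 \sum_{\lambda \in \Spec(\Lambda^D_{(0,1)})} f(\lambda) = \tfrac{2}{\sqrt{3}} \sum_{\lambda \in \Spec(\Lambda^D_{(0,1)})} f(\lambda).
\]
There is no real obstacle here; the only minor points to verify are the direct-sum decomposition (immediate from the definition of $\Lambda^D$) and the trace-class property of $f(\Lambda^D_{(0,1)})$ for $f \in C_c(\RR)$, which follows from discreteness of the Dirichlet spectrum on a bounded interval.
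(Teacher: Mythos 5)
Your proof is correct. It reaches the result by a slightly different route than the paper: you first localize to the fundamental cell via Lemma \ref{existencelemma} and then use the direct-sum structure $\Lambda^D=\bigoplus_e \Lambda^D_e$ with each summand unitarily equivalent to $\Lambda^D_{(0,1)}$, so the prefactor arises as $\tfrac{2}{3\sqrt3}\times 3$. The paper instead decomposes $f(\Lambda^D)$ spectrally into the (infinite-rank) eigenprojections $P_{\ker(\Lambda^D-\lambda)}$ and evaluates $\lim_R \tr(\indic_{B(R)}P_{\ker(\Lambda^D-\lambda)}\indic_{B(R)})/|B(R)|$ directly as the edge density $\tfrac{2}{\sqrt3}$ (one simple Dirichlet eigenfunction per edge, three edges per cell of area $\tfrac{3\sqrt3}{2}$); both arguments rest on exactly this counting. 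Two small points: Lemma \ref{existencelemma} is stated for $H^\bullet$ rather than $\Lambda^\bullet$, so strictly you should either note that its proof applies verbatim to $\Lambda^D$ (it commutes with the conjugated translations $S^B_\delta=P^{-1}T^B_\delta P$) or combine it with Lemma \ref{substitute}, observing that the Peierls multiplication operator $P$ commutes with $\indic_{\mathcal E(W_\Lambda)}$; and your remark that Dirichlet eigenvalues on $(0,1)$ are simple is exactly the paper's ``one non-degenerate eigenfunction per edge'' and is needed to avoid a multiplicity factor.
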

\begin{proof}
Let $\Lambda^D = \sum_{\lambda \in \operatorname{Spec}(\Lambda^D_{(0,1)})}^\infty \lambda P_{\ker(\Lambda^D-\lambda)}$ be the spectral decomposition of $\Lambda^D$ where \newline $P_{\ker(\Lambda^D-\lambda)}$ is the orthogonal projection onto the infinite dimensional space $\ker(\Lambda^D-\lambda)$. The spectral theorem implies $f(\Lambda^D) = \sum_{\lambda \in \operatorname{Spec}(\Lambda^D_{(0,1)})}^\infty f(\lambda) P_{\ker(\Lambda^D-\lambda)}$, which is a finite sum, as the eigenvalues of the Dirichlet operator tend to infinity.
Thus, since each edge carries precisely one non-degenerate eigenfunction for every eigenvalue $\lambda \in \operatorname{Spec}(\Lambda^D)$,
\begin{equation*} 
\begin{split} 
\widetilde{\operatorname{tr}}f(\Lambda^D) &= \lim_{R \rightarrow \infty} \frac{\operatorname{tr}\left(\indic_{B(R)}\ f(\Lambda^D)\ \indic_{B(R)} \right)}{\left\lvert B(R) \right\rvert}   \\
&=  \sum_{\lambda \in \operatorname{Spec}(\Lambda^D_{(0,1)})}f(\lambda) \lim_{R \rightarrow \infty} \frac{\operatorname{tr}\left(\indic_{B(R)}\   P_{\ker(\Lambda^D-\lambda)}\ \indic_{B(R)}\right)}{\left\lvert B(R) \right\rvert}  
 = \tfrac{2}{\sqrt{3}} \sum_{\lambda \in \operatorname{Spec}(\Lambda^D_{(0,1)})} f(\lambda),
\end{split}
\end{equation*}
with $\frac{2}{\sqrt{3}}$ being the ratio of edges per unit volume.
\end{proof}

We now move to the second term in \eqref{centraleq}.
In particulare we eliminate the gamma field in our expressions. 
\begin{lemm}
With $ M ( z ) $ defined in \eqref{eq:defM} we have 
\begin{equation*}
\label{Reduzierer}
\widetilde{\operatorname{tr}} \int_{\mathbb{C}} \partial_{\overline{z}} \widetilde{f}(z)
 \, \gamma(z)M(z)^{-1} \gamma(\overline{z})^*
dm(z) 
= \int_{\mathbb{C}} \partial_{\overline{z}}  \widetilde{f}(z)\, 
\widehat{\operatorname{tr}}_{\mathcal{V}(\Lambda)} 
\, \partial_z M ( z ) 
M( z )^{-1}  
dm (z) .
\end{equation*}
\end{lemm}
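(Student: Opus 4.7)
The plan is to rely on the identity $\partial_z M(z) = \gamma(\bar z)^{*}\gamma(z)$ recorded in \eqref{eq:derM} and move $\gamma(\bar z)^{*}$ through the trace by an appropriate cyclic-type argument, at the cost of passing from the continuous regularized trace $\widetilde\tr$ on $L^{2}(\mathcal E)$ to the discrete regularized trace $\widehat\tr_{\mathcal V(\Lambda)}$ on $\ell^{2}(\mathcal V)$.

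First, I would interchange the integral $\int_{\CC}\partial_{\bar z}\widetilde f(z)(\cdots)dm(z)$ with the regularized trace. This is justified by dominated convergence: on $\supp\partial_{\bar z}\widetilde f$ we have $|\Im z|$ bounded below on compact sets and $\partial_{\bar z}\widetilde f(z)=\mathcal O(|\Im z|^{\infty})$ by \eqref{eq:mather}, while $\|\gamma(z)\|$, $\|\gamma(\bar z)^{*}\|$ and $\|M(z)^{-1}\|$ grow only polynomially in $1/|\Im z|$, so the integrand defines a trace-class operator when localized to $B(R)$ with norm bounds uniform in $R$ divided by $|B(R)|$, exactly as in \eqref{eq:zest}.

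Next, for fixed $z\notin\Spec(\Lambda^{D})\cup\Spec(\Lambda^{B})$, I would establish the cyclicity identity
\begin{equation*}
\widetilde\tr\bigl(\gamma(z)M(z)^{-1}\gamma(\bar z)^{*}\bigr)
=\widehat\tr_{\mathcal V(\Lambda)}\bigl(\gamma(\bar z)^{*}\gamma(z)\,M(z)^{-1}\bigr).
\end{equation*}
Writing $T(z):=\gamma(z)M(z)^{-1}\gamma(\bar z)^{*}$ and $S(z):=M(z)^{-1}\gamma(\bar z)^{*}\gamma(z)=M(z)^{-1}\partial_{z}M(z)$, both $T(z)$ and $S(z)$ commute with the magnetic translations $T_{\gamma}^{B}$ (respectively with their discrete analogues acting on $\ell^{2}(\mathcal V)$), because $\gamma$, $M$ and $M^{-1}$ do. Using this magnetic covariance, I would mimic the computation in \eqref{eq:trde}--\eqref{eq:ex2}: decompose $B(R)$ into hexagonal cells indexed by $m_{R}\subset\ZZ^{2}$, use cyclicity of the ordinary trace on the trace-class operators $\indic_{\mathcal E(W_{\Lambda})}\gamma(z)M(z)^{-1}\gamma(\bar z)^{*}\indic_{\mathcal E(W_{\Lambda})}$ (which does move $\gamma(\bar z)^{*}$ to the front at the expense of changing the trace from $L^{2}(\mathcal E)$ to $\ell^{2}(\mathcal V)$), and control the boundary contribution by $|M_{R}\setminus m_{R}|=\mathcal O(R)=o(|B(R)|)$ exactly as in \eqref{eq:2te}. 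Substituting $\gamma(\bar z)^{*}\gamma(z)=\partial_{z}M(z)$ from \eqref{eq:derM} and noting $\tr(M^{-1}\partial_{z}M)=\tr(\partial_{z}M\,M^{-1})$ completes the argument.

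The main obstacle is the cyclicity step: cyclicity is not automatic for regularized traces because $\gamma(z)$ and $\gamma(\bar z)^{*}$ act between different spaces and neither of the operators being permuted is trace class on the whole space. The payoff of the covariance argument is that the only thing one really needs is that a single fundamental cell contributes a fixed finite quantity and that the number of cells intersecting $\partial B(R)$ is negligible relative to $|B(R)|$; this is precisely what Lemma \ref{existtra} already provides in an entirely analogous setting. Once this is in place the identity \eqref{eq:derM} converts the right-hand side into $\widehat\tr_{\mathcal V(\Lambda)}\partial_{z}M(z)\,M(z)^{-1}$ and, interchanging trace and integral back, yields the claim.
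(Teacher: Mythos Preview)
Your proposal is correct and follows essentially the same route as the paper: interchange $\widetilde\tr$ with the integral, reduce to the pointwise cyclicity identity $\widetilde\tr(\gamma(z)M(z)^{-1}\gamma(\bar z)^*)=\widehat\tr_{\mathcal V}(\gamma(\bar z)^*\gamma(z)M(z)^{-1})$, and then invoke \eqref{eq:derM}. The paper is more explicit at the one technical point you leave vague: it constructs the discrete magnetic translations $\sigma_\delta^B:=\pi S_\delta^B\gamma(z)$ on $\ell^2(\mathcal V)$ (where $S_\delta^B=P^{-1}T_\delta^B P$), proves they are unitary, and establishes the intertwining relations $S_\delta^B C=C\sigma_\delta^B$ and $\sigma_\delta^B D=DS_\delta^B$ for $C=\gamma(z)M(z)^{-1}$, $D=\gamma(\bar z)^*$; this is what turns ``$\gamma$, $M$, $M^{-1}$ commute with magnetic translations'' into a precise statement (since $\gamma(z)$ maps between different spaces it can only \emph{intertwine}, not commute) and is exactly what is needed to pass from $\tr_{\ell^2(\mathcal V)}D\indic_{\mathcal E(W_\Lambda)}C$ to $\tfrac{2}{3\sqrt3}\sum_{v\in\mathcal V(W_\Lambda)}[DC](v,v)=\widehat\tr_{\mathcal V}(DC)$.
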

\begin{proof}
The estimates in the proof of Lemma \ref{existtra} show that we can 
move $ \widetilde \tr $ inside of the integral on the left hand side. Together with 
\eqref{eq:derM} this means that it suffices to prove that 
\begin{equation}
\label{eq:trtr} \widetilde{\operatorname{tr}} \left(  \gamma(z)M(z)^{-1} \gamma(\overline{z})^*\right) 
= \widehat{\operatorname{tr}}_{\mathcal{V}(\Lambda)} \left(\gamma(\overline{z})^*\gamma(z)M(z)^{-1}\right), \ \  z \in \CC \setminus \RR .
\end{equation}
This identity can now be shown by verifying the conditions of the third statement in \cite[Proposition $7.1$]{HS1} with 
\begin{equation}  C:=\gamma(z)M(z)^{-1}, \ \ \ D:=\gamma(\overline{z})^* 
\label{eq:CD}
\end{equation}
but we present a different argument. 

Using the unitary Peierls operator $ P $ \eqref{Peierl} 
magnetic translations \eqref{MagTra}, and operators $ \pi $ and $ \gamma ( z ) $ from \eqref{eq:defpi},\eqref{gfield} we define modified magnetic translations (note that $ z \notin \RR $) as the following 
unitary operators:
\[  S_{\delta}^B:=P^{-1}T_{\delta}^BP \in \mathcal U ( L^2 ( \mathcal E) ) , \ \ \ \sigma_\delta^B := \pi S_\delta^B \gamma ( z ) \in 
\mathcal U ( \ell^2 ( \mathcal V ) ) ,
 \]
where we note that $ \sigma_\delta^B $ does not depend on $ z $.

To see that $ \sigma_\delta^B $ is unitary we first note that 
$ (\sigma_\delta^B)^{-1} = \sigma_{-\delta}^B $ and that it is an isometry
(see \eqref{MagTra} and \eqref{eq:transl} for definitions of 
$ T_\gamma^B $ and $ u^B( \gamma ) $):
\begin{equation*}
\begin{split}
\left\lVert \sigma^B_{\delta}w \right\rVert^2 &=\sum_{v \in \mathcal{V}(\Lambda)} \left\lvert (\pi S_{\delta}^B  \gamma(z)w)(v) \right\rvert^2 
= \sum_{v \in \mathcal{V}(\Lambda)} \left\lvert ( P^{-1}T_{\delta}^BP\gamma(z)w)(v) \right\rvert^2\\
&= \sum_{v \in \mathcal{V}(\Lambda)} \left\lvert (u^B(\delta)P\gamma(z)w)(v-\delta_1b_1-\delta_2b_2) \right\rvert^2 = \sum_{v \in \mathcal{V}(\Lambda)} \left\lvert (\gamma(z)w)(v) \right\rvert^2\\
&= \sum_{v \in \mathcal{V}(\Lambda)} \left\lvert w(v) \right\rvert^2=\left\lVert w \right\rVert^2.
\end{split}
\end{equation*}
We now claim that $ M ( z ) ^{-1} $ commutes with $ \sigma_\gamma^B$. In fact, since $H^B$ \eqref{magop} and $H^{D}$ \eqref{magopd} commute with magnetic translations $T_{\delta}^B$, we see that \eqref{Peierl}, $\Lambda^B$ and $\Lambda^D$ commute then with $S_{\delta}^B$.
The Krein formula \eqref{Kreinresolvform} then implies that 
$S^B_{\delta} \left(\gamma(z)M(z)^{-1} \gamma(\overline{z})^*\right) = \left(\gamma(z)M(z)^{-1} \gamma(\overline{z})^*\right)S^B_{\delta}$.
Multiplying with the inverse of $\gamma(z)$ and $\gamma(\overline{z})^*$ from both sides respectively, it follows that
\begin{equation}
\label{4.93}
\sigma^B_{\delta} M(z)^{-1} = \left(\pi S^B_{\delta} \gamma(z) \right) M(z)^{-1}=M(z)^{-1}\left( \gamma(\overline{z})^* S^B_{\delta} \pi^{*}\right) = M(z)^{-1}\sigma^B_{\delta}.
\end{equation}

In the notation of \eqref{eq:CD} we then see that 
\begin{align}
\label{firstidentity}
S_{\delta}^B C & =S_{\delta}^B \gamma( z ) M(z)^{-1} = \gamma(z)\sigma^B_{\delta} M(z)^{-1} = \gamma(z) M(z)^{-1} \sigma^B_{\delta}=C \sigma^B_{\delta}
\end{align}
and
\begin{equation}
\label{secondidentity}
\sigma^B_{\delta}  D= \left(\gamma(\overline{z})\sigma^B_{-\delta}\right)^*= 
\left( \gamma( \bar z ) \pi S_{-\delta}^B \gamma( \bar z ) \right)^* = 
\left( S_{-\delta}^B \gamma( \bar z ) \right)^* = 
\gamma(\overline{z})^* S_{\delta}^{B}=DS_{\delta}^{B}
\end{equation}

As in the proof of Lemma \ref{existtra}, 
\[ \begin{split} 
\widetilde \tr \, CD & =\tfrac{2}{3 \sqrt{3}}\tr_{L^2 ( \mathcal E)}  \indic_{\mathcal E ( W_\Lambda ) } C D =\tfrac{2}{3 \sqrt{3}}
\tr_{\ell^2 ( \mathcal V ) }  D \indic_{\mathcal E (W_\Lambda ) } C 
\\ & 
=\tfrac{2}{3 \sqrt{3}} \sum_{ \gamma \in \mathbb{Z}^2 } \sum_{v \in \mathcal{V}(W_\Lambda)}
\left[ \sigma_\gamma^B D \indic_{\mathcal E (W_\Lambda ) } C {\sigma_{-\gamma}^{B} } \right] ( v ) , \end{split} \]
where for an operator $ A $ on $ \ell^2 ( \mathcal V ) $ we write
$ A u ( \gamma ) = \sum_{ \alpha \in \mathcal V } [A] ( \gamma, \alpha ) u ( \alpha ) $. Using \eqref{firstidentity}, \eqref{secondidentity} and
\eqref{eq:transl} we then obtain 
\begin{equation*}
\begin{split}
 \widetilde \tr \, CD &=\tfrac{2}{3 \sqrt{3}} \sum_{\gamma \in \mathbb{Z}^2} \sum_{v \in \mathcal{V}(W_\Lambda)}
[ D \indic_{ \mathcal E ( W_\Lambda ) + \gamma_1 b_1 + \gamma_2 b_2 } 
C ] ( v, v ) \nonumber\\
&= \tfrac{2}{3 \sqrt{3}} \sum_{v \in \mathcal{V}(W_\Lambda)}[ D I_{L^2 (\mathcal E ) } C] ( v,v ) =\tfrac{2}{3 \sqrt{3}} \sum_{v \in \mathcal{V}(W_\Lambda)} [ D C ] ( v , v ). 
\end{split}
\end{equation*}
Since $ DC $ commutes with $ \sigma_{\delta}^B $ it is unitarily equivalent to a magnetic matrix which in view of Lemma \ref{trmagex} and a lattice identification means that
\begin{equation*}
 \widehat \tr_{\mathcal{V}(\Lambda)} DC = \tfrac{2}{3 \sqrt{3}} \sum_{v \in \mathcal{V}(W_{\Lambda})}[ D C ] (v,v ) .
 \end{equation*} 
This proves \eqref{eq:trtr} which as explained in the beginning concludes the proof.
\end{proof}

We can now combine Lemmas \ref{Dirichlet-contribution},\ref{Reduzierer}
and the Krein formula to obtain
\begin{lemm}
\label{Dirichletcalc}
Using  \eqref{eq:defM} and \eqref{discreter} define.
\begin{equation}
\label{eq:Wlambda}   W (z) := s_z ( 1 ) M (z ) = K_\Lambda  - \Delta ( z ) .
\end{equation}
Then for $ f \in C^\infty_{\rm{c}}  ( \RR) $ with an almost analytic extension \eqref{eq:mather}, $ \widetilde f \in C^\infty_{\rm{c}} ( \CC) $, 
\begin{equation}
\label{representation1}
\widetilde{\operatorname{tr}}(f(H^B))= - \tfrac{1}{\pi} \int_{\mathbb{C}} \partial_{\overline{z}}  \widetilde{f}(z)   \widehat \tr _{\mathcal{V}} \,  \partial_z W(z) W(z)^{-1} \,d m(z)+ \tfrac{2}{3 \sqrt{3}} \sum_{\lambda \in \operatorname{Spec}(\Lambda^D_{(0,1)})}f(\lambda).
\end{equation}
\end{lemm}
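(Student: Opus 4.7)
The plan is to combine equation \eqref{centraleq} with Lemmas \ref{Dirichlet-contribution} and \ref{Reduzierer}, and then transfer the prefactor $s_z(1)$ from $M(z)$ into $W(z)$ via a scalar residue computation. First I would substitute the two lemmas into \eqref{centraleq} to get
\[
\widetilde{\tr}\,f(H^B) = \tfrac{2}{\sqrt 3}\sum_{\lambda\in\Spec(\Lambda^D_{(0,1)})} f(\lambda) - \tfrac{1}{\pi}\int_{\CC} \partial_{\bar z}\widetilde f(z)\,\widehat\tr_{\mathcal V(\Lambda)}\bigl[\partial_z M(z)\,M(z)^{-1}\bigr]\,dm(z).
\]
From $W(z)=s_z(1)M(z)$, a direct computation yields the logarithmic-derivative identity
\[
\partial_z M(z)\,M(z)^{-1} = \partial_z W(z)\,W(z)^{-1} - \frac{\partial_z s_z(1)}{s_z(1)}\,\mathrm{Id},
\]
so by linearity $\widehat\tr_{\mathcal V(\Lambda)}\bigl[\partial_z M M^{-1}\bigr] = \widehat\tr_{\mathcal V(\Lambda)}\bigl[\partial_z W W^{-1}\bigr] - (\partial_z s_z(1)/s_z(1))\,\widehat\tr_{\mathcal V(\Lambda)}(\mathrm{Id})$. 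The density of $\mathcal V(\Lambda)$ is $4/(3\sqrt 3)$ (two vertices per hexagonal cell of area $3\sqrt 3/2$), hence $\widehat\tr_{\mathcal V(\Lambda)}(\mathrm{Id})=4/(3\sqrt 3)$.

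The key step is evaluating the residual scalar integral
\[
\tfrac{1}{\pi}\int_{\CC}\partial_{\bar z}\widetilde f(z)\,\frac{\partial_z s_z(1)}{s_z(1)}\,dm(z).
\]
Since $\Spec(\Lambda^D_{(0,1)})$ coincides with the zeros of $z\mapsto s_z(1)$, and a Wronskian/variational argument (differentiating $(-\partial_t^2+V)s_z=z s_z$ in $z$ and integrating against $s_z$) shows these zeros are simple with $\partial_z s_\lambda(1)\neq 0$, the function $\partial_z s_z(1)/s_z(1)$ is meromorphic with simple poles of residue $1$ at each $\lambda\in\Spec(\Lambda^D_{(0,1)})$. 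Writing it locally as $\sum_\lambda 1/(z-\lambda)$ plus a holomorphic remainder, the Pompeiu formula $f(\lambda)=-\tfrac{1}{\pi}\int\partial_{\bar z}\widetilde f(z)/(z-\lambda)\,dm(z)$ handles the polar part, while Stokes' theorem applied to the compactly supported function $\widetilde f\cdot(\text{holomorphic part})$ kills the remainder. The integral therefore equals $-\sum_{\lambda}f(\lambda)$, with the sum effectively finite on $\supp\widetilde f$.

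Collecting everything,
\[
\widetilde{\tr}\,f(H^B) = -\tfrac{1}{\pi}\int_{\CC}\partial_{\bar z}\widetilde f(z)\,\widehat\tr_{\mathcal V(\Lambda)}\bigl[\partial_z W(z)\,W(z)^{-1}\bigr]\,dm(z) + \Bigl(\tfrac{2}{\sqrt 3}-\tfrac{4}{3\sqrt 3}\Bigr)\sum_{\lambda\in\Spec(\Lambda^D_{(0,1)})} f(\lambda),
\]
and $\tfrac{2}{\sqrt 3}-\tfrac{4}{3\sqrt 3}=\tfrac{2}{3\sqrt 3}$ yields \eqref{representation1}. The only real obstacle is the residue bookkeeping for $\partial_z s_z(1)/s_z(1)$ (in particular simplicity of the zeros of $s_z(1)$ and the vanishing of the holomorphic contribution); the rest is bookkeeping of constants built on top of the preceding lemmas.
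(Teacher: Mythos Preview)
Your proof is correct and follows exactly the same route as the paper: combine \eqref{centraleq} with Lemmas \ref{Dirichlet-contribution} and \ref{Reduzierer}, pass from $M$ to $W$ via the logarithmic-derivative identity, use $\widehat\tr_{\mathcal V}(\mathrm{Id})=\tfrac{4}{3\sqrt 3}$, and evaluate the scalar integral $\tfrac{1}{\pi}\int\partial_{\bar z}\widetilde f\cdot(\partial_z s_z(1)/s_z(1))\,dm$ by the Cauchy/Pompeiu formula. Your extra justification that the Dirichlet zeros of $s_z(1)$ are simple is a nice touch the paper leaves implicit, and your sign bookkeeping in the residue step is in fact cleaner than the paper's displayed intermediate formula (which carries a sign slip, though the paper's final result is correct).
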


\begin{proof}
Since $ \widehat  \tr \, I_{ \ell^2( \mathcal V ) } = \frac4{ 3 \sqrt 3}  $ (the number of vertices per unit volume) we have 
\begin{equation}
\label{eq:M2W} \widehat \tr \, \partial_z M( z) M ( z)^{-1} = 
- \tfrac4{ 3 \sqrt 3}\, \partial_z s_z ( 1) s_z ( 1 )^{-1} + \widehat \tr \, \partial_z W ( z )  W ( z )^{-1} . 
\end{equation}
Since the zeros of $ z \mapsto s_z ( 1 ) $ are given by 
the eigenvalues of $ \Lambda^D_{(0,1)} $, the Cauchy formula \cite[(3.1.11)]{H1} shows that 
\[ \tfrac{1}{\pi} \int_{\mathbb{C}} \partial_{\overline{z}}  \widetilde{f}(z)  \partial_z s_z ( 1) s_z(1)^{-1}  d m(z) = \sum_{ \lambda \in 
\Spec ( \Lambda^D_{(0,1)} ) } f ( \lambda ) . \]
Combining this with \eqref{centraleq}, \eqref{eq:Dsum}, \eqref{Reduzierer} and \eqref{eq:M2W} proves \eqref{representation1}.
\end{proof}
\begin{rem}
The Dirichlet spectrum contribution has a straightforward interpretation in the absence of magnetic fields. In that case, there is precisely one hexagonal eigenstate per fundamental cell. The ratio of fundamental cells per ball $B(R)$ scales exactly like $\frac{2}{3\sqrt{3}}$ in the $R \rightarrow \infty$ limit which coincides with the pre-factor determined in \eqref{representation1}. 
\end{rem}

We now proceed to the reduction to the effective Hamiltonian,
\begin{equation}
\label{eq:effha}
Q^{\rm{w}} ( x, h D ) - \Delta( z) , \ \ Q^{\rm{w}}( x , h D )  := \frac{1}3 \left(\begin{matrix} 0& 1 + e^{ i x } + e^{ i h D_x } \\
1 + e^{ - i x } + e^{ - i h D_x } & 0 \end{matrix}\right) ,
\end{equation}
which is the semiclassical quantization of
\begin{equation}
\label{eq:symbolQ}
Q(x,\xi):= \tfrac{1}{3} \left( \begin{matrix} 0 && 1+e^{ix}+ e^{i\xi} \\ 1+e^{-ix}+e^{-i\xi} &&  0  \end{matrix} \right) 
\end{equation}

The regularized trace, $\widehat \tr_{\mathcal V} $, in \eqref{representation1} can be expressed in terms of the regularized trace from Definition \ref{pdotrace} of pseudodifferential operators $Q^{\rm{w}}$:
\begin{lemm}
In the notation of Definition \ref{pdotrace}, Lemma \ref{Dirichletcalc} and
\eqref{eq:effha} we have 
\begin{align}
\label{eq:W2Q}
& \widehat{\operatorname{tr}}_{\mathcal{V}} \,  W'(z) W(z)^{-1}  = -\tfrac{2}{3\sqrt{3}}  \Delta'( z) \, \widehat{\operatorname{tr}} \,   (Q^{\rm{w}} ( x, h D ) - \Delta( z))^{-1}, \ \ z \in \CC \setminus \RR .
\end{align}
\end{lemm}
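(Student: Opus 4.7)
The starting observation is that $W(z) = K_\Lambda - \Delta(z) I$ is affine in $\Delta(z)$, so $\partial_z W(z) = -\Delta'(z)\, I_{\ell^2(\mathcal V)}$ and consequently
\[ \partial_z W(z) W(z)^{-1} = -\Delta'(z)\, (K_\Lambda-\Delta(z))^{-1} . \]
It therefore suffices to establish the trace identity
\[ \widehat{\operatorname{tr}}_{\mathcal V}\,(K_\Lambda -\Delta(z))^{-1} = \tfrac{2}{3\sqrt3}\,\widehat{\operatorname{tr}}\,(Q^{\rm w}(x,hD)-\Delta(z))^{-1}, \qquad z\in\CC\setminus\RR.\]

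The plan is to transport $(K_\Lambda-\Delta(z))^{-1}$ through the chain of unitary equivalences provided by Lemmas \ref{eq:unit} and \ref{uniteq}, and then apply the homomorphism $\Theta$ of \eqref{star}. Specifically, Lemma \ref{eq:unit} gives $K_\Lambda = (UW^*)^* Q_\Lambda (UW^*)$ with $U$ a unit-modulus multiplication operator and $W$ the rearrangement $\ell^2(\mathcal V)\simeq \ell^2(\ZZ^2;\CC^2)$ that groups the two vertices $r_0+\gamma_1 b_1+\gamma_2 b_2$ and $r_1+\gamma_1 b_1+\gamma_2 b_2$ into a single index $\gamma\in\ZZ^2$. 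Lemma \ref{uniteq} then writes $Q_\Lambda=V^* A^h(a)V$ with $V$ again a multiplication by phases. Since multiplication by unit-modulus phases preserves the diagonal of kernels, and since the operator $(A^h(a)-\Delta(z))^{-1}$ exists (as $A^h(a)$ is self-adjoint and $z\notin\RR$) and, by the Beals-type characterization recorded in \eqref{eq:in1}, equals $A^h(g_z)$ for some $g_z\in\mathscr S(\ZZ^2)$, Lemma \ref{trmagex} and Definition \ref{pdotrace} together with the homomorphism property of $\Theta$ from \eqref{star} yield
\[ \widehat{\operatorname{tr}}_{\ZZ^2}\,(A^h(a)-\Delta(z))^{-1} = \operatorname{tr}_{\CC^2} g_z(0) = \widehat{\operatorname{tr}}\,\Op_h^{\rm w}(\widehat{g_z}) = \widehat{\operatorname{tr}}\,(Q^{\rm w}(x,hD)-\Delta(z))^{-1}, \]
where in the last step I use that $\Theta(a)=Q^{\rm w}(x,hD)$ and $\Theta(a\#_h g_z - \Delta(z)g_z) = \operatorname{id}$ forces $\Op_h^{\rm w}(\widehat{g_z})= (Q^{\rm w}-\Delta(z))^{-1}$.

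The only nontrivial point is the geometric factor $\tfrac{2}{3\sqrt3}$, which enters when comparing $\widehat{\operatorname{tr}}_{\mathcal V}$ to $\widehat{\operatorname{tr}}_{\ZZ^2}$: under the identification $W$, one index $\gamma\in\ZZ^2$ accounts for the two diagonal values of the kernel at the vertices $r_0+\gamma_1 b_1+\gamma_2 b_2$ and $r_1+\gamma_1 b_1+\gamma_2 b_2$, and these two vertices live in a fundamental cell of area $|b_1\wedge b_2|=\tfrac{3\sqrt3}{2}$. Thus for any $A$ on $\ell^2(\mathcal V)$ with unitarily-equivalent magnetic-matrix image $\tilde A = A^h(g)$ on $\ell^2(\ZZ^2;\CC^2)$,
\[ \widehat{\operatorname{tr}}_{\mathcal V} A = \lim_{R\to\infty}\frac{1}{|B(R)|}\!\!\sum_{v\in\mathcal V\cap B(R)}\!\!A(v,v) = \lim_{R\to\infty}\frac{|\{\gamma : \gamma_1 b_1+\gamma_2 b_2\in B(R)\}|}{|B(R)|}\operatorname{tr}_{\CC^2}g(0) = \tfrac{2}{3\sqrt3}\,\widehat{\operatorname{tr}}_{\ZZ^2} \tilde A, \]
the factor coming from $|\det(b_1,b_2)|^{-1}$. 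The expected main technical point, as just sketched, is this normalization bookkeeping between the vertex lattice and the $\ZZ^2$ indexing; once this is cleanly stated, chaining the above equalities with the identity $\partial_z W\cdot W^{-1}=-\Delta'(z)(K_\Lambda-\Delta(z))^{-1}$ gives \eqref{eq:W2Q}.
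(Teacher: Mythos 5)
Your proposal is correct and follows essentially the same route as the paper: identify $W(z)$ with the magnetic matrix $A^h(a-\Delta(z))$ via the unitaries of Lemmas \ref{eq:unit} and \ref{uniteq}, note that its inverse is again a magnetic matrix by \eqref{eq:in1}, and obtain the factor $\tfrac{2}{3\sqrt3}$ from the density of fundamental cells (two vertices per cell of area $\tfrac{3\sqrt3}{2}$), concluding with Lemma \ref{trmagex} and Definition \ref{pdotrace}. Your write-up merely makes explicit the bookkeeping that the paper leaves implicit (the trivial identity $\partial_z W\,W^{-1}=-\Delta'(z)(K_\Lambda-\Delta(z))^{-1}$ and the passage through $\Theta$), so no further comment is needed.
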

\begin{proof}
The explicit unitary transformation in Lemma \ref{uniteq} shows that
we can identify $ W ( z ) $ with a magnetic matrix $ A^h ( a - \Delta( z )  ) $ where $ a $ is given by \eqref{1}. The
limiting density  of vertices  in the hexagonal lattice is given by 
$\frac{4}{3 \sqrt{3}}$ and
half of this number corresponds to translates of each of $r_0$ and $r_1$.
Hence, 
\begin{equation*}
\widehat{\operatorname{tr}}_{\mathcal{V}} \left(W'(z) W(z)^{-1}\right)
= - \Delta'(z ) \tfrac{2}{3 \sqrt{3}}\widehat{\operatorname{tr}}_{\mathbb{Z}^2} \left( \left(A^h( a - \Delta ( z ) )\right)^{-1}\right)
\end{equation*}
We note that by \eqref{eq:in1} for $ z \notin \RR $,   $ (A^h( a - \Delta ( z ) ))^{-1} $ is
also a magnetic matrix. Formula \eqref{eq:W2Q} then follows from Lemma \ref{trmagex} and Definition \eqref{pdotrace}.
\end{proof}

Putting all this together we obtain the main result of this section:
\begin{prop}
\label{eq:fP2Q}
For $ f \in C^\infty_{\rm{c}}  ( \RR) $ with an almost analytic extension \eqref{eq:mather}, $ \widetilde f \in C^\infty_{\rm{c}} ( \RR) $, we have 
\begin{equation}
\label{tracediff}
\begin{split} 
\widetilde{\operatorname{tr}}(f(H^B)) & = \tfrac{2}{3 \sqrt{3} \pi } \int_{\mathbb{C}} {\partial_{\overline{z}}  \widetilde{f}(z)} \Delta'( z) \, \widehat{\operatorname{tr}} \,   (Q^{\rm{w}} ( x, h D ) - \Delta( z))^{-1}
d m (z) \\
& \ \ \ \ \ \ \ \ \ \ \ + \tfrac{2}{3 \sqrt{3} } 
 \sum_{\lambda \in \operatorname{Spec}(\Lambda^D_{(0,1)})} f(\lambda) ,
\end{split} \end{equation}
where $ Q ( x, \xi) $ is given by \eqref{eq:symbolQ} and $ \Delta ( z )$ by
\eqref{eq:Floq}.
\end{prop}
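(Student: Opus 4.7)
The plan is to combine the two preceding lemmas by direct substitution. Lemma \ref{Dirichletcalc} already gives the representation
\[
\widetilde{\operatorname{tr}}(f(H^B)) = -\tfrac{1}{\pi}\int_{\mathbb C}\partial_{\bar z}\widetilde f(z)\,\widehat{\operatorname{tr}}_{\mathcal V}\bigl(\partial_z W(z)\,W(z)^{-1}\bigr)\,dm(z) + \tfrac{2}{3\sqrt 3}\sum_{\lambda\in\Spec(\Lambda^D_{(0,1)})}f(\lambda),
\]
so the only thing left to do is to rewrite the integrand. For that we apply the identity \eqref{eq:W2Q}, which exchanges $\widehat{\operatorname{tr}}_{\mathcal V}\,\partial_z W(z)\,W(z)^{-1}$ for $-\tfrac{2}{3\sqrt 3}\,\Delta'(z)\,\widehat{\operatorname{tr}}\,(Q^{\rm w}(x,hD)-\Delta(z))^{-1}$; the sign combines with the outer $-1/\pi$ to produce the advertised prefactor $\tfrac{2}{3\sqrt 3\pi}$, and the Dirichlet sum is carried over unchanged. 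So the proof is essentially a one-line substitution.

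The only point requiring any care is that the substitution is performed pointwise under the integral in $z$, so one needs to confirm that both expressions are integrable against $\partial_{\bar z}\widetilde f(z)\,dm(z)$. For $z\in\CC\setminus\RR$ the operator $Q^{\rm w}(x,hD)$ is bounded and self-adjoint (since its Weyl symbol $Q(x,\xi)$ in \eqref{eq:symbolQ} is Hermitian-valued and periodic), hence $\|(Q^{\rm w}(x,hD)-\Delta(z))^{-1}\|=\mathcal O(|\Imag\Delta(z)|^{-1})$, and $\Delta$ is entire so $\Delta'(z)$ grows only polynomially. The factor $\partial_{\bar z}\widetilde f(z)=\mathcal O(|\Imag z|^{\infty})$ from \eqref{eq:mather} absorbs any blow-up as $\Imag z\to 0$ on the support of $\widetilde f$, which gives integrability both before and after the substitution.

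The only step that is not purely mechanical is checking that $\widehat{\operatorname{tr}}$ in Definition~\ref{pdotrace} (applied to $(Q^{\rm w}-\Delta(z))^{-1}$) genuinely coincides with $\widehat{\operatorname{tr}}_{\mathbb Z^2}$ of the corresponding magnetic matrix; but this is precisely the content of Lemma \ref{trmagex} combined with \eqref{eq:in1}, and it has already been used in the proof of the intermediate lemma \eqref{eq:W2Q}. Hence no further work is required, and the main obstacle --- a priori the passage from the abstract trace $\widehat{\operatorname{tr}}_{\mathcal V}$ on the vertex lattice to the Weyl-symbol trace on $\TT^2_*$ --- has already been handled upstream. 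The proof therefore reduces to inserting \eqref{eq:W2Q} into \eqref{representation1} and reading off \eqref{tracediff}.
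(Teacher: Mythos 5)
Your proposal is correct and is exactly the argument the paper intends: Proposition~\ref{eq:fP2Q} is stated as an immediate consequence of inserting \eqref{eq:W2Q} into \eqref{representation1}, with the sign of $-\tfrac{2}{3\sqrt 3}\Delta'(z)$ cancelling against the outer $-\tfrac1\pi$ to give $\tfrac{2}{3\sqrt3\pi}$, and the paper offers no further proof. Your additional remarks on integrability near $\RR$ and on the identification of $\widehat{\operatorname{tr}}$ with $\widehat{\operatorname{tr}}_{\ZZ^2}$ via Lemma~\ref{trmagex} and \eqref{eq:in1} are consistent with what has already been established upstream.
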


\section{Analysis of the effective Hamiltonian}
\label{anal}

We now study the effective Hamiltonian \eqref{eq:effha} 
for $ z $ near $ z_0 $ with $\Delta ( z_0 ) = 0 $. 
The goal is to obtain asymptotics of of the renormalized trace of $ ( Q^{\rm{w}} - \Delta(z))^{-1} $ -- see 
Theorem \ref{th:trac1} where for the moment we replace $\Delta(z)$ by $z$.
For that we use the strategy of Helffer--Sj\"ostrand outlined in 
\cite[\S 8]{HS2} but rather than follow \cite[\S 2]{HS0} and other numerous references
cited in \cite[\S 8]{HS2} we present direct arguments.

We start with some elementary analysis of the symbol $ Q $ given in 
\eqref{eq:symbolQ}.
Its determinant is given by 
$- {| 1 + e^{ i x} + e^{ i \xi } |^2}/9 $, 
and it vanishes at 
\[  ( x , \xi ) \in  \mathbb Z^2_* \pm \left( \tfrac{2 \pi } {3} , - \tfrac{ 2 \pi } 3 \right) , \]
that is, at the Dirac points. 

We consider neighbourhoods of $ \pm ( \frac{2 \pi } {3} , - \frac{ 2 \pi } 3 )  $ and make a symplectic change of variables:
\[  y = a ( x + \xi ) , \ \ \eta = b \left( \xi - x \pm \frac{4 \pi } 3 \right) , \ \ 
2 ab = 1, \]
we see that
\begin{equation}
\label{eq:norfo}
\begin{split}  1 + e^{ i x } + e^{ i \xi } & = 
c ( \eta \mp i y )  + \mathcal O ( y^2 + \eta^2 ) , \\
1 + e^{ -i x } + e^{ - i \xi } & = c ( \eta \pm i y )  + \mathcal O ( y^2 + \eta^2 )  , 
\end{split}
\end{equation}
where $ c = 3^{\frac14} 2^{-\frac12} $ and we chose
$ a = \pm 2^{-\frac34} 3^{-\frac14} $ and $ b = \pm 2^{-\frac14} 3^{\frac14} 
$. 

To study regularized traces of the resolvent of $ Q ( x, h D ) $ we introduce a localized operator with discrete spectrum near $ 0 $: Its Weyl symbol is given by 
\begin{equation}
\label{eq:defQ0}
\begin{gathered} 
Q_{0}  ( x, \xi ) := Q ( x, \xi) + 
\left(\begin{matrix} -1  + \chi_{0} ( x, \xi) & 0  \\
 0 & 1 - \chi_{0}   ( x, \xi) \end{matrix}\right), \\ 
\chi_{0} \in C^\infty_{\rm{c}}  ( \RR^2 ; [ 0, 1 ] ) , \ \ \chi_{0} ( \rho ) = \chi_{0} ( -\rho ) ,  \ \ 
\chi_{0} ( \rho ) = \left\{ \begin{array}{ll} 1, &\left\lVert \rho\right\rVert_\infty < \pi + \frac1{10}  , \\
0, & \left\lVert \rho\right\rVert_{\infty} > \pi + \frac2{10} , \end{array} \right.  
\end{gathered}
\end{equation}
where $ \rho = ( x, \xi ) $.

We observe that for any $ \delta > 0 $, there exists $ \epsilon > 0 $ such that 
\begin{equation}
\label{eq:ellQ0}  \det Q_{0} ( x, \xi )    < -  \epsilon  \ \text{ for } \left\lvert x \mp \frac{2 \pi}3\right\rvert + \left\lvert \xi \pm \frac{ 2 \pi } 3 \right\rvert > \delta . \end{equation}
This means that $ \det ( Q_{0}( x, \xi ) - z ) \in S ( 1 ) $ is elliptic (in the sense of \cite[\S 4.7.1]{ev-zw}) away from neighbourhoods
of $ \pm ( \frac{2 \pi } {3} , - \frac{ 2 \pi } 3)  $ and for $ z $ in a neighbourhood of $ 0 $. 

We also use microlocal weights defined as follows (see \cite[\S 8.2]{ev-zw}):
\begin{equation}
\label{eq:defG}
\begin{gathered} 
G ( x, \xi ) = \frac{1}2 \log ( 1 + \xi^2 + x^2 ) , \ \ G^{\rm{w}} = 
G^{\rm{w}} ( x, h D ) , \\ e^{ \pm N G^{\rm{w}} } = s_N ( x, h D , h ) , \ \
s_N \in S ( ( 1 + \xi^2 + x^2 )^{\pm N/2} ) .
\end{gathered}
\end{equation}

\begin{prop}
\label{p:specQ0}
For $ \delta_{0} > 0 $ small enough, the spectrum of $ Q_{0}^{\rm{w}} ( x, h D ) $ in 
$ [ - \delta_{0} , \delta_{0} ] $ is discrete and 
\begin{equation}
\label{eq:specQ0}
\Spec ( Q_{0}^{\rm{w}} ( x, h D ) ) \cap [ - \delta_{0} , \delta_{0} ] 
= \{  \kappa ( n h , h ) + \mathcal O (h^\infty) : n \in \mathbb Z \} \cap [ - \delta_{0} , \delta_{0} ] , 
\end{equation}
with eigenvalues of multiplicity $ 2 $,   $ \kappa ( - \zeta, h ) = - \kappa ( \zeta , h )  $, and
\begin{equation}
\label{eq:g2F}  
\begin{gathered} F ( \kappa ( \zeta, h )^2 , h ) = |\zeta| + \mathcal O 
(h^\infty)  , \ \ 
F ( \omega , h ) \sim  F_0 ( \omega ) + \sum_{j=2}^\infty h^j F_j ( \omega ) , \ \ F_j \in C^\infty ( \RR ) , \\ 
F_0 ( \omega  ) =  \frac{1}{4 \pi} \int_{ \gamma_\omega } \xi dx , \ \ 
\gamma_\omega = \left\{ ( x, \xi) \in \mathbb{T}^2_*: \frac{| 1 + e^{ix } + e^{i\xi} |^2}9 = \omega \right\}, \ \ F_j ( 0 ) = 0 , 
\end{gathered} 
\end{equation}
where $ \gamma_\omega $ is oriented clockwise in the $ ( x, \xi ) $ plane. 

Moreover, the orthonormal set of eigenfunctions,  $ (u_n^+ ( h ))
_{ n \in \mathbb Z } \cup ( u_n^-(h) )_{ n \in \mathbb Z } $, satisfies 
\begin{equation}
\label{eq:QM}
\begin{gathered}
Q_{0}^{\rm{w}} ( x, h D , z_0 ) u_n^\pm (h) = \kappa ( nh, h ) u_n^\pm ( h ),  \ \ \ \WF_h ( u_n^\pm ) \subset {\rm{nbhd} } 
\left( \pm \left( \frac{2 \pi } {3} , - \frac{ 2 \pi } 3 \right) \right),  
\end{gathered}
\end{equation}
and, for all $ N $, 
\begin{equation}
\label{eq:unloc}
\begin{split}
& \|  ( 1 - \chi_{0}^{\rm{w}} ( x, h D ) ) e^{ N G^{\rm{w}} ( x, h D ) }  u^\pm_n ( h ) \|
= \mathcal O_N  ( h^\infty ) , \\
& \|   e^{ N G^{\rm{w}} ( x, h D ) } ( 1 - \chi_{0}^{\rm{w}} ( x, h D ) )  u^\pm_n ( h ) \|
= \mathcal O_N  ( h^\infty ) ,
\end{split}
\end{equation}
where $ \chi_{0} $ is defined in \eqref{eq:defQ0} and $ G$ in \eqref{eq:defG}.
\end{prop}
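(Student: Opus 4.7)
The plan is to combine semiclassical ellipticity (giving discreteness, microlocalization and decay), chiral and antipodal symmetries of $Q$, and scalar Bohr--Sommerfeld quantization applied to the squared operator. By \eqref{eq:ellQ0} the symbol $Q_{0}(x,\xi)-z$ lies in $S(1)$ and is elliptic for $|z|\leq\delta_{0}$ outside any fixed neighborhood $U$ of the two Dirac points $\pm(\tfrac{2\pi}{3},-\tfrac{2\pi}{3})$, so a standard semiclassical parametrix (see \cite[Chapter~4]{ev-zw}) furnishes a microlocal two-sided inverse of $Q_{0}^{\rm w}-z$ outside $U$. This shows that $\Spec(Q_{0}^{\rm w})\cap[-\delta_{0},\delta_{0}]$ is discrete and that every corresponding eigenfunction $u$ satisfies $\WF_{h}(u)\subset U$. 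Conjugation by the weights $e^{\pm NG^{\rm w}}$ preserves the ellipticity outside $U$ (cf.\ \cite[Chapter~8]{ev-zw}, \cite[Chapter~8]{D-S}), and iterating the weighted elliptic estimate produces the two bounds of \eqref{eq:unloc}.

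For the symmetries, the perturbation $Q_{0}-Q=\operatorname{diag}(\chi_{0}-1,1-\chi_{0})$ vanishes on $\supp\chi_{0}\supset U$, so microlocally near the Dirac points $Q_{0}^{\rm w}=Q^{\rm w}$ modulo $\mathcal O(h^{\infty})$. The antidiagonal form of $Q$ gives $JQ^{\rm w}J=-Q^{\rm w}$ with $J:=\operatorname{diag}(1,-1)$, which yields the spectral symmetry $\kappa(-\zeta,h)=-\kappa(\zeta,h)$. The symplectic involution $(x,\xi)\mapsto -(x,\xi)$ on $\mathbb{T}^{2}_{*}$ exchanges the two Dirac points while preserving $Q$, and the corresponding metaplectic operator intertwines the two microlocal restrictions of $Q_{0}^{\rm w}$ to neighborhoods of the respective Dirac points, giving multiplicity exactly~$2$.

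To obtain the quantization condition, conjugate $Q^{\rm w}$ microlocally near one Dirac point by the metaplectic operator implementing \eqref{eq:norfo}, and square: $(Q_{0}^{\rm w})^{2}=P^{\rm w}\otimes I$ modulo $\mathcal O(h^{\infty})$, where $P(x,\xi)=\tfrac{1}{9}|1+e^{ix}+e^{i\xi}|^{2}$ is a Morse function on $\mathbb{T}^{2}_{*}$ with non-degenerate minimum $0$ at each Dirac point. Scalar semiclassical Bohr--Sommerfeld near this minimum (see \cite[Chapter~8]{D-S}) produces a smooth $F(\omega,h)\sim\sum_{j\geq 0}h^{j}F_{j}(\omega)$ with $F_{0}(\omega)=\tfrac{1}{4\pi}\int_{\gamma_{\omega}}\xi\,dx$, such that the nonnegative eigenvalues $E_{n}(h)$ of $P^{\rm w}$ satisfy $F(E_{n},h)=(n+\tfrac{1}{2})h+\mathcal O(h^{\infty})$ for $n\in\mathbb{N}_{0}$.

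The main obstacle is to remove both the $\tfrac{1}{2}$-Maslov shift and the subprincipal correction $hF_{1}$ from this condition. This is the supersymmetric feature of the Dirac operator: the chiral identity $JQ^{\rm w}J=-Q^{\rm w}$ together with the normal form \eqref{eq:norfo} force $Q^{\rm w}$ to possess an exact zero mode, so $E=0$ is an eigenvalue of $P^{\rm w}$ to all orders in $h$; matching $F(0,h)=0$ against $(n+\tfrac{1}{2})h+hF_{1}(0)+\cdots$ at $n=0$ eliminates both the Maslov shift and the $hF_{1}$ term identically, and forces $F_{j}(0)=0$ for $j\geq 2$. The quantization then reads $F(E_{n},h)=nh+\mathcal O(h^{\infty})$, and setting $\kappa(nh,h):=\sgn(n)\sqrt{E_{|n|}(h)}$ yields \eqref{eq:specQ0} together with \eqref{eq:g2F}.
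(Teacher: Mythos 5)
Your overall strategy coincides with the paper's: ellipticity away from the Dirac points for discreteness and localization, reduction to one Dirac point by the antipodal symmetry, passage to a squared scalar operator with a potential well, and Bohr--Sommerfeld quantization fixed by a zero mode. However, the key step --- deriving \eqref{eq:g2F} with no Maslov $\tfrac12$ and with $F_j(0)=0$ --- contains a genuine error. You assert that $(Q_0^{\rm{w}})^2 = P^{\rm{w}}\otimes I$ modulo $\mathcal O(h^\infty)$ with $P=\tfrac19|1+e^{ix}+e^{i\xi}|^2$. This fails at exactly the order that matters: writing $Q^{\rm{w}}$ in terms of $\Lambda_\pm^{\rm{w}}$ as in the paper, the square is ${\rm{diag}}(\Lambda_+^{\rm{w}}\Lambda_-^{\rm{w}},\Lambda_-^{\rm{w}}\Lambda_+^{\rm{w}})$, and each entry differs from $P^{\rm{w}}$ (and the two entries differ from each other) by the term $\pm\tfrac{h}{2i}\{\Lambda_+,\Lambda_-\}^{\rm{w}}+\mathcal O(h^2)$, which by \eqref{eq:Poison} is elliptic and of a definite sign at the Dirac point. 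This subprincipal shift is precisely what moves the scalar Bohr--Sommerfeld condition from $(n+\tfrac12)h$ to $nh$ (the supersymmetric $a^*a$ versus $\tfrac12(a^*a+aa^*)$ phenomenon). Having discarded it, your ``matching'' argument cannot recover it: the operator $P^{\rm{w}}$ whose spectrum you quantized does \emph{not} have $0$ as an eigenvalue (its ground state sits at height $\asymp h$), and the function $F$ together with the constant $F_1$ are already determined by $P$ in the Helffer--Robert construction --- they are not free parameters to be fixed by postulating a zero mode. As written, you are reconciling two mutually contradictory statements rather than deriving the quantization condition.

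A second, related gap: the existence of the exact zero mode of $Q^{\rm{w}}$ does not follow from the chiral identity $JQ^{\rm{w}}J=-Q^{\rm{w}}$, which only forces the spectrum to be symmetric about $0$ and is equally consistent with there being no eigenvalue at $0$ (e.g.\ spectrum $\pm c\sqrt{(n+\tfrac12)h}$). What produces the zero mode is the sign condition $\{\Re\Lambda_+,\Im\Lambda_+\}<0$ in \eqref{eq:Poison}, which makes $\Lambda_+^{\rm{w}}$ have a one-dimensional microlocal kernel near $(\tfrac{2\pi}3,-\tfrac{2\pi}3)$ (this is \eqref{eq:u0}), while the H\"ormander bracket condition \eqref{eq:laze} separates all other eigenvalues of $Q^{\rm{w}}$ from $0$ by $c\sqrt h$. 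You need this input before the condition $F_j(0)=0$ can be extracted; once you have it, the clean route is to apply Helffer--Robert directly to $H_+=\Lambda_+^{\rm{w}}\Lambda_-^{\rm{w}}$ with its subprincipal symbol included, obtaining $F(\lambda_n,h)=nh$ and then $F_j(0)=0$ from $\lambda_0=\mathcal O(h^\infty)$, as the paper does. A minor further point: microlocal ellipticity outside a neighbourhood of the Dirac points does not by itself yield discreteness of the spectrum; one needs a globally elliptic comparison operator such as $Q_1=Q+{\rm{diag}}(-1,1)$ together with compactness of $\chi_0^{\rm{w}}$ and analytic Fredholm theory, but this is easily repaired.
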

\begin{proof}
We start by showing that for $ \delta_{0} $ small enough the spectrum 
of $ Q_{0}^{\rm{w}} $ in $ [ - \delta_{0} , \delta_{0} ]$ is discrete and that
the eigenfunctions are localized to neighbourhoods in the sense of 
\eqref{eq:QM} and \eqref{eq:unloc}. 
For that we define $ Q_{1} := Q + {\rm{diag}} ( -1 , 1 ) $. Then 
$ Q_{0} = Q_{1} + {\rm{diag}} ( \chi_{0} , - \chi_{0} ) $ and $ Q_{1} - z $ is 
elliptic in $ S ( 1 ) $ for $ |z| $ small enough. 
That implies that for $ 0 < h < h_0 $, $ ( Q_{1}^{\rm{w}} - z)^{-1} 
= \mathcal O ( 1 )_{L^2 \to L^2}$ in $h$ --  see \cite[\S 4.7.1]{ev-zw}. It follows
that 
\[  Q_{0}^{\rm{w}} - z = ( Q_{1}^{\rm{w}} - z ) ( \operatorname{id} + K ( z ) ) ,  \ \ 
K ( z ) := (Q_{1}^{\rm{w}} - z )^{-1} {\rm{diag}} ( \chi_{0}^{\rm{w}} , - \chi_{0}^{\rm{w}}  ) 
. \] 
Since $ \chi_{0} ^{\rm{w}} $ is a compact operator on $ L ^2 $ (see \cite[Theorem 4.26]{ev-zw}) we can use analytic Fredholm theory (see
\cite[Theorem D.4]{ev-zw}) to show that 
$ ( \operatorname{id} + K ( z ) )^{-1} $ is meromorphic. That shows that $ (Q_{0}^{\rm{w}} - z )^{-1} $ is meromorphic for $ |z| $ small, that it has a discrete set of poles there, which in turn means that the spectrum near $ 0 $ is discrete.

The comment after \eqref{eq:ellQ0} and \cite[\S 8.4]{ev-zw} give the
localization of eigenfunctions in \eqref{eq:QM}. To see \eqref{eq:unloc} 
we consider the conjugated operator
\[ Q_G^{\rm{w}} - z :=  e^{  N G^{\rm{w}} } ( Q^{ \rm{w}}_{0} - z ) e^{ - NG^{\rm{w}} } .\]
From \cite[Theorems 4.18 and 8.6]{ev-zw} we see that $ Q_G \in S ( 1 ) $ 
and that $ Q_G = Q_{0} + \mathcal O_N ( h)_{ S(1)} $. Hence $ Q_G  - z$ is elliptic where $ Q_{0} - z$ is elliptic and in particular near the support of 
$ 1 - \chi_{0} $. Since $ (Q_G^{\rm{w}} - z ) e^{ N G^{\rm{w}} }
u  = 0$, $ z \in \Spec ( Q_{0}^{\rm{w}} ) $, $ u $ an eigenfunction, the first estimate in \eqref{eq:unloc} follows. To see the second estimate we use 
the wave front set estimate \eqref{eq:QM} and the fact that the 
essential support (see \cite[\S 8.4]{ev-zw}) of the commutator of  
$ \chi_{0}^{\rm{w}} $ and $ e^{ s G^{\rm{w}}} $ is supported away from 
$ \WFh ( u_n^\pm ) $.

This means that to {\em approximate} eigenvalues of $ Q_{0}^{\rm{w}} ( x, h D ) $ we need to find {\em all } microlocal solutions $ ( u , z ) $ (that is solutions modulo
$ \mathcal O ( h ^\infty ) $) such that $ u $ satisfies \eqref{eq:unloc} and  
\begin{equation}  
\label{eq:apeq} ( Q^{\rm{w}} - z ) u = \mathcal O (h^\infty ) , \ \ \  \WF_h ( u ) \subset {\rm{nbhd}} \left( \pm \left( \frac{2\pi}3, - \frac{ 2 \pi}3 \right) \right) .
\end{equation}
Here we replaced $ Q_{0} $ by $ Q $ since the corresponding operators
are microlocally the same near $ \pm ( \frac{2\pi}3, - \frac{ 2 \pi}3 )$
(see \cite[\S 8.4.5]{ev-zw} for a discussion of this concept).
Since $ Q_{0}^{\rm{w}} $ is self-adjoint the uniqueness of microlocal solutions gives uniquess of eigenfunctions as they have to be orthogonal.

We have 
\[  Q^{\rm{w}} = \left(\begin{matrix} 0 & \Lambda^{\rm{w}}_+ \\
\Lambda^{\rm{w}}_- & 0 \end{matrix}\right) , \ \ \Lambda_{\pm} ( x , \xi ) := \frac{1 + e^{ \pm i x } + e^{ \pm i \xi }}3 , \ \ 
(\Lambda_\pm^{\rm{w}} )^* = \Lambda_{\mp}^{\rm{w}} . \]
Because of the symmetry 
$ ( x , \xi ) \to ( - x, - \xi ) $
we will work microlocally near  $  ( \frac{2\pi}3, - \frac{ 2 \pi}3 )  $.
At that point
\eqref{eq:norfo} shows that 
the Poisson brackets of $ \Lambda_{\pm } $ satisfy
\begin{equation}
\label{eq:Poison}  \{ \Re \Lambda_+ , \Im \Lambda_+ \} < 0 , \ \  \ \{ \Re \Lambda_- , \Im \Lambda_- \} >  0 , \ \ \
 \frac 1 i \{ \Lambda_+, \Lambda_- \} > 0. 
 \end{equation}
 The last inequality is also known as H\"ormander's hypoellipticity condition.
Using \cite[\S\S 12.4 and 12.5]{ev-zw} we see that the first
two inequalities in \eqref{eq:Poison} show that 
there exist microlocally unique solutions 
\begin{equation}
\label{eq:u0}    \Lambda_+^{\rm{w}} u_0 = \mathcal O ( h^\infty ) , \ \ \
\WFh 
( u ) \subset {\rm{nbhd}} \left( \left( \frac{2\pi}3, - \frac{ 2 \pi}3 \right)\right) .  
\end{equation}
On the other hand the last inequality in \eqref{eq:Poison} shows that
\begin{equation}
\label{eq:laze} 
\WFh ( u ) \subset {\rm{nbhd}} \left(\left( \frac{2\pi}3, - \frac{ 2 \pi}3 \right)\right)
\ \Longrightarrow \ 
\langle \Lambda_+^{\rm{w}} \Lambda_- ^{\rm{w}} u , u 
\rangle \geq c_0 h \| u \|^2 ,  \end{equation}
see for instance the proof of \cite[Theorem 7.5]{ev-zw}. 
This
characterizes the microlocal kernel of $ Q^{\rm{w}} $ near  
$ ( \frac{2\pi}3, - \frac{ 2 \pi}3 ) $. Since
$ (Q^{\rm{w}})^*  Q^{\rm{w}} = {\rm{diag}} ( \Lambda_-^{\rm{w}} \Lambda_+^{\rm{w}}, 
\Lambda_+^{\rm{w}} \Lambda_-^{\rm{w}} ) $, this means that all solutions
to \eqref{eq:apeq} other than the unique solution $ ( 0, u_0 ) $ satisfy
$ |z | \geq c \sqrt h $. That gives the correspondence with microlocal 
solutions $ w $ (satisfying \eqref{eq:unloc}) to
\begin{equation}
\label{eq:Hpl} 
\begin{gathered} 
 H_+ w =  \lambda  w  , \ \ \WFh ( w ) \subset {\rm{nbhd}} \left(\left( \frac{2\pi}3, - \frac{ 2 \pi}3\right)\right), \ \ 
H_+ :=\Lambda_+^{\rm{w}} \Lambda_-^{\rm{w}} , \\
\left(\begin{matrix} 0 & \Lambda^{\rm{w}}_+ \\
\Lambda^{\rm{w}}_- & 0 \end{matrix}\right) \left(\begin{matrix} u_1 \\ u_2 \end{matrix}\right) 
= z \left(\begin{matrix} u_1 \\ u_2 \end{matrix}\right) , \ \ \ \WFh ( u_j ) 
\subset {\rm{nbhd}} \left(\left( \frac{2\pi}3, - \frac{ 2 \pi}3\right)\right) , \\
z = \pm \sqrt \lambda, \ \ u_1= w , \ \  u_2 = z^{-1} \Lambda_-^{\rm{w}} w . 
\end{gathered}
\end{equation}
Recalling \eqref{eq:norfo} we see that $ H_+ $, microlocally near  
$  ( \frac{2\pi}3, - \frac{ 2 \pi}3 )$ has the structure of a potential 
well and the distribution of eigenvalues near $ 0 $ has been extensively studied. Following earlier works of Weinstein \cite{we} and Colin de 
Verdi\`ere \cite{cdv} the semiclassical version was given by Helffer--Robert \cite{hr} and a clear outline can be found in \cite[\S 8, Case II, p.292]{BGHKS}. In particular, there exists a function $ F $ with an expansion 
$ F ( \omega, h ) \sim F_0 ( \omega ) + h F_1  +  h^2 F_2 ( \omega ) \cdots $, where $ F_1 $ is a constant (see \cite[Corollaire (3.15)]{hr})
such that $ \mathcal O ( h^\infty ) $ 
quasimodes of $ H_+ $ are given by the quantization condition 
$ F ( \lambda_n ( h ) , h ) = n h $, $ n = 0, 1 , \cdots $. 
Since we have shown that $ \lambda_0 ( h ) = \mathcal O ( h^\infty ) $
we obtain that $ F_j ( 0 ) = 0$ for all $ j $.
That gives \eqref{eq:g2F}.
\end{proof}

The spectrum and eigenfunctions of $ Q_{0}^{\rm{w}} $ will now be used 
to describe $ ( Q^{\rm{w}} - z )^{-1} $ for $ |\Im z | > h^{M} $ for any fixed
 $M $.

We first show that away from the spectrum of $ Q^{\rm{w}}_{0} $,
$  Q^{\rm{w}} - z $ is invertible. The proof is a simpler 
version of the proof of Proposition \ref{l:Gru} and the estimates are similar.

\begin{lemm}
\label{l:awayspec}
Let $ 0 < \delta_{1} < \delta_{0} $ 
and suppose that $ z \in [ - \delta_{1} , \delta_{1} ] - i [ -1,1] $ satisfies
\[   d ( z , \Spec( Q_{0}^{\rm{w}} ( x, h D ) ) ) > h^{N_0} , \]
for some fixed $ N_0 $. Then for $ 0 < h < h_0 $, 
\[  ( Q^{\rm{w}} ( x, h D ) - z )^{-1} = \mathcal O (  d ( z , \Spec( Q_{0}^{\rm{w}} ( x, h D ) ))^{-1}  )_{ L^2 \to L^2 } . \]
\end{lemm}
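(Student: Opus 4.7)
The plan is to set up a Grushin problem relating $Q^{\rm{w}}-z$ to the self-adjoint operator $Q_0^{\rm{w}}-z$, whose spectrum in $[-\delta_{0},\delta_{0}]$ is discrete by Proposition \ref{p:specQ0}. Writing $Q^{\rm{w}}=Q_0^{\rm{w}}+R^{\rm{w}}$ with $R=\mathrm{diag}(1-\chi_{0},-(1-\chi_{0}))$, the symbol of $R^{\rm{w}}$ vanishes wherever $\chi_{0}=1$, and the localization \eqref{eq:unloc} yields the key estimate
\[
\|R^{\rm{w}} u_n^\pm\|_{L^2} = \mathcal O(h^\infty).
\]
Fix $\delta_{0}'\in(\delta_{1},\delta_{0})$, let $\kappa_1,\dots,\kappa_N$ be the eigenvalues of $Q_0^{\rm{w}}$ in $[-\delta_{0}',\delta_{0}']$ with orthonormal eigenbasis $\{u_n^+,u_n^-\}$, and let $\Pi$ be the associated spectral projection. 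Define $R_-(c_n^\pm):=\sum c_n^\pm u_n^\pm$ and $R_+:=R_-^*$, so that $R_+R_-=I_{\mathbb{C}^{2N}}$ and $R_-R_+=\Pi$.

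I would then form the Grushin matrix
\[
\mathcal P(z):=\begin{pmatrix} Q^{\rm{w}}-z & R_- \\ R_+ & 0 \end{pmatrix}: L^2\times\mathbb{C}^{2N}\to L^2\times\mathbb{C}^{2N}
\]
and seek an approximate inverse $\mathcal E(z)$ with Schur block $E_{-+}(z):=\mathrm{diag}(z-\kappa_n)$, injections $E_+(z):=R_-$, $E_-(z):=R_+$, and bulk block $E(z)$ approximating $(Q^{\rm{w}}-z)^{-1}$ on $\mathrm{ran}(I-\Pi)$. The computation
\[
(Q^{\rm{w}}-z)R_- + R_-\mathrm{diag}(z-\kappa_n) = R^{\rm{w}} R_-
\]
combined with the $\mathcal O(h^\infty)$ estimate above makes the top-right block of $\mathcal P(z)\mathcal E(z)-I$ of order $\mathcal O(h^\infty)$ in operator norm, while the bottom-left block $R_+E(z)$ vanishes as soon as $E(z)$ maps into $\mathrm{ran}(I-\Pi)$. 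Once $\mathcal P(z)$ is shown invertible with $\|\mathcal P(z)^{-1}\|=\mathcal O(1)$, the Schur complement formula
\[
(Q^{\rm{w}}-z)^{-1} = E(z) - E_+(z)\,E_{-+}(z)^{-1}\,E_-(z)
\]
delivers the claim: $\|E_{-+}(z)^{-1}\|=(\min_n|z-\kappa_n|)^{-1}$ coincides with $d(z,\Spec(Q_0^{\rm{w}}))^{-1}$ for $h$ small, since the far spectrum of $Q_0^{\rm{w}}$ is at distance at least $\delta_{0}'-\delta_{1}$ from $z$ and contributes only $\mathcal O(1)$.

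The main obstacle is the construction of the bulk parametrix $E(z)$. The naive choice $(Q_0^{\rm{w}}-z)^{-1}(I-\Pi)$ is bounded by $(\delta_{0}'-\delta_{1})^{-1}$ but produces the uncontrolled error $R^{\rm{w}}(Q_0^{\rm{w}}-z)^{-1}(I-\Pi)=\mathcal O(1)$ in the top-left block of $\mathcal P(z)\mathcal E(z)-I$, precluding a direct Neumann series. To fix this I would refine $E(z)$ via a microlocal partition of unity: in the region $\{\chi_{0}=1\}$ containing the Dirac points, $Q^{\rm{w}}$ and $Q_0^{\rm{w}}$ agree microlocally so $(Q_0^{\rm{w}}-z)^{-1}(I-\Pi)$ is accurate there; outside that region one exploits that $\det(Q(x,\xi)-zI)=z^2-|1+e^{ix}+e^{i\xi}|^2/9$ is bounded away from $0$ uniformly for small $|z|$ and $(x,\xi)$ bounded away from the Dirac points modulo $(2\pi\mathbb{Z})^2$, using the magnetic translation invariance of $Q^{\rm{w}}$ and the $S(1)$-calculus of \cite[Chapter 4]{ev-zw} to produce a symbolic parametrix. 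Gluing the two pieces gives $\mathcal P(z)\mathcal E(z)=I+\mathcal O(h^\infty)$ in operator norm, and a Neumann series then inverts $\mathcal P(z)$. This gluing is the "simpler version of the proof of Proposition \ref{l:Gru}" alluded to by the authors.
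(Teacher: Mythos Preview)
Your Grushin set-up has a structural gap that prevents the argument from closing. The operator $Q^{\rm{w}}$ has a \emph{periodic} symbol, so $\det(Q(x,\xi)-z)$ vanishes not only near $\pm(\tfrac{2\pi}{3},-\tfrac{2\pi}{3})$ but near \emph{every} lattice translate of these Dirac points. Your finite-rank projection $\Pi$ is built from eigenfunctions $u_n^\pm$ of $Q_0^{\rm{w}}$, and by \eqref{eq:QM}--\eqref{eq:unloc} these are microlocalized near the \emph{single} pair of Dirac points in the fundamental cell. At any translated Dirac point $(x,\xi)\in\pm(\tfrac{2\pi}{3},-\tfrac{2\pi}{3})+2\pi\gamma$ with $\gamma\neq 0$, neither piece of your proposed parametrix works: the symbol $Q-z$ is not elliptic there, so no $S(1)$-parametrix exists, while $(Q_0^{\rm{w}}-z)^{-1}(I-\Pi)$ is the inverse of an operator that \emph{is} elliptic there (since $\chi_0=0$ outside the fundamental cell and $Q_0=Q+\mathrm{diag}(-1,1)$), hence it does not approximate $(Q^{\rm{w}}-z)^{-1}$ in that region. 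The ``magnetic translation invariance'' you invoke cannot repair this with a finite-rank $R_\pm$: it is precisely what forces the correction to be infinite-dimensional.

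The paper's proof resolves this by abandoning the Grushin problem altogether in this lemma and instead building the parametrix directly as a lattice sum
\[
F^0:=\sum_{\gamma\in\ZZ^2_*}\tilde\chi_\gamma^{\rm w}\,(P_0^\gamma)^{-1}\,\chi_\gamma^{\rm w},\qquad P_0^\gamma:=r_\gamma(Q_0^{\rm w}-z)r_{-\gamma},
\]
where $\{\chi_\gamma\}$ is a partition of unity by translates of a cell cutoff. Each cell is handled by a translated copy of the good localized resolvent $(Q_0^{\rm w}-z)^{-1}$, boundedness of the sum is proved via the Cotlar--Stein lemma, and the error $PF^0-I$ is shown to be $\mathcal O(h^\infty d(z,\Spec Q_0^{\rm w})^{-1})$ by a commutator computation using an auxiliary \emph{everywhere elliptic} operator $Q_1^{\rm w}$ (see \eqref{eq:defQ1}--\eqref{eq:PF0}). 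Your approach becomes salvageable only if you enlarge $R_\pm$ to act on $\ell^2(\ZZ^2_*;\CC^{2N})$ using all translates $r_\gamma u_n^\pm$---but that is exactly the infinite Grushin problem of Proposition~\ref{l:Gru}, which the authors postpone because the direct parametrix $F^0$ is simpler when $z$ is already far from $\Spec(Q_0^{\rm w})$.
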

\begin{proof}
In addition to $ Q_{0}^{\rm{w}} $ we define another auxiliary operator with the 
symbol
\begin{equation}
\label{eq:defQ1} 
\begin{gathered} Q_{1} ( x, \xi ) := Q_{0} ( x, \xi) + 
\left(\begin{matrix}  - \chi_{1} (x, \xi)  & 0 \\
0  & \chi_{1} ( x, \xi)  \end{matrix}\right), \\
\chi_{1} \in C^\infty_{\rm{c}}  ( \RR^2 ; [ 0, 1 ] ) , \ \ \chi_{1} ( \rho ) = \chi_{1} ( -\rho ) ,  \ \ 
\chi_{1} ( \rho ) = \left\{ \begin{array}{ll} 
1,  &  \left\lVert \rho\right\rVert_\infty < \pi - \frac2{10} ,\\ 
0 , &   \left\lVert \rho \right\rVert_\infty  >  \pi - \frac1{10} ,  \end{array} \right. 
\end{gathered}
\end{equation}
noting that $ Q_{1} ( x, \xi ) - z \in S ( 1 ) $ is now elliptic (in the sense 
that the determinant, $ z^2 - \chi_{1}^2 + \det Q_{0},$ satisfies the conditions of \cite[\S 4.7.1]{ev-zw} for $ z $ in a neighbourhood of $ 0 $). 
From \cite[Theorems 4.29, 8.3]{ev-zw} we conclude that
\begin{equation}
\label{eq:invQ1}
( Q^{\rm{w}}_{1} ( x, hD ) - z )^{-1} = R_{1}^{\rm{w}} (z; x, h D , h ) , \ \
R_{1} \in S ( 1 ) 
, \ \  z \in [ - \delta_{1} , \delta_{1} ] - i [ -1,1]  . 
\end{equation}

Using $ Q_{0}^{\rm{w}} $ and $ Q_{1}^{\rm{w}} $ we define
\begin{equation}
\label{eq:PPP}
\begin{split}
&  p = p ( z ; x, \xi ) := Q ( x, \xi ) - z, \\
&  p_{0}^{\gamma}  = p_{0}^{\gamma}  ( z; x, \xi ) := Q_{0} ( x - \gamma_1 , \xi - \gamma_2 )  - z   \\ 
&  p_{1}^\gamma = p_{1}^{\gamma} ( z ; x , \xi ) := 
Q_{1} ( x - \gamma_1 , \xi - \gamma_2 )  - z   . 
\end{split}
\end{equation}
We denote the Weyl quantizations by $ P = P ( z ) $, $ P^\gamma_{0}  = P^\gamma_{0} ( z ) $ 
and $ P^\gamma_{1} =  P^\gamma_{1} ( z ) $ and note that
\begin{equation}
\label{eq:translph}
  P^\gamma_{0} = r_{\gamma} ( Q^{\rm{w}}_{0} - z ) r_{- \gamma } , \ \ \ 
 P^\gamma_{1} = r_{\gamma } ( Q_{1}^{\rm{w}} - z ) r_{-\gamma } , 
\ \ \ r_\gamma u ( x ) := e^{ \frac i h \gamma_2 x } u ( x - \gamma_1 ) .   \end{equation} 
We always assume that $ z \in [ - \delta_{1} , \delta_{1} ] - i [ -1,1]  $.

We now choose 
$  \chi, \tilde \chi \in C^\infty_{\rm{c}} ( \RR^2 ) $ so that
\begin{equation}
\label{eq:suppchi} 
\begin{gathered}
\tilde \chi|_{{\rm{nbhd}}(\supp \chi)} = 1 , \ \ \ 
\chi_{0} |_{{\rm{nbhd}}(\supp \tilde \chi) } = 1 , \\
 \sum_{\gamma \in \ZZ_*^2 }  \chi_\gamma = 1 , \ \  
\chi_\gamma ( x, \xi )  := \chi ( x - \gamma_1 , \xi - \gamma_2 ) . 
\end{gathered}
\end{equation}
We also define translations $ \tilde \chi_\gamma ( x, \xi) := \tilde \chi (x - \gamma_1, 
\xi - \gamma_2 ) $ and note that for all $ N $ and
with semi-norms independent of $ \gamma $,
\begin{equation}
\label{eq:malph}
\chi_\gamma , \tilde \chi_\gamma \in S ( m_\gamma^{-N} ) , \ \ 
m_\gamma ( x, \xi ) := ( 1 + (x-\gamma_1)^2 + ( \xi - \gamma_2 )^2 )^{\frac12}
\end{equation}

The properties of the cut-off functions guarantee that
\begin{equation}
\label{eq:PPa}  (p - p^\gamma_{0} )|_{{\rm{nbhd}}(\supp \tilde \chi_\gamma ) } = 0 , \ \
( p^\gamma_{0} - p^\gamma_{1})|_{{\rm{nbhd}}(\supp \nabla \tilde \chi_\gamma) } = 0 .
\end{equation}
Combined with \eqref{eq:invQ1} the composition formula for pseudodifferential
operators \cite[Theorem 4.18]{ev-zw} gives 
\begin{equation} 
\label{eq:commalph}
\begin{gathered}
e_{1, \gamma}^{\rm{w}}:=( P- P^\gamma_{0} ) \tilde \chi_\gamma^{\rm{w}}  , \ \ 
 e_{2, \gamma}^{\rm{w}}:=\tilde \chi_\gamma^{\rm{w}}  \chi_\gamma^{\rm{w}}  -
\chi_\gamma^{\rm{w}} , \\ 
e_{3 , \gamma}^{\rm{w}}:=[  P^\gamma_{0}, \tilde \chi_\gamma^{\rm{w}}   ] 
\widetilde P_\gamma^{-1} \chi_\gamma^{\rm{w}}  , \ \
e_{4,\gamma}^{\rm{w}}:=[ P^\gamma_{0} , \tilde \chi_\gamma^{\rm{w}}  ] 
\left(P^\gamma_{0}\right)^{-1} 
(  P^\gamma_{1}  -  P^\gamma_{0} )    ,
\end{gathered}
\end{equation}
where 
$ e_{j, \gamma} \in h^N S( m_\gamma^{-N} ) $, for all $ N $.

If $ d ( z ,\Spec ( Q_{0}^{\rm{w}} ) ) > h^{N_0} $ we define
$ F^0 := \sum_{ \gamma \in \ZZ^2_*} \tilde \chi_\gamma^{\rm{w}} \,
 \left(P^\gamma_{0}\right)^{-1} \, \chi_\gamma^{\rm{w}}  $,
  where the inverse of $ P^\gamma_{0} $ exists in view of \eqref{eq:translph}.
We claim that  
\begin{equation}
\label{eq:apprinv}
F^0 := \sum_{ \gamma \in \ZZ^2_*} \tilde \chi_\gamma^{\rm{w}} \,
\left(P^\gamma_{0}\right)^{-1} \, \chi_\gamma^{\rm{w}} 
  = \mathcal 
 O (  d ( z , \Spec( Q_{0}^{\rm{w}}))^{-1}  )_{ L^2 \to L^2 } .
 \end{equation}
In fact, in view of \eqref{eq:malph}
\begin{equation}
\label{eq:chichi}  
\tilde \chi_\gamma^{\rm{w}} (\tilde \chi_\beta^{\rm{w}})^* = 
(a^1_{\gamma \beta})^{\rm{w}}, \ \ 
 \chi_\gamma^{\rm{w}} (\chi_\beta^{\rm{w}})^* = 
(a^2_{\gamma \beta})^{\rm{w}}, \ \ a^j_{\gamma \beta} \in 
S ( m_\gamma^{-N} m_\beta^{-N} ) .\end{equation}
From \cite[Theorem 4.23]{ev-zw}
\begin{equation}
\label{eq:ajab} \| (a^j_{\gamma \beta})^{\rm{w}} \|_{ L^2 \to L^2 } 
\le C  \sup_{\RR^{2}} m_\gamma^{-N} m_\beta^{-N} \le C_N \langle \gamma - \beta 
\rangle^{-N} , \end{equation}
for all $ N \in \mathbb{N} $. If we put $ A_\gamma :=  \tilde \chi_\gamma^{\rm{w}} \,
 \left(P^\gamma_{0}\right)^{-1} \, \chi_\gamma^{\rm{w}} $, if follows that
 \begin{equation} 
 \label{eq:AAst} A_\gamma^* A_\beta ,  A_\gamma A_\beta^*  = 
 \mathcal O (  d ( z , \Spec( Q_{0}^{\rm{w}}))^{-2}  
 \langle \gamma - \beta 
\rangle^{-N})_{L^2 \to L^2} ,\end{equation}
and \eqref{eq:apprinv} follows from an application of the Cotlar--Stein
Lemma -- see \cite[Theorem C.5]{ev-zw}.

Using the notation of \eqref{eq:commalph} we have 
\begin{equation}
\label{eq:PF0}
\begin{split}
P F^0 & = \sum_{ \gamma \in \ZZ^2_*}P^\gamma_{0} \tilde \chi_\gamma^{\rm{w}} \,
 \left(P^\gamma_{0}\right)^{-1} \, \chi_\gamma^{\rm{w}} +  e_{1, \gamma}  \left(P^\gamma_{0}\right)^{-1} \, \chi_\gamma^{\rm{w}} \\
 & = 
 \sum_{ \gamma \in \ZZ^2_*} \chi_\gamma^{\rm{w}} +  e_{1, \gamma}^{\rm{w}}  \left(P^\gamma_{0}\right)^{-1} \, \chi_\gamma^{\rm{w}}  + e_{2, \gamma }^{\rm{w}} + 
   [ P^\gamma_{0}, \tilde \chi^{\rm{w}}_\gamma ] \left(P^\gamma_{0}\right)^{-1} 
 \chi_\gamma^{\rm{w}}  \\
 & = \operatorname{id} + \sum_{ \gamma \in \ZZ^2_*}  e_{1, \gamma}^{\rm{w}}  \left(P^\gamma_{0}\right)^{-1} \, \chi_\gamma^{\rm{w}} +
 e_{2, \gamma }^{\rm{w}} + [ P^\gamma_{0}, \tilde \chi^{\rm{w}}_\gamma ]\left(P^\gamma_{1}\right)^{-1}  
 \chi_\gamma^{\rm{w}} 
\nonumber\\
 & \qquad  +  \sum_{ \gamma \in \ZZ^2_*}[ P^\gamma_{0}, \tilde \chi^{\rm{w}}_\gamma ] ( \left(P^\gamma_{0}\right)^{-1} - \left(P^\gamma_{1}\right)^{-1} )   \chi_\gamma^{\rm{w}}\\
 & = \operatorname{id} +  \sum_{ \gamma \in \ZZ^2_*} e_{1, \gamma}^{\rm{w}}  \left(P^\gamma_{1}\right)^{-1} \, \chi_\gamma^{\rm{w}} + 
 e_{2, \gamma }^{\rm{w}} + e_{3, \gamma}^{\rm{w}} + 
 [ P^\gamma_{0}, \tilde \chi^{\rm{w}}_\gamma ] \left(P^\gamma_{1}\right)^{-1} (
P^\gamma_{1} -  P^\gamma_{0} )   \left(P^\gamma_{0}\right)^{-1} \chi_\gamma^{\rm{w}} \\
 & = \operatorname{id} +  \sum_{ \gamma \in \ZZ^2_*} e_{1, \gamma}^{\rm{w}}  \left(P^\gamma_{0}\right)^{-1} \, \chi_\gamma^{\rm{w}} +
 e_{2, \gamma }^{\rm{w}} + e_{3, \gamma}^{\rm{w}} +  e_{4, \gamma}^{\rm{w}}
  \left(P^\gamma_{0}\right)^{-1} \chi_\gamma^{\rm{w}} 
  \\ & = 
  \operatorname{id} + r , \ \ \  r  =  
  \mathcal O ( h^\infty d ( z , \Spec( Q_{0}^{\rm{w}}))^{-1}  )_{ L^2 \to L^2 } ,
  \end{split}
 \end{equation}
where the bound on $ r $ follows from \eqref{eq:commalph} and 
\eqref{eq:ajab} and an application of the Cotlar--Stein Lemma as in the proof of \eqref{eq:apprinv}.

 Hence for $ h $ small enough, 
 \begin{gather*} ( Q^{\rm{w}} ( x, h D ) - z )^{-1} = 
 F^0 ( \operatorname{id} + r)^{-1} = \mathcal O (  d ( z , \Spec( Q_{0}^{\rm{w}}))^{-1}  )_{ L^2 \to L^2 } ,
 \end{gather*}
 for 
$ z \in [ - \delta_{1} , \delta_{1} ] - i [ -1,1]  $,  $  d ( z ,
\Spec ( Q_{0}^{\rm{w}} ) ) > h^{N_0} $. 
 \end{proof}

 The proof gives a stronger weighted estimate on the inverse
with similar estimates being crucial later.
Under the assumption of Lemma \ref{l:awayspec} we have, for any $ s \in \RR $
and $ G^{\rm{w}} $ defined in \eqref{eq:defG}
\begin{equation}
\label{eq:weighted1}
e^{-s G^{\rm{w}} } ( Q^{\rm{w}} - z )^{-1} e^{s G^{\rm{w}} } = 
\mathcal O (  d ( z , \Spec( Q_{0}^{\rm{w}} ( x, h D ) ))^{-1}  )_{ L^2 \to L^2 } .
\end{equation}
\begin{proof}[Proof of \eqref{eq:weighted1}] We first check that $ F^0 $ defined in \eqref{eq:apprinv}
satisfies this estimate. (We note that \eqref{eq:weighted1} does not seem to  follow easily from conjugating $ Q^{\rm{w}} - z $ by the weight.) For that
we make the following observations:
\begin{equation}
\label{eq:weighchi}
\begin{gathered}
e^{ s G^{\rm{w}} } \tilde \chi_\gamma^{\rm{w}} = ( \tilde \chi^s_\gamma)^{\rm{w}}  , \ \ \chi_\gamma^{\rm{w}} e^{  s G^{\rm{w}} } = 
( \chi^s_\gamma)^{\rm{w}} , \\ \tilde \chi_\gamma^s , 
\chi_\gamma^s \in \bigcap_{ N } S ( e^{ s G } m_\gamma^{-N} ) = \langle \gamma \rangle^{s} \bigcap_N  S ( m_\gamma^{-N} ) , 
\end{gathered}
\end{equation}
where the equality of symbols spaces follows from the fact that
$  e^{s G ( \rho ) }  = \langle \rho \rangle^{s} $ and 
\begin{equation}
\label{eq:ineq} \langle \rho \rangle^{s}  \langle \rho - \gamma \rangle^{-N} \leq 
\langle \gamma \rangle^{s} \langle \rho - \gamma \rangle^{-N+|s|} \leq 
\langle \rho \rangle^{s} \langle \rho - \gamma \rangle^{-N+2|s|} .
\end{equation}
Proceeding as in \eqref{eq:chichi} and \eqref{eq:ajab} and putting
$ A_\gamma^s := e^{-s G^{\rm{w}} } A_\gamma e^{ s G^{\rm{w}}} $ we see
that estimates \eqref{eq:AAst} hold for $ A_\gamma^s $. That shows that
$ e^{ - s G^{\rm{w}} } F^0 e^{ s G^{\rm{w}} } $ is  bounded on $ L^2$ for 
any $ s \in \RR $. The same argument applies to $ r $ in \eqref{eq:PF0}
and that concludes the proof of \eqref{eq:weighted1}. 
\end{proof}

\medskip

We now use the translates of $ w_n^\pm $ from Proposition  \ref{p:specQ0} to construct a Grushin problem 
for $ Q^{\rm{w}} - z $ for $ z $ near $ \Spec (Q_{0}^{\rm{w}} )) $.
For that we take $ z_1 $ and $ \epsilon_0 $ such that
\begin{equation}  
\label{eq:specrah}     \{ \kappa ( n h, h ) \}_{ n \in \mathbb Z }  \cap [ z_1 - 
2 \epsilon_0 h , z_1 +  2 \epsilon_0 h ]  = \{ \kappa ( n_1 h, h ) \} , \ \ 
n_1 = n_1 ( z_1 , h ) .
\end{equation}
The interval $ [ - \delta_{0} , \delta_{0} ] $ can be covered by intervals of this form and intervals of size $ h $, disjoint from $ \Spec (Q_{0}^{\rm{w}} ) $.

For $ \gamma \in \mathbb Z^2_* $ we use translation \eqref{eq:translph} and
put 
\begin{equation}
\label{eq:wgamma}
w_\gamma = w_\gamma (h) := \left(\begin{matrix} w_\gamma^+ ( h )  , w_\gamma^- ( h ) \end{matrix} \right)
= \left(\begin{matrix}r_\gamma u^+_{n_1} ( h ), r_\gamma u^-_{n_1} ( h ) \end{matrix} \right) \in \CC^2 \otimes \CC^2 , 
\end{equation}
where $ n_1 $ is defined by \eqref{eq:specrah}.

The following lemma will be useful in several places:
\begin{lemm}
\label{l:ortho}
With $ w_\gamma^\pm $ defined by \eqref{eq:wgamma} and $ G $ given in 
\eqref{eq:defG} we have, for every $ s \in \RR $, 
\begin{equation}
\label{eq:almostw}
\begin{split} 
\langle e^{ sG^{\rm{w}} } w_\gamma^\pm, e^{ s G^{\rm{w}} } w_\beta^\pm \rangle & = \mathcal O (  \langle \gamma \rangle^{2s} \delta_{ \gamma \beta} + h^\infty 
\langle \gamma \rangle^{2s} \langle\gamma - \beta \rangle^{-\infty} ), \\ 
 \langle e^{ sG^{\rm{w}} }  w_\gamma^+, e^{ sG^{\rm{w}} }  w_\beta^- \rangle &  = \mathcal O (h^\infty  \langle \gamma \rangle^{2s}  
\langle \gamma - \beta \rangle^{-\infty } ) ,
\\
 \langle e^{ sG^{\rm{w}} } ( 1 - \chi_\gamma^{\rm{w}} )  w_\gamma^\epsilon, e^{ sG^{\rm{w}} }  ( 1 - \chi_\beta^{\rm{w}} )  w_\beta^{\epsilon'} \rangle & = \mathcal O (h^\infty  \langle \gamma \rangle^{2s}  
\langle \gamma - \beta \rangle^{-\infty } ),   \ \ \epsilon, \epsilon' \in \{ +, -\} .
\end{split} 
\end{equation}
\end{lemm}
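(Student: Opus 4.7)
The plan is to reduce each inner product to properties of the master eigenfunctions $u_{n_1}^\pm$ via the unitary translations $r_\gamma$, which satisfy $r_\gamma^{-1} \Op_h^{\rm{w}}(a) r_\gamma = \Op_h^{\rm{w}}(a(\cdot+\gamma))$, and then exploit wavefront-set almost-orthogonality. First, fix $\psi \in C_{\rm{c}}^\infty(\RR^2)$ with $\psi \equiv 1$ on a neighbourhood of $\supp \chi_0 \cup \supp \chi$ and set $\psi_\gamma(\rho):=\psi(\rho-\gamma)$. A direct consequence of \eqref{eq:unloc}, combined with $(1-\psi)\chi_0 = 0$ and the composition formula, is that $e^{sG^{\rm{w}}}(1-\psi^{\rm{w}}) u_{n_1}^\epsilon = \mathcal O(h^\infty)_{L^2}$ for every $s \in \RR$. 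Translating this by $r_\gamma$ gives
\[ e^{sG^{\rm{w}}} w_\gamma^\epsilon = e^{sG^{\rm{w}}} \psi_\gamma^{\rm{w}} r_\gamma u_{n_1}^\epsilon + \mathcal O(h^\infty \langle\gamma\rangle^s)_{L^2}, \]
so the bulk of the analysis reduces to controlling the main term.

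Next, I would analyze $e^{sG^{\rm{w}}} \psi_\gamma^{\rm{w}}$. Writing $e^{sG^{\rm{w}}} = \Op_h^{\rm{w}}(b_s)$ with $b_s \in S(\langle\rho\rangle^s)$ (as in \cite[Ch.~8]{D-S}), and noting that $\langle\rho\rangle^s \asymp \langle\gamma\rangle^s$ on $\supp \psi_\gamma$ (a bounded perturbation of $\gamma$), the standard composition yields
\[ e^{sG^{\rm{w}}} \psi_\gamma^{\rm{w}} = \Op_h^{\rm{w}}(c_\gamma^s), \qquad c_\gamma^s \in \bigcap_{N} S\bigl(\langle\gamma\rangle^{s}\langle\rho-\gamma\rangle^{-N}\bigr), \]
with seminorm bounds uniform in $\gamma$. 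Thus $\| e^{sG^{\rm{w}}} \psi_\gamma^{\rm{w}} \|_{L^2 \to L^2} \le C\langle\gamma\rangle^s$, and unitarity of $r_\gamma$ gives the diagonal bound $\|e^{sG^{\rm{w}}} w_\gamma^\epsilon\|^2 = \mathcal O(\langle\gamma\rangle^{2s})$ needed for the first identity at $\gamma=\beta$, $\epsilon=\epsilon'$. For the third identity, the additional factor $(1-\chi_\gamma^{\rm{w}}) r_\gamma = r_\gamma(1-\chi^{\rm{w}})$ together with $(1-\chi^{\rm{w}}) u_{n_1}^\epsilon = \mathcal O(h^\infty)$ in all weighted norms (since $1-\chi$ vanishes on $\WFh(u_{n_1}^\epsilon)$) upgrades even the diagonal contribution to $\mathcal O(h^\infty \langle\gamma\rangle^{2s})$.

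For the off-diagonal cases ($\gamma \ne \beta$, or $\epsilon \ne \epsilon'$), the wavefront-set localization gives
\[ \WFh(\psi_\gamma^{\rm{w}} r_\gamma u_{n_1}^\epsilon) \subset \mathrm{nbhd}\bigl(\epsilon(\tfrac{2\pi}{3},-\tfrac{2\pi}{3}) + \gamma\bigr), \]
and similarly on the $\beta$-side. Since $(2\pi\ZZ)^2$ does not contain $\pm(\tfrac{4\pi}{3},-\tfrac{4\pi}{3})$, the two microsupports are always disjoint and separated by a distance of order $\max(1,|\gamma-\beta|)$ (bounded below by a positive constant in the mixed-sign case). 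Inserting a phase-space partition of unity between the two factors of the inner product and invoking the off-diagonal kernel decay for pseudodifferential operators (the same Cotlar--Stein mechanism used in the proof of Lemma \ref{l:awayspec}, together with the $h^\infty$ localization of $u_{n_1}^\epsilon$ from \eqref{eq:unloc}) yields the $\mathcal O(h^\infty \langle\gamma-\beta\rangle^{-\infty})$ bound. Combining with the $\langle\gamma\rangle^s \langle\beta\rangle^s$ prefactor and absorbing $\langle\beta\rangle \le C \langle\gamma\rangle \langle\gamma-\beta\rangle$ against the rapid decay in $\langle\gamma-\beta\rangle$ gives $\mathcal O(h^\infty \langle\gamma\rangle^{2s}\langle\gamma-\beta\rangle^{-\infty})$. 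The main technical subtlety is extracting the sharp $\langle\gamma\rangle^s$ bound on $\|e^{sG^{\rm{w}}} \psi_\gamma^{\rm{w}}\|$ rather than the naive $\langle\gamma\rangle^{|s|}$: this crucially uses the compact microsupport of $\psi_\gamma$ to restrict attention to a region where $\langle\rho\rangle^s \asymp \langle\gamma\rangle^s$ irrespective of the sign of $s$, which is what makes the claimed asymmetric weight in \eqref{eq:almostw} (with $\langle\gamma\rangle^{2s}$, negative or positive) go through.
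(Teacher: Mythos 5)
Your proposal is correct and follows essentially the same route as the paper's proof: split off the part of $w_\gamma^\pm$ away from the microsupport using \eqref{eq:unloc}, conjugate the global weight $e^{sG^{\rm{w}}}$ against the translated cutoffs via the inequality $\langle \rho\rangle^{s}\langle\rho-\gamma\rangle^{-N}\leq\langle\gamma\rangle^{s}\langle\rho-\gamma\rangle^{-N+|s|}$ to extract the sharp $\langle\gamma\rangle^{s}$ factor, and get the $h^\infty\langle\gamma-\beta\rangle^{-\infty}$ off-diagonal decay from the disjointness of the translated wave front sets \eqref{eq:QM} together with the rapid off-diagonal decay of the composed symbols. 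The only difference is organizational (you insert one auxiliary cutoff $\psi_\gamma$ where the paper expands into four terms using $\chi_\gamma^{\rm{w}}$ and $1-\chi_\gamma^{\rm{w}}$), and you correctly identify the key subtlety of obtaining $\langle\gamma\rangle^{s}$ rather than $\langle\gamma\rangle^{|s|}$.
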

\begin{proof}
This follows from \eqref{eq:QM}, \eqref{eq:unloc} and arguments presented in 
the remark above. As an example we prove the first estimate in \eqref{eq:almostw} (dropping $ \pm $ in the notation):
\[ \begin{split} 
\langle e^{ sG^{\rm{w}} } w_\gamma, e^{ s G^{\rm{w}} } w_\beta \rangle & = \langle e^{ s G^{\rm{w}} } ( 1 - \chi_\gamma^{\rm{w}} ) w_\gamma , 
e^{ s G^{\rm{w}} } ( 1 - \chi_\beta^{\rm{w}} ) w_\beta \rangle 
+ 
\langle e^{ s G^{\rm{w}} }  \chi_\gamma^{\rm{w}}  w_\gamma , 
e^{ s G^{\rm{w}} }  \chi_\beta^{\rm{w}}  w_\beta \rangle 
\\
& \ \ \ \ + 
\langle e^{ s G^{\rm{w}} }  \chi_\gamma^{\rm{w}}  w_\gamma , 
e^{ s G^{\rm{w}} } ( 1 - \chi_\beta^{\rm{w}} ) w_\beta \rangle +
\langle e^{ s G^{\rm{w}} } ( 1 - \chi_\gamma^{\rm{w}} ) w_\gamma , 
e^{ s G^{\rm{w}} } \chi_\beta^{\rm{w}}  w_\beta \rangle 
\end{split} \]
With $ G_\gamma ( \rho ) := G ( \rho - \gamma ) $, 
\[ \begin{split} 
 e^{ s G^{\rm{w}} - N G_\gamma^{\rm{w}} } e^{ N G_\gamma ^{\rm{w}} } ( 1 - \chi_\gamma^{\rm{w}} ) w_\gamma = b_\gamma^{\rm{w}} ( x, h D )  e^{ N G_\gamma ^{\rm{w}} } ( 1 - \chi_\gamma^{\rm{w}} ) w_\gamma , 
 \end{split}
 \]
 where as in  \eqref{eq:ineq},
 \[ b_\gamma \in  S ( \langle \rho \rangle^s \langle \rho - \gamma \rangle^{-N } )  \subset  
 S ( \langle \gamma \rangle^s \langle \rho - \gamma \rangle^{-N + |s| } ) .  \]
Putting 
\[ \tilde w_\gamma := e^{ N G_\gamma ^{\rm{w}} } ( 1 - \chi_\gamma^{\rm{w}} ) w_\gamma = \mathcal O ( h^\infty )_{L^2}  , \]
and $ M = N - |s| \gg 1 $, we see that
\[ \begin{split}  \langle e^{ s G^{\rm{w}} } ( 1 - \chi_\gamma^{\rm{w}} ) w_\gamma , 
e^{ s G^{\rm{w}} } ( 1 - \chi_\beta^{\rm{w}} ) w_\beta \rangle 
& = \langle ( b_\beta^{\rm{w}})^* b_\gamma^{\rm{w}} \tilde w_\gamma , 
\tilde w_\beta \rangle
 = \| (b_\beta^{\rm{w}})^* b_\gamma^{\rm{w}} \|_{L^2 \to L^2 } \mathcal O 
(h^\infty ) \\ & 
 \leq C \sup_{\rho \in \RR^2}  \langle \gamma \rangle^{s} \langle
\beta \rangle^{s} \langle \rho - \gamma\rangle^{ - M } \langle \rho - \beta\rangle^{ - M } \mathcal O 
(h^\infty )  \\
& \leq \mathcal O ( h^\infty  \langle \gamma \rangle^{2s} \langle \gamma - \beta \rangle^{-M+s} )  .
\end{split} \]
The other terms are treated in the same way. \end{proof}

We then define $ R_+ : L^2 ( \RR , \CC^2 ) \to \ell^2 ( \mathbb Z^2_*; \CC^2 ) $ and
$ R_- = \ell^2 ( \mathbb Z^2_*; \CC^2 ) \to L^2 ( \RR , \CC^2 ) $ as follows
\begin{equation}
\label{eq:defRp}
\begin{split}
\left( R_+ u \right) ( \gamma ) & := \langle 
u , w_\gamma \rangle:= \left(\begin{matrix} \langle u , w_\gamma^+ \rangle \\
\langle u , w_\gamma^- \rangle  \end{matrix}\right) \in \CC^2 ,  
\ \ 
R_- u_- ( x ) 
:= \sum_{ \gamma \in \mathbb Z^2_* }
  w_\gamma ( x )  u_- ( \gamma ) , 
\end{split}
\end{equation}
where $  u_- ( \gamma ) = \left(\begin{matrix}   u_-^{+} ( \gamma) ,  u_-^- (\gamma ) \end{matrix}\right)^t 
\in \CC^2 $ and $ w_\gamma ( x ) = ( w_\gamma^+ , w_\gamma^- ) \in 
\CC^2 \otimes \CC^2 $. 

To see the boundedness of $ R_- $ we use the almost orthogonality of 
$ w_\gamma^\pm $ given in \eqref{eq:almostw} with $ s = 0 $:
Hence, 
\begin{equation}
\label{eq:below} \begin{split}  \left\lVert  \sum_{\gamma \in \mathbb{Z}_*} w_\gamma ( \bullet )  u_- ( \gamma )  \right\rVert_{L^2 }^2 
& \lesssim \sum_{\gamma \in \mathbb{Z}_*} \sum_{\beta \in \mathbb{Z}_*} |u_- ( \gamma )|| u_- ( \gamma + \beta ) | \langle \beta \rangle^{-N}  \\
& \lesssim \| u_-\|_{\ell^2 } 
\left( \sum_{\gamma \in \mathbb{Z}_*} \left( \sum_{\gamma \in \mathbb{Z}_*} | u_- ( \gamma + \gamma ) | \langle \gamma
\rangle^{-N} \right)^2 \right)^{\frac12} \\
& \lesssim  \| u_-\|_{\ell^2 } \left( \sum_{\gamma \in \mathbb{Z}_*} 
\sum_{\gamma \in \mathbb{Z}_*} \sum_{\gamma' \in \mathbb{Z}_*}
| u_-( \gamma + \gamma ) |^2 \langle \gamma \rangle^{-N} 
\langle \gamma'  \rangle^{-N} \right)^{\frac12} 
\\ & 
\lesssim \| u_-\|_{\ell^2 }^2 .
\end{split} \end{equation}
(This is a version of Schur's argument, see for instance \cite[Proof of Theorem 4.21, Step 2,]{ev-zw}; later on we will again need the Cotlar--Stein Lemma 
as in the proof of boundedness of $ F^0 $ in the proof of Lemma \ref{l:awayspec}.) Since $ R_+ = R_-^* $ the boundedness of $R_+ $ also follows.
We note that $ R_+ R_- = \operatorname{id}_{\ell^2 ( \ZZ^2_*; \CC^2 )} $.

\begin{prop} 
\label{l:Gru}
Assume that \eqref{eq:specrah} holds and that $ R_\pm $ are defined by \eqref{eq:defRp}. 
Then 
the Grushin problem 
\begin{equation} 
\label{eq:Gru} \left(\begin{matrix} Q^{\rm{w}} ( x, h D)  - z & R_- \\
R_+ & 0 \end{matrix}\right) : L^2 ( \RR , \CC^2 ) \times \ell^2 ( \mathbb Z^2_*; \CC^2 ) \longrightarrow L^2 ( \RR , \CC^2 ) \times \ell^2 ( \mathbb Z^2_*; \CC^2 ), \end{equation}
is well posed for $ z \in (z_1 - \epsilon_0 h , z_1 + \epsilon_0 h ) 
+ i ( - 1, 1 ) $,  with the 
inverse
\begin{equation}
\label{eq:E} \left(\begin{matrix} E ( z, h ) & E_{+} (z, h ) \\
E_- ( z, h ) & E_{-+} ( z , h ) \end{matrix}\right) =
\left(\begin{matrix} \mathcal O ( 1/h)_{L^2 \to L^2} & \mathcal O ( 1 )_{ \ell^2 \to L^2 } \\
\mathcal {\mathcal O} ( 1 )_{ L^2 \to \ell^2 } & \mathcal O ( h )_{ \ell^2 \to \ell^2 } 
\end{matrix}\right) . \end{equation}
In addition, 
\begin{equation}
\label{eq:propE}
\begin{gathered}
  ( E_{-+} ( z, h ) v_+ ) ( \gamma ) = \sum_{ \beta \in \mathbb Z^2_* } 
E_{-+} ( \gamma- \beta  ) v_+ ( \beta) 
,
\\ 
    E_{-+} (  \gamma  ) =  \delta_{\gamma 0 } (z - \kappa ( n_1 h , h ) ) \operatorname{id}_{\CC^2} + 
\mathcal O (h^\infty  \langle \gamma  \rangle^{-\infty} )
\end{gathered}
\end{equation}
where 
$ \kappa  $ is given by 
\eqref{eq:specQ0} and $n_1 $ by \eqref{eq:specrah}.
\end{prop}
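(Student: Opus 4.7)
The overall plan is to construct an explicit approximate inverse of the Grushin matrix block-by-block using the microlocal eigenfunction structure near each $\gamma \in \mathbb Z^2_*$, then upgrade it to an exact inverse by Neumann series, following the standard Helffer--Sj\"ostrand template (compare the proof of Lemma \ref{l:awayspec}). With $\Pi_\gamma u := \langle u, w_\gamma^+\rangle w_\gamma^+ + \langle u, w_\gamma^-\rangle w_\gamma^-$ and $\chi_\gamma, \tilde\chi_\gamma$ from \eqref{eq:suppchi}, I would take
\[ E^0 := \sum_{\gamma} \tilde\chi_\gamma^{\rm w}\,(P_0^\gamma - z)^{-1}(I - \Pi_\gamma)\,\chi_\gamma^{\rm w}, \quad E_+^0 := R_-, \quad E_-^0 := R_+, \quad E_{-+}^0 := (z - \kappa(n_1 h, h))\,\operatorname{id}. \]
The spectral gap in \eqref{eq:specrah} forces $(P_0^\gamma - z)^{-1}(I - \Pi_\gamma) = \mathcal O(1/h)$, and a Cotlar--Stein argument parallel to \eqref{eq:apprinv} then yields $E^0 = \mathcal O(1/h)_{L^2\to L^2}$; boundedness of $R_\pm$ was already established via \eqref{eq:below}.

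To verify that these blocks form an approximate inverse I would mimic the telescoping computation \eqref{eq:PF0} to get $(Q^{\rm w} - z) E^0 + R_- E_-^0 = I + \mathcal O(h^\infty)$, where the error collects commutator, symbol, and orthogonality-defect terms controlled by \eqref{eq:unloc} and Lemma \ref{l:ortho} and summed via Cotlar--Stein. The crucial input for $(Q^{\rm w} - z) E_+^0 + R_- E_{-+}^0 = \mathcal O(h^\infty)$ is the microlocal eigenvalue equation $Q^{\rm w} w_\gamma^\pm = \kappa(n_1 h, h)\, w_\gamma^\pm + \mathcal O(h^\infty)$, which holds because $Q^{\rm w}$ and $Q_0^{\rm w}$ are microlocally identical on $\WF_h(w_\gamma^\pm)$ by \eqref{eq:defQ0} and \eqref{eq:QM}. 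The remaining identities $R_+ E^0 = \mathcal O(h^\infty)$ and $R_+ E_+^0 = I$ follow from $R_+ R_- = \operatorname{id}$ and Lemma \ref{l:ortho}. Inverting the identity-plus-$\mathcal O(h^\infty)$ operator by Neumann series produces the exact inverse with the block sizes stated in \eqref{eq:E}.

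The kernel structure \eqref{eq:propE} would come from translation covariance: the magnetic shifts $r_\gamma$ in \eqref{eq:translph} commute with $Q^{\rm w}$ (since $Q$ is $2\pi$-periodic) and $w_\gamma = r_\gamma w_0$, so the whole Grushin setup transforms covariantly, forcing $E_{-+}$ to depend only on $\gamma - \beta$. The leading term $\delta_{\gamma 0}(z-\kappa)\operatorname{id}_{\CC^2}$ is $E_{-+}^0$ itself, and the Neumann correction inherits both $h^\infty$ smallness and rapid decay in $\langle\gamma\rangle$ from Lemma \ref{l:ortho} combined with the weighted estimate \eqref{eq:weighted1}. The main obstacle is ensuring that every composition arising in the Neumann inversion preserves simultaneously $\mathcal O(h^\infty)$ smallness and rapid off-diagonal decay in $\langle\gamma-\beta\rangle$: this requires systematically carrying the exponential weights from \eqref{eq:weighted1} and \eqref{eq:almostw} through each Cotlar--Stein summation, essentially redoing the bookkeeping of Lemma \ref{l:awayspec} while tracking an extra polynomial weight in the lattice difference so that the sums yield decay rather than mere boundedness.
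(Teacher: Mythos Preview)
Your proposal is correct and matches the paper's proof essentially step for step: the paper defines the same approximate inverse blocks (their $S_\gamma$ in \eqref{eq:Palph1} is exactly your $(P_0^\gamma)^{-1}(I-\Pi_\gamma)$, modulo the notational point that $P_0^\gamma$ already contains $-z$), uses the same Cotlar--Stein summation, records the translation covariance as Lemma~\ref{l:symG}, and carries out the weighted bookkeeping you anticipate by conjugating with $e^{sG^{\rm w}}$ on $L^2$ and $e^{sg}$ on $\ell^2$ where $(g u)(\gamma)=\log\langle\gamma\rangle\,u(\gamma)$.
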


Before proceeding with the proof of Proposition \ref{l:Gru} we explain the 
basic idea in a simple example. Suppose $ P $ is a self-adjoint operator on a
Hilbert space $ H $, say a matrix,
with  $ \Spec (P ) \cap [ - \delta, \delta ] = \{ 0 \} $, where $ 0$ is a simple eigenvalue, $ P w = 0 $, $ \|w \| = 1$. Then for $ z \in \left([ - \delta , \delta ] + i \RR \right) \backslash\{0\} $, 
\[  ( P - z )^{-1} = - \frac{ w \langle \bullet , w \rangle}{ z } + S ( z ), \ \  ( P - z ) S ( z ) = \operatorname{id} - w \langle \bullet, w\rangle ,\]
and $ S ( z ) $ is holomorphic. 

We then define
$ R_- : \CC \to H $, $ R_+: H \to \CC $: $ R_- u_- = u_- w $, 
$ R_+ u = \langle u , w \rangle $. One easily checks $ z \in [ - \delta, \delta ] + i \RR $, 
\begin{equation}
\label{eq:Grum} \left(\begin{matrix} P - z & R_- \\
\ \ \ R_+ & 0 \end{matrix}\right)^{-1} = \left(\begin{matrix} 
S ( z ) & R_-\\ 
\ R_+ & z \end{matrix}\right) =: \left(\begin{matrix} 
E ( z ) & E_+ ( z ) \\
E_- ( z ) & E_{-+} ( z ) \end{matrix}\right) : H \times \CC \to H \times \CC . 
\end{equation}
We now follow a similar procedure for $ P = Q^{\rm{w}} - z $ using 
{\em approximate} eigenfunctions $ w_\gamma $ and a partition of
$ \{\chi_\gamma \}_{\gamma \in \ZZ^2_*} $ as in \eqref{eq:apprinv}. The approximate inverse \eqref{eq:Eapp} 
is then similar to \eqref{eq:Grum}.  To obtain the localization result
in \eqref{eq:propE} we upgrade $ L^2 \times \ell^2 $ estimates
to {\em weighted estimates} \eqref{eq:weighted2} and \eqref{eq:propE0}, as
in the remark after the proof of Lemma \ref{l:awayspec}.

We also record translation symmetries of
our Grushin problem:

\begin{lemm}
\label{l:symG}
Suppose that $ \gamma \in \ZZ^2_* $, $ r_\gamma : L^2(\mathbb{R}^2) \to L^2(\mathbb{R}^2) $ is defined by 
\eqref{eq:translph} and $ s_\gamma : \ell^2(\mathbb{Z}_*^2) \to \ell^2(\mathbb{Z}_*^2) $ by 
$ (s_\gamma f) ( \delta ) := f ( \delta - \gamma ) $. Then in the notation of
\eqref{eq:Gru},
\begin{equation}
\label{eq:symG}
\left(\begin{matrix} 
r_\gamma & 0 \\
0 & s_\gamma \end{matrix}\right)  \left(\begin{matrix} Q^{\rm{w}} ( x, h D)  - z & R_- \\
R_+ & 0 \end{matrix}\right) = 
\left(\begin{matrix} Q^{\rm{w}} ( x, h D)  - z & R_- \\
R_+ & 0 \end{matrix}\right) 
\left(\begin{matrix} 
r_\gamma & 0 \\
0 & s_\gamma \end{matrix}\right)  , \ \ \gamma \in \ZZ^2_* .
\end{equation}
\end{lemm}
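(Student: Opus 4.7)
The plan is to expand the block products in \eqref{eq:symG} and reduce the claim to the three separate commutation identities
\[ r_\gamma (Q^{\rm w}(x, hD) - z) = (Q^{\rm w}(x, hD) - z) r_\gamma, \ \  r_\gamma R_- = R_- s_\gamma, \ \  s_\gamma R_+ = R_+ r_\gamma, \]
to be verified for every $\gamma \in \ZZ^2_*$ (the off-diagonal zero blocks match trivially). Each of the three then gets its own argument.

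For the first identity I would invoke the standard Weyl calculus intertwining relation $r_\gamma \Op^{\rm w}_h(a) r_{-\gamma} = \Op^{\rm w}_h\bigl(a(\,\cdot - \gamma_1, \,\cdot - \gamma_2)\bigr)$, which is a direct calculation from \eqref{eq:translph} (cf.\ \cite[Chapter 4]{ev-zw}). The symbol $Q$ in \eqref{eq:symbolQ} depends on $(x, \xi)$ only through $e^{\pm i x}$ and $e^{\pm i \xi}$, hence is invariant under translation by any element of $\ZZ^2_* \subset (2\pi \ZZ)^2$; conjugating $Q^{\rm w}$ by $r_\gamma$ therefore returns $Q^{\rm w}$, yielding the first commutation.

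For the second identity the key input is the defining formula $w_\beta = r_\beta u_{n_1}$ from \eqref{eq:wgamma} together with the group law $r_\gamma r_\beta = r_{\gamma + \beta}$ on $\ZZ^2_*$, which gives $r_\gamma w_\beta = w_{\gamma + \beta}$. Unwinding the definition \eqref{eq:defRp},
\[
r_\gamma R_- u_-(x)
= \sum_{\beta \in \ZZ^2_*} (r_\gamma w_\beta)(x)\, u_-(\beta)
= \sum_{\beta \in \ZZ^2_*} w_{\gamma + \beta}(x)\, u_-(\beta)
= \sum_{\delta \in \ZZ^2_*} w_\delta(x)\, u_-(\delta - \gamma)
= (R_- s_\gamma u_-)(x),
\]
after the substitution $\delta = \gamma + \beta$ and using $(s_\gamma u_-)(\delta) = u_-(\delta - \gamma)$. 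The third identity then follows by adjunction: one checks directly from \eqref{eq:defRp} that $R_+ = R_-^*$, and $r_\gamma, s_\gamma$ are unitaries on $\ZZ^2_*$ with inverses $r_{-\gamma}, s_{-\gamma}$, so replacing $\gamma$ by $-\gamma$ in the second identity and taking adjoints produces $s_\gamma R_+ = R_+ r_\gamma$. Reassembling the three pieces gives \eqref{eq:symG}.

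The main obstacle is the verification of the group law $r_\gamma r_\beta = r_{\gamma + \beta}$ on $\ZZ^2_*$: a direct computation from \eqref{eq:translph} gives $r_\gamma r_\beta = e^{-i \beta_2 \gamma_1 / h}\, r_{\gamma + \beta}$ in general, so triviality of this cocycle for $\gamma, \beta \in \ZZ^2_*$ is a genuine compatibility condition between the lattice and the semiclassical parameter $h$ that is built into the choice of $\ZZ^2_*$ underlying the Grushin construction. Once this group law and the $\ZZ^2_*$-periodicity of $Q$ are in place, the three commutation identities and therefore the full matrix identity \eqref{eq:symG} reduce to routine unwinding of the definitions.
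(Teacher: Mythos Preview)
The paper states this lemma without proof, so there is no argument of the paper's to compare against. Your reduction to the three block identities is the natural approach, and your verification of the first identity (commutation of $r_\gamma$ with $Q^{\rm w}$ via $(2\pi\ZZ)^2$-periodicity of the symbol $Q$) is correct.

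You also correctly locate the real obstruction in the second identity: from \eqref{eq:translph} one computes $r_\gamma r_\beta = e^{-i\beta_2\gamma_1/h}\, r_{\gamma+\beta}$, hence $r_\gamma w_\beta = e^{-i\beta_2\gamma_1/h}\, w_{\gamma+\beta}$ rather than $w_{\gamma+\beta}$. Where your argument breaks down is the proposed resolution. In this paper $\ZZ^2_*$ is the fixed period lattice $(2\pi\ZZ)^2$ of the symbol $Q$ (this is forced by the partition of unity $\sum_{\gamma\in\ZZ^2_*}\chi_\gamma=1$ in \eqref{eq:suppchi} and by the description of the Dirac points as $\ZZ^2_*\pm(\tfrac{2\pi}{3},-\tfrac{2\pi}{3})$); it does not depend on $h$. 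For $\gamma,\beta\in(2\pi\ZZ)^2$ the cocycle phase is $\exp(-4\pi^2 ik/h)$ with $k\in\ZZ$, which is \emph{not} trivial for generic small $h$. So the exact identity $r_\gamma R_- = R_- s_\gamma$ with the plain shift $s_\gamma$ fails, and with it \eqref{eq:symG} as literally written.

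What does hold, and what suffices for every use of the lemma (in the proof of Proposition~\ref{l:Gru} and in the structure of $E_{-+}$ in \eqref{eq:propE}), is the projective version obtained by replacing $s_\gamma$ with the matching ``magnetic'' shift on $\ell^2(\ZZ^2_*)$ carrying the same cocycle; $E_{-+}$ then becomes a magnetic matrix rather than a pure convolution, but the subsequent estimates involve only $|E_{-+}(\gamma)|$ and are unaffected. In short, you have correctly diagnosed a genuine imprecision in the lemma as stated; the error in your write-up is the claim that the cocycle is automatically trivial on $\ZZ^2_*$.
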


\medskip

\begin{proof}[Proof of Proposition \ref{l:Gru}]
We follow the same procedure as in the proof of Lemma \ref{l:awayspec} 
and we use the notation from there. 

To start we note that in our range of $ z$'s with $\kappa(n_1h,n_1)$ excluded,
\begin{equation}
\label{eq:Palph1} P_\gamma^{-1} =  
\frac{w_\gamma  \langle \bullet , w_\gamma \rangle }{ 
 \kappa ( n_1 h , h ) -z} + S_\gamma  , \ \ 
P_\gamma S_\gamma  = \operatorname{id} - w_\gamma \langle \bullet , w_\gamma \rangle, \ \ S_\gamma = \mathcal O ( 1/h)_{L^2 \to L^2 } ,
\end{equation}
where the estimate on $ S_\gamma $ follows from the holomorphy 
of $ S_\gamma $ and the maximum principle: we can 
find $ \epsilon_1 > \epsilon_0 $ such that on the boundary of 
$  (z_1 - \epsilon_1 h , z_1 + \epsilon_1 h ) 
+ i ( - 2, 2 ) $, $ \| P^{-1}_\gamma \| = 1/d(z , \Spec (P_\gamma) ) 
=  \mathcal O ( 1/h ) $ and $ | \kappa ( n_1 h , h ) - z |^{-1} = 
\mathcal O ( 1/h ) $.

For future reference will also note that
\begin{equation}
\label{eq:SalP}  S_\gamma u =  \widetilde P_\gamma^{-1} \left(  u - 
w_\gamma \langle u , w_\gamma \rangle \right) + 
\widetilde P_\gamma^{-1} ( \widetilde P_\gamma -  P_\gamma) 
S_\gamma u . \end{equation}

In the notation of \eqref{eq:apprinv} and 
\eqref{eq:Palph1} we define $ E_\bullet^0 = E_\bullet^0 ( z ) $:
\begin{equation}
\label{eq:Eapp} 
\begin{gathered}
  E^0  := \sum_{\gamma \in \ZZ_*^2 } \widetilde \chi^{\rm{w}}_\gamma  
  S_\gamma  \chi^{\rm{w}}_\gamma   , \\ 
  E_+^0  := R_- , \ \  E_-^0  := R_+  , \ \ 
E_{-+}^0 = ( z - \kappa ( h n_1, n_1 ) ) \operatorname{id}_{\ell^2} . 
\end{gathered}
\end{equation}
Lemma \ref{l:symG} shows that $ r_\gamma E_+^0 = E_+^0 T_\gamma $ and
$ E_-^0 r_\gamma = T_\gamma E_-^0 $. We now check that $ r_{\gamma} E^0 
r_{-\gamma } = E^0 $. In fact, \eqref{eq:translph} shows that
\[ \begin{split}  r_{\gamma} E^0 r_{-\gamma } & = 
\sum_{\gamma \in \mathbb{Z}_*^2}r_\gamma  \widetilde \chi^{\rm{w}}_\gamma  
  S_\gamma \chi^{\rm{w}}_\gamma r_{-\gamma} v 
 = \sum_{\gamma \in \mathbb{Z}_*^2}   \widetilde \chi^{\rm{w}}_{\gamma  + \gamma} 
  S_{\gamma+\gamma}  \chi^{\rm{w}}_{\gamma +\gamma} v = E^0 v. 
\end{split}
\]  
As in the proof of \eqref{eq:weighted1} we also see that 
for $ G $ given by \eqref{eq:defG} and 
\[  ( g u ) ( \gamma ) := 
\log \langle \gamma \rangle u ( \gamma ) ,\]
\begin{equation}
\label{eq:weighted2}
\left(\begin{matrix} e^{ -s G^{\rm{w}} } & 0 
\\ 0 &  e^{ -s g } \end{matrix}\right) \left(\begin{matrix} E^0 & \  E^0_+ \\
\  E^0_- & E^0_{-+}   \end{matrix}\right) \left(\begin{matrix} e^{ s G^{\rm{w}} } & 0 
\\ 0 &  e^{ s g } \end{matrix}\right) = 
\left(\begin{matrix} \mathcal O ( 1/h)_{L^2 \to L^2} & \mathcal O ( 1 )_{ \ell^2 \to L^2 } \\
\mathcal {\mathcal O} ( 1 )_{ L^2 \to \ell^2 } & \mathcal O ( h )_{ \ell^2 \to \ell^2 } 
\end{matrix}\right) . \end{equation}

We claim that
\[
\left(\begin{matrix} Q^{\rm{w}}   - z & R_- \\
\ \ R_+ & 0 \end{matrix}\right) \left(\begin{matrix} E^0 & E_+^0 \\
E_-^0 & \ E_{-+}^0 \end{matrix}\right) = \operatorname{id}_{ L^2 \times \ell^2 } +
\left(\begin{matrix} r & \  r_+ \\
\  r_- & 0  \end{matrix}\right) ,\]
where for all $ s \in \RR $, 
\begin{equation}
\label{eq:propE0}
\left(\begin{matrix} e^{ -s G^{\rm{w}} } & 0 
\\ 0 &  e^{- s g } \end{matrix}\right) \left(\begin{matrix} r & \  r_+ \\
\  r_- & 0  \end{matrix}\right) \left(\begin{matrix} e^{ s G^{\rm{w}} } & 0 
\\ 0 &  e^{ s g } \end{matrix}\right) = 
\mathcal O ( h^\infty )_{ L^2 \times \ell^2 \to L^2 \times \ell^2 } . 
\end{equation}

As in \eqref{eq:PF0} (with \eqref{eq:SalP} used to pass from the third line to the fourth line in \eqref{eq:PF0}) and the proof of \eqref{eq:weighted1} we see that
\[ \begin{split}
P E^0 v + R_-E_-^0 v & = \sum_\gamma P \tilde \chi^{\rm{w}}_\gamma S_\gamma
 \chi^{\rm{w}}_\gamma v + 
w_\gamma  \langle v , w_\gamma \rangle 
\\
& 
=  v + \sum_\gamma w_\gamma  ( \langle v , w_\gamma \rangle - \langle  v , \chi^{\rm{w}}_\gamma w_\gamma \rangle  )  + r_1 v 
\\
& = ( \operatorname{id} + r_1 + r_2 ) v , \ \ r_2 :=  \sum_\gamma w_\gamma \otimes 
( 1 - \chi_\gamma^{\rm{w}} ) \bar w_\gamma . 
\end{split}
\]
where $ e^{ -sG^{\rm{w}} } r_1 e^{  s G^{\rm{w}} } = \mathcal O ( h^\infty )_{L^2 \to L^2 } $.
To show that $ e^{ -sG^{\rm{w}} } r_2 e^{  s G^{\rm{w}} } = \mathcal O ( h^\infty )_{L^2 \to L^2 } $ we use \eqref{eq:almostw} and the bound follows again from the Cotlar--Stein Lemma (or from a direct estimate).

The other estimates in \eqref{eq:propE0} are proved
similarly using the localization properties of $ w_\gamma $. We start with 
$  r_+ v_+  = P E^0_+ v_+ + E_{-+} v_+ = 
 \sum_\gamma  ( P - P_\gamma ) w_\gamma v_+ ( \gamma ) $. 
Hence, 
\[  \begin{split} \left( e^{-s G^{\rm{w}} } r_+ e^{sg} \right) v_+ & = 
\sum_\gamma \langle \gamma \rangle^{s} e^{ -s G^{\rm{w}} } e^{ -N  r_\gamma G^{\rm{w}} r_{-\gamma}  } r_\gamma ( ( P - P_0 ) e^{ N G^{\rm{w}} } w_0 ) 
v_+ ( \gamma ) \\
&  = \sum_\gamma c_{\gamma}^{\rm{w}} u_\gamma 
 v_+ ( \gamma) , \ \ \ c_{\gamma} \in S ( 1 ) , \ \ u_\gamma :=r_\gamma ( ( P - P_0 ) e^{ N G^{\rm{w}} } w_0 )  . 
\end{split} 
\]
As in the proof of Lemma \ref{l:ortho} $ \langle u_\gamma, u_\beta \rangle 
= \mathcal O ( h^\infty \langle \gamma - \beta \rangle^{-\infty } ) $ and from this the bound 
$ e^{-s G^{\rm{w}} } r_+ e^{sg} = \mathcal O ( h^\infty ) $
easily follows (see \eqref{eq:below} for a similar argument).  

For $ r_- $ we write
\begin{gather*}  \langle \gamma \rangle^{-s} (  r_- e^{ s G^{\rm{w}} } v )
 ( \gamma ) =  \langle \gamma \rangle^{-s} (R_+ ( E_-^0 e^{ s G^{\rm{w}} }  v))( \gamma ) = 
\langle v , \tilde u_\gamma \rangle , \\ 
\tilde u_\gamma := \sum_\gamma  
 \chi^{\rm{w}}_\gamma  S_\gamma^*  
\tilde \chi_{\gamma}^{\rm{w}} \langle \gamma \rangle^{-s} w_\gamma .
\end{gather*}
We claim that $ \langle \tilde u_\gamma, \tilde u_\beta \rangle = 
\mathcal O ( h^\infty \langle \beta - \gamma \rangle^{-\infty } ) $.
This follows similarly to previous arguments 
using $ \tilde \chi_\gamma^{\rm{w}} w_\gamma = \mathcal O ( h^\infty | \gamma - \gamma|^{-\infty } ) $, $ \gamma \neq \gamma $, and 
$ S_\gamma^* w_\gamma = 0 $. 

This concludes the proof of \eqref{eq:propE0} and in turn that estimate
shows that 
\[ \left(\begin{matrix} Q^{\rm{w}}   - z & R_- \\
\ \ R_+ & 0 \end{matrix}\right)^{-1} = 
 \left(\begin{matrix} E^0 & E_+^0 \\
E_-^0 & E_{-+}^0 \end{matrix}\right) \left( 
 \operatorname{id}_{ L^2 \times \ell^2 } +
\left(\begin{matrix} \tilde r &   \tilde r_+ \\
\  \tilde r_- & \tilde r_{-+}   \end{matrix}\right) \right) ,\]
where
\begin{equation}
\label{eq:remest}  \left(\begin{matrix} e^{- s G^{\rm{w}} } & 0 
\\ 0 &  e^{ -s g } \end{matrix}\right) \left(\begin{matrix} \tilde r &   \tilde r_+ \\
\  \tilde r_- & \tilde r_{-+}   \end{matrix}\right) \left(\begin{matrix} e^{ s G^{\rm{w}} } & 0 
\\ 0 &  e^{ s g } \end{matrix}\right) = 
\mathcal O ( h^\infty )_{ L^2 \times \ell^2 \to L^2 \times \ell^2 } 
\end{equation}
This and Lemma \ref{l:symG} imply \eqref{eq:E} and \eqref{eq:propE}.
\end{proof}

\section{Density of states}

We now use the analysis of \S \ref{anal} to describe the renormalized trace of 
the resolvent of $ Q^{\rm{w}} (x,hD) $. This will lead us to an explicit semiclassical description of the density of states of the Hamiltonian $H^B$ stated in \eqref{tracediff}.

The Schur complement formula 
and \eqref{eq:E} gives for $ | z - z_1 | \leq \epsilon_0 h $, 
\[ ( Q^{\rm{w}} ( x, h D ) - z )^{-1}  = E (z, h ) - E_{+} ( z, h ) 
E_{-+} ( z, h )^{-1}  E_{-} ( z, h ) .\]
Hence, by \eqref{pdotrace},
\begin{equation}
\label{eq:hattrE}
\begin{split} \widehat \tr ( Q^{\rm{w}} ( x, h D ) - z )^{-1} & = 
 G_{z_1} ( z, h ) + J_{z_1 } ( z , h ) , 
\end{split}
\end{equation}
where 
\[  G_{z_1} ( z, h ) := 
\frac{1}{ 4 \pi^2 } \int_{ \TT^2_*}  \tr_{\CC^2} \sigma ( E (z, h ) ) d x d \xi  \]
is holomorphic in $ (z_1 - \epsilon_0 h , z_1 + \epsilon_0 h ) 
+ i ( - 1, 1 )  $ and
\begin{equation}
\label{eq:Jz1} 
J_{z_1 } ( z , h )  := \frac{1}{ 4 \pi^2 } \int_{ \TT^2_*} \tr_{\CC^2}  \sigma ( E_+ ( z, h ) 
E_{-+} ( z, h )^{-1}  E_{-} ( z, h ) ) dx d\xi.
\end{equation}

Dropping $ ( z, h ) $ and writing 
$ A := E_+ E_{-+}^{-1} E_- $, we are seeking $ \sigma ( A ) $ for the operator with the Schwartz kernel, $ K_A $, given by 
\begin{equation}
\label{eq:KA}
K_A ( x, y ) = \sum_{\gamma ,\beta \in \ZZ_*^2 } E_+ ( x, \gamma ) 
E^{-1}_{\pm} ( \gamma - \beta ) E_- ( \beta , y ) .
\end{equation} 
From \eqref{eq:propE} we see that 
\begin{equation}
\label{eq:Emploc}
\begin{split}  E_{-+} ( \gamma  ) & = \delta_{\gamma 0} E_{-+}^0 ( \gamma ) + 
\mathcal O ( h^\infty \langle \gamma \rangle^{-\infty } ) \\ & 
= 
  \delta_{ \gamma 0 } ( z - \kappa ( n_1 h , h ) ) \operatorname{id}_{\CC^2 } + \mathcal O ( |\Im z|^{-1} 
h^\infty \langle \gamma  \rangle^{-\infty } )  . 
\end{split}
\end{equation}
We recall that $ z \in (z_1 - \epsilon_0 h , z_1 + \epsilon_0 h ) 
+ i ( - 1, 1 )   $, $ n = n_1 ( z_1 , h ) $, 
and that \eqref{eq:specrah} holds. It follows that for 
\begin{equation}
\label{eq:lowerImz}
| \Im z | > h^{M } ,
\end{equation}
where $ M $ is arbitrary and fixed, we have 
\begin{equation}
\label{eq:Einvexp}
\ E_{-+}^{-1} ( \gamma ) = 
 ( z - \kappa ( n_1 h, h ) 
 )^{-1}  \delta_{\gamma, 0 } \operatorname{id}_{ \CC^2 }  
+ \mathcal O ( h^\infty \langle \gamma  \rangle ^{-\infty } )  . 
\end{equation}
We now want to use this expression of $ E_{-+}^{-1} $ to analyse the symbol of $ A $.

\medskip

\noindent
{\bf The leading term.} 
To obtain the leading term in \eqref{eq:Jz1} we 
define
\begin{equation}
\label{eq:J0z1} 
 J^0_{z_1} ( z, h )   := \frac{1}{ 4 \pi^2 } \int_{ \TT^2_*} ( z - \kappa ( n_1 h , h ) )^{-1} 
 \tr_{ \CC^2} \sigma ( E^0_+ ( z, h ) 
  E_{-}^0 ( z, h ) ) dx d\xi,
\end{equation}
where the approximations of $ E_\pm $, $ E_\pm^0 $, are defined in \eqref{eq:Eapp}:
\begin{gather*} E_+^0 ( z, h ) v_+ ( x ) = \sum_{ \gamma} w_\gamma ( x )  v_+ ( \gamma) , \ \ ( E_-^0 (z, h ) v ) ( \gamma ) = \langle v , w_\gamma \rangle = 
\left(\begin{matrix} \langle v , w_\gamma^+ \rangle \\
\langle v , w_\gamma^- \rangle  \end{matrix}\right) , \\ 
w_\gamma = ( w_\gamma^+, w_\gamma^-) = ( r_\gamma  u_{n_1}^+ ( h ) , r_\gamma u_{n_1}^- ) , \ \ 
v_+ \in \ell^2 ( \ZZ^2_*, \CC^2 ) , \ \ v \in L^2 ( \RR, \CC^2 ) .
\end{gather*}
 The inverse $ E_{-+}^{-1} $ was replaced by the first
term on the right hand side of \eqref{eq:Einvexp}. 

To analyse $ J^0_{z_1} $ we use the formula 
for the Weyl symbol in terms of the Schwartz kernel:
\begin{equation}
\label{eq:K2a}
\begin{gathered}
A  u ( x ) = \int_{\RR^2 } K ( x, y ) u ( y ) dy , \ \ 
K ( x, y ) = \frac{1}{ 2 \pi h  } \int_{ \RR }  a ( \tfrac{x+y}2, \xi)
e^{ \frac i h ( x - y )} d \xi , \\
a ( x, \xi ) = \int K ( x - \tfrac w 2 , x + \tfrac w 2 ) e^{ \frac i h  w \xi } dw , 
\end{gathered}
\end{equation}
see \cite[\S 4.1]{ev-zw}. In our case $ A = E_+^0 ( z, h ) E_-^0 ( z, h) $,
where (see \eqref{eq:Eapp}) $ E_+^0 = R_- $, $ E_-^0 = R_+ = R_-^* $, where
$ R_\pm $ are given in \eqref{eq:defRp}. That is, 
\[ E_+^0 f ( x ) = \sum_{ \alpha } E_+^0 ( x, \alpha ) f ( \alpha ) , \ \
E_+^0 ( x , \alpha ) = w_\alpha ( x ) = (w_\alpha^+ ( x ) , w_\alpha^- ( x )  )  \in \CC^2 \otimes \CC^2 ,  \] 
where $  f ( \alpha ) = ( f^+ ( \alpha)  , f^- ( \alpha ) )^t \in \CC^2 $, 
and 
\[ E_-^0 u ( \gamma ) = \int_\RR E_- ( \gamma , x ) u ( x ) dx , \ \ 
E_-^0 ( \gamma, x ) = w_\gamma( x )^* = \left(\begin{matrix}  \bar w_\gamma^+ ( x)  \\
\bar w_\gamma^- ( x)  \end{matrix}\right) , \ \ 
u \in L^2 ( \RR; \CC^2 ) . \] 
This means that 
\[  K ( x, y ) = \sum_{ \alpha } E_+^0 ( x, \alpha ) E_-^0 ( \alpha, h ) = 
\sum_{\alpha } w_\alpha ( x ) w_\alpha ( y )^* ,
\]
which in turn gives,
\[ \begin{split}  \sigma ( E^0_+ ( z, h )   E^0_{-} ( z, h ) ) (x , \xi ) & = 
\sum_\alpha \int_\RR w_\alpha ( x - \tfrac w 2) w_\alpha^* ( x - \tfrac w 2) 
e^{ \frac i h w \xi } w \\
& = \sum_\alpha \int_\RR e^{  \frac i h w ( \xi - \alpha_2) }  w_0 ( x - \tfrac w 2 - \alpha_1 ) { w_0 }( x + \tfrac w2 - \alpha_1 )^* dw .
\end{split}
 \]
Hence, 
\begin{equation}
\label{eq:E0pm} \begin{split} 
&  \int_{ \TT^2_*}  \sigma ( E^0_+ ( z, h ) 
   E^0_{-} ( z, h ) ) \frac{dx d\xi }{ 4 \pi^2 } = \\
& \ \ \ \ \   \sum_\alpha 
   \int_{\TT_*^2}  \int_\RR 
e^{  \frac i h w ( \xi - \alpha_2) }  w_0 ( x - \tfrac w 2 - \alpha_1 ) { w_0 }( x + \tfrac w2 - \alpha_1 )^*
d w \frac{dx d\xi }{ 4 \pi^2 } = 
\\
& \ \ \ \ \     \int_{\RR^2} \int_\RR
e^{  \frac i h w  \xi }  w_0 ( x - \tfrac w 2  ) { w_0 }( x + \tfrac w 2 )^* 
d w \frac{dx d\xi }{ 4 \pi^2 } =  \\
& \ \ \ \ \ 
 \frac{1}{ 2 \pi }  \int_\RR w_0(x) w_0(x)^* dx 
= 
\frac{h}{2\pi} I_{\CC^2} 
\end{split} \end{equation}
Inserting this in \eqref{eq:J0z1} gives 
\[ J_{z_1}^0(z, h ) = \frac{h} {  \pi } ( z - \kappa ( n_1 h , h ) )^{-1} .
\]

To analyze the remaining contribution to \eqref{eq:Jz1} we use 
\eqref{eq:propE} to write 
\begin{equation}
\label{eq:Epm0} 
\begin{gathered} ( E_{-+ } ( z, h )^{-1}  - ( z - \kappa ( n_1 h , h ) )^{-1} I_{\CC^2} )
v_+ ( \gamma ) = \sum_{\alpha } e ( \gamma - \alpha ) v_+ ( \alpha ) , 
\\ e ( \gamma ) = e( z, h , \gamma ) = \mathcal O ( h^\infty \langle \gamma \rangle ^{-\infty }) , \ \ | \Im z | > h^M . \end{gathered}
\end{equation}
Hence $ J_{z_1} ( z , h ) = J^0_{z_1} ( z , h ) + J^1_{z_1} ( z, h ) $ 
where, 
\[  J_{z_1}^1 ( z, h )   = 
 \int_{ \TT^2_*}  \tr_{\CC^2 } \sigma ( E_+ ( z, h ) 
( E_{-+ } ( z, h )^{-1}  - ( z - \kappa ( n_1 h , h ) )^{-1} I_{\CC^2}  )  E_{-} ( z, h ) ) \frac{dx d\xi }{ 4 \pi^2 } .\]
Lemma \ref{l:symG} and \eqref{eq:remest} give 
\begin{equation*}
\begin{gathered}
 E_+ ( z, h ) v_+ ( x ) = \sum_{ \gamma}    r_\gamma  W_+  ( x ) v_+ ( \gamma ) , \ \ W_+ = w_0  + e_0  , \ \ 
e_0 = \mathcal O ( h^\infty)_{\mathscr S } , \\
( E_- ( z , h ) v )( \gamma ) =  \langle v , r_\gamma W_- \rangle, \ \ W_- = w_0 + f_0 , \ \ f_0 \in \mathcal O ( h^\infty )_{\mathscr S } ,
\end{gathered}  
\end{equation*}
so that, using \eqref{eq:K2a} again, 
\[ \begin{split} J_{z_1}^1 ( z, h ) & = \sum_{\gamma, \beta  } 
  \int_{\TT^2_*}  \int_\RR \tr_{ \CC^2} E_+ ( x - \frac w2 , \gamma ) 
e ( \gamma - \beta ) 
E_- ( \beta, x + \frac w2 ) e^{ \frac i h w \xi } dw \frac{dx d\xi }{ 4 \pi^2h } \\
& = \sum_{ \gamma , \beta} 
  \int_{\TT_*^2}  \int_\RR 
 r_{\gamma + \beta}  W_+ ( x - \frac w 2  ) e( \gamma )  { r_\beta  W_-}( x + \frac w 2 )^* e^{ \frac i h w \xi} 
 d w \frac{dx d\xi }{ 4 \pi^2 } .
 \end{split} 
 \]
As in \eqref{eq:E0pm} we now use the sum over $ \beta_2 $ to change integration in $ \xi $ from  $ \TT_*^1 $ to $ \RR $ and then integrate in $
w $ and $ \xi $. This and \eqref{eq:Epm0} give
\[ \begin{split} 
 J_{z_1}^1 ( z, h ) & =
  \frac{h} { 2 \pi} \sum_{\gamma} \sum_{\beta_1 } 
    \int_{\TT^1_*}  e^{ i x \gamma_2} 
W_+ ( x  - \beta_1 - \gamma_1 ) e ( \gamma ) 
{ W_- }( x- \beta_1)  dx 
\\
& = \mathcal O ( h^\infty ) \sum_\gamma \int_\RR
| W_+ ( x - \gamma_1 ) | \langle \gamma \rangle^{-\infty} 
|W_- ( x ) | dx 
=  \mathcal O ( h^\infty \| W_- \|  \| W_+\|) .
\end{split} \end{equation*}

The following proposition summarizes what we have done in this section so far:

\begin{prop}
\label{th:trac1}
Suppose that $ Q $ is given by \eqref{eq:effha} and that $ \widehat \tr $ is defined in \eqref{pdotrace}. Let $ z_1 $ be chosen as in 
\eqref{eq:specrah}. Then 
\begin{equation}
\label{eq:thtrac1}
\begin{gathered}   
\widehat \tr ( Q^{\rm{w}} ( x, hD ) - z )^{-1} := \frac h\pi \sum_{ n \in \mathbb Z }  
({ z - \kappa  ( hn ; h )  })^{-1} + F_{z_1} (z, h) + 
\mathcal O ( h^\infty ) ,  \\ | \Im z|  > h^M ,  \ \ |z - z_0 | \leq \epsilon_0 h ,  
\end{gathered}
\end{equation} 
where $ F_{z_1} ( z , h ) $ is holomorphic in $ | z - z_0 | \leq \epsilon_0 h $, $ M $ is arbitrary and 
$ \kappa ( nh, n ) $ defined by \eqref{eq:QM}. 
\end{prop}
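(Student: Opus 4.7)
The computations leading up to (\ref{eq:thtrac1}) are already carried out in this section; my plan is to assemble them into the proof. Starting from the Schur complement formula
\[
(Q^{\mathrm w}(x,hD) - z)^{-1} = E(z,h) - E_+(z,h) E_{-+}(z,h)^{-1} E_-(z,h),
\]
which holds on the disk $|z - z_1| < \epsilon_0 h$ by Proposition~\ref{l:Gru}, applying $\widehat{\tr}$ yields the decomposition $G_{z_1}(z,h) + J_{z_1}(z,h)$ from (\ref{eq:hattrE}). The first piece $G_{z_1}$ is holomorphic on the whole disk: Proposition~\ref{l:Gru} provides invertibility of the Grushin matrix throughout the disk, while the only $z$-dependence in its entries is affine (from $Q^{\mathrm w} - z$), so $E(z,h)$ depends holomorphically on $z$.

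The identification of $J_{z_1}$ with a single pole $(h/\pi)(z - \kappa(n_1 h, h))^{-1}$ modulo $\mathcal O(h^\infty)$ is the core analytic step. I would use (\ref{eq:Einvexp}) to split $E_{-+}^{-1}$ into its leading part $(z - \kappa(n_1 h,h))^{-1}\delta_{\gamma,0} I_{\mathbb C^2}$ and a remainder of size $\mathcal O(h^\infty \langle \gamma\rangle^{-\infty})$, valid once $|\Im z| > h^M$. Substituting the leading part into the formula (\ref{eq:Jz1}) and invoking the Weyl-symbol-from-kernel formula (\ref{eq:K2a}) with the translation structure $w_\gamma = r_\gamma u_{n_1}^\pm(h)$ converts the $\gamma_1$-sum of $\xi$-integrals over $\mathbb T_*^1$ into a single $\xi$-integral over $\mathbb R$; combined with $L^2$-normalization and the $\mathcal O(h^\infty)$-almost orthogonality of the $u_{n_1}^\pm$ (Lemma~\ref{l:ortho}), this collapses the integral to $(h/2\pi) I_{\mathbb C^2}$, yielding exactly $J_{z_1}^0 = (h/\pi)(z - \kappa(n_1 h, h))^{-1}$. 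The remainder $J_{z_1}^1$ is treated by the same Weyl-kernel setup, now with the off-diagonal decay $\mathcal O(h^\infty \langle \gamma - \beta\rangle^{-\infty})$ paired against the Schwartz decay of $W_\pm = w_0 + \mathcal O(h^\infty)_{\mathscr S}$ provided by (\ref{eq:remest}); a Schur- or Cotlar--Stein-type estimate, exactly as in the proof of (\ref{eq:propE0}), collapses the lattice sum to $\mathcal O(h^\infty)$.

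To pass from the isolated pole at $n = n_1$ to the full sum over $n \in \mathbb Z$ appearing in (\ref{eq:thtrac1}), I would add and subtract the remaining terms. By the separation condition (\ref{eq:specrah}), one has $|z - \kappa(nh, h)| \geq \epsilon_0 h$ for every $n \neq n_1$ and every $z$ in the disk, so each summand $(z - \kappa(nh, h))^{-1}$ is holomorphic on $|z - z_1| < \epsilon_0 h$. Defining
\[
F_{z_1}(z,h) := G_{z_1}(z,h) - \frac{h}{\pi}\sum_{n \neq n_1}\frac{1}{z - \kappa(nh, h)}
\]
then produces the stated form of the expansion with a holomorphic $F_{z_1}$.

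The main obstacle is giving sense to the ``infinite'' sum over $n$. Proposition~\ref{p:specQ0} only defines $\kappa(nh,h)$ for those $n$ with $\kappa(nh,h) \in [-\delta_0, \delta_0]$, i.e.\ $|n| \lesssim 1/h$, so the sum in (\ref{eq:thtrac1}) is tacitly finite; if one wishes to keep a genuinely infinite range, the odd symmetry $\kappa(-\zeta,h) = -\kappa(\zeta,h)$ from Proposition~\ref{p:specQ0} permits pairing $n$ and $-n$ to rewrite the tail as $\sum_{n > 0} 2z/(z^2 - \kappa(nh,h)^2)$, which converges in view of the growth of $\kappa(nh,h)^2$ dictated by (\ref{eq:g2F}). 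A secondary point is that the weighted estimates (\ref{eq:weighted2})--(\ref{eq:remest}) on the Grushin remainders must be carried carefully through the Weyl-kernel computation of $J_{z_1}^1$; it is precisely the weighted decay in $\gamma$ that ensures the $\mathcal O(h^\infty)$ bound survives the summation over the lattice $\mathbb Z_*^2$.
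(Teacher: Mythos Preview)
Your proposal is correct and follows the paper's approach essentially verbatim: the paper presents Proposition~\ref{th:trac1} as a summary of the preceding computations, and your assembly of the Schur complement, the splitting $G_{z_1}+J_{z_1}$, the leading-term computation via (\ref{eq:K2a}) and (\ref{eq:E0pm}), and the $\mathcal O(h^\infty)$ remainder $J^1_{z_1}$ matches it step for step. You add two clarifications the paper leaves implicit---the explicit definition of $F_{z_1}$ by subtracting the holomorphic poles $n\neq n_1$, and the finiteness/convergence of the sum over $n$---both of which are correct and useful.
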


\begin{rem}
Using one variable complex analysis, a crude estimate
$ G_{z_1} ( z, h ) = \mathcal O ( h^{-M_0} ) $ 
and maximum principle similar to \cite[Lemma D.1]{res}
one can show that \eqref{eq:thtrac1} holds
in a fixed neighbourhood of the Dirac point $ \Delta|_{B_k}^{-1} ( 0 ) $
with $ F $ independent of $ z_1 $ and holomorphic. As in \cite{HS2} we opt for a simpler version of piecing together local expressions 
\eqref{eq:thtrac1} using a partition of unity.
\end{rem}

We are now in the position to prove the main theorem describing the semiclassical density of states formula for our model of graphene:

\begin{theo}
\label{theorem3}
Let $z^D := \Delta\vert_{B_k}^{-1}(0)$ be the energy of the {\em Dirac points} located on the $k$-th band. If $ I $ is a sufficiently small 
neighbourhood of $ z^D $, then for $ f \in C^{\alpha}_{\rm{c}} ( I) $\begin{equation}
\label{eq:tracef}
\widetilde \tr f ( H_B ) = \frac{h}{\pi \left\lvert b_1 \wedge b_2 \right\rvert} \sum_{ n \in \ZZ } 
f ( z_n ( h  ) ) + \mathcal O (\| f \|_{ C^\alpha } h^\infty ) , \ \ 
\Delta ( z_n ( h ) ) = \kappa ( n h , h ) , 
\end{equation}
where $ \kappa ( n h , h ) $ is given by \eqref{eq:QM}. 
\end{theo}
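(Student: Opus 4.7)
The plan is to substitute the trace expansion of Proposition \ref{th:trac1} into the representation formula \eqref{tracediff} and extract $f(z_n)$ at each Landau level via the Cauchy--Pompeiu formula. Take $I$ a sufficiently small neighborhood of $z^D$; since the Dirichlet spectrum of $\Lambda^D_{(0,1)}$ sits near the Hill band edges (where $|\Delta| = 1$), it is disjoint from $I$, and the Dirichlet sum in \eqref{tracediff} vanishes.

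Choose an almost-analytic extension $\widetilde f$ of $f \in C^\alpha_{\rm c}(I)$, compactly supported in a complex neighborhood of $I$, with $|\partial_{\bar z}\widetilde f(z)|$ satisfying the appropriate decay in $|\Im z|$. Patching the local expansions of Proposition \ref{th:trac1} through a partition of unity in the spectral parameter (as indicated in the remark after that proposition) yields, on a fixed complex neighborhood of $z^D$,
\[
\widehat \tr \bigl( Q^{\rm w} - \Delta(z) \bigr)^{-1} = \frac{h}{\pi} \sum_{n \in \ZZ} \bigl( \Delta(z) - \kappa(nh,h) \bigr)^{-1} + F(z,h) + \mathcal O(h^\infty), \quad |\Im z| > h^M,
\]
with $F$ holomorphic and $M$ arbitrary. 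The holomorphic piece $F$ drops out after integration against $\partial_{\bar z}\widetilde f$ by Stokes' theorem, and the $\mathcal O(h^\infty)$ term contributes $\mathcal O(\|f\|_{C^\alpha} h^\infty)$. For each $n$ with $\kappa(nh,h) \in \Delta(I)$, since $\Delta' \neq 0$ on the interior of $B_k$, there is a unique simple root $z_n = z_n(h) \in I$ of $\Delta(z_n) = \kappa(nh,h)$, and
\[
\frac{\Delta'(z)}{\Delta(z) - \kappa(nh,h)} = \frac{1}{z - z_n} + g_n(z),
\]
with $g_n$ holomorphic on $I$. The Cauchy--Pompeiu formula extracts $f(z_n)$ from the simple pole; $g_n$ integrates to zero against $\partial_{\bar z}\widetilde f$; and indices $n$ with $\kappa(nh,h) \notin \Delta(I)$ produce integrands holomorphic on $\supp \widetilde f$, which drop out. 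Combining with the prefactor $\tfrac{2}{3\sqrt 3 \pi}$ from \eqref{tracediff} and the identity $|b_1 \wedge b_2| = \tfrac{3\sqrt 3}{2}$ produces the claimed main term.

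Two steps require the most care. Reconciling the lower bound $|\Im z| > h^M$ of Proposition \ref{th:trac1} with the limited regularity $C^\alpha$ of $f$ demands a Helffer--Sj\"ostrand argument tailored to finite smoothness: the contribution of the strip $|\Im z| \leq h^M$ must be controlled by exploiting cancellation in the integral, not merely the naive bound from $|\partial_{\bar z}\widetilde f| \cdot \|(Q^{\rm w} - \Delta(z))^{-1}\|$, which need not be integrable for $\alpha < 1$. Assembling the $\mathcal O(h)$-wide local expansions of Proposition \ref{th:trac1} into a single statement on a fixed neighborhood of $z^D$ with a coherent holomorphic remainder is the other main obstacle: one must verify that the local remainders $F_{z_1}$ combine across overlapping windows either into a single globally holomorphic function or into terms that contribute only $\mathcal O(\|f\|_{C^\alpha} h^\infty)$ after integration.
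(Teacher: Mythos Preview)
Your overall architecture is right: plug Proposition~\ref{th:trac1} into \eqref{tracediff}, kill the holomorphic pieces with Cauchy--Pompeiu, and read off $f(z_n)$ at the simple zeros of $\Delta(z)-\kappa(nh,h)$. The Dirichlet term indeed vanishes for $I$ small. Where your proposal diverges from the paper is in the two ``obstacles'' you flag, and in both cases the paper takes a more elementary route than the one you sketch.

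\textbf{Patching.} You aim to assemble the local expansions of Proposition~\ref{th:trac1} into a single formula on a fixed neighbourhood of $z^D$ with a globally holomorphic remainder $F$. The paper does not do this. Instead it covers $I$ by intervals $I^1_{z_1}, I^2_{z_2}$ of width $\sim h$ and takes a partition of unity subordinate to this cover. Each partition function then has derivatives of size $h^{-k}$, and the argument is carried out for $f$ supported in a single such interval and satisfying $f^{(k)}=\mathcal O(h^{-N_0})$ for $k\le 4$. This sidesteps the need to match the $F_{z_1}$ across windows; the holomorphic remainder is only ever used locally.

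\textbf{Low regularity.} Your proposed fix for the $C^\alpha$ case --- an almost-analytic extension of $f$ itself plus ``cancellation in the integral'' near the real axis --- is not what the paper does, and you have not made it precise. The paper's device is mollification: set $f_h = f * (h^{-M_0}\psi(h^{-M_0}\cdot))$. Then $f_h$ is smooth with $f_h^{(k)}=\mathcal O(h^{-kM_0})$, so the $C^4$-with-growing-derivatives version just described applies to $f_h$ and gives the formula with error $\mathcal O(h^{M-4M_0})$. Meanwhile $\|f-f_h\|_\infty \le \|f\|_{C^\alpha} h^{\alpha M_0}$, and since there are $\mathcal O(h^{-1})$ terms in the sum and $\widetilde\tr$ is controlled by $\|\cdot\|_\infty$, swapping $f_h$ back to $f$ costs $\mathcal O(\|f\|_{C^\alpha} h^{\alpha M_0 -1})$. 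Choosing $M=5M_0$ and $M_0$ arbitrary gives $\mathcal O(\|f\|_{C^\alpha}h^\infty)$. This mollification step is the missing idea in your outline.
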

\begin{proof}
We cover $ I $ by intervals of type $ I^1_{z_1} := (\Delta|_{B_k} ) ^{-1} ( ( z_1 -  \epsilon_0 h , 
z_1 +  \epsilon_0 h ))  $ where $ z_1 $ is as in \eqref{eq:specrah}, and
intervals $ I^2_{z_2} :=  (\Delta|_{B_k} ) ^{-1} ( ( z_2 - \epsilon_1 h , z_2 + \epsilon_1 h )) $ where $  ( z_2 - 2\epsilon_1 h , z_2 + 2 \epsilon_1 h ) 
\cap \Spec ( Q_0 ( x, h D ) ) = \emptyset $ ($ Q_0 $ is defined in 
\eqref{eq:defQ0}). Lemma \ref{l:awayspec} shows that near intervals $ \Delta ( I^2_{z_2} ) $, $ \widehat \tr ( Q^{\rm{w}} ( x, h D ) -z)^{-1} $ is 
holomorphic. Since we are also away from $ \kappa ( h n ; h )'s $, that means
that \eqref{eq:thtrac1} holds also near $ I^2_{z_2} $.

 Following \cite[\S 10]{HS2} we proceed in two steps. First we recall that for $ f \in C^\infty_{\rm{c}} ( \RR ) $ 
satisfying 
\begin{equation}
\label{eq:condf}  f^{(k)} = \mathcal O ( h^{-N_0} ) \ \text{ for a fixed $ N_0 $ and $ 0 \leq k \leq 4 $.} 
\end{equation}
we can find an extension of $ f $, $\widetilde f  \in C^\infty_{\rm{c}} ( \CC ) $ satisfying
\begin{equation}
\label{eq:aae}
\widetilde f, \widetilde f' = \mathcal O ( h^{-N_0} ) , \ \ 
\partial_{\bar z } \widetilde f = \mathcal O ( h^{-N_0} |\Im z | ) ,
\end{equation}
In fact, Mather's construction of $ \widetilde f $ -- see \eqref{eq:mather} -- shows that 
$$ \partial_{\bar z } \widetilde f = 
| \Im z | \mathcal O ( \| \xi^2 \hat f ( \xi ) \|_{L^1 ( d \xi ) } ) = 
| \Im z | \mathcal O (| \supp f |  \sup_{ k \leq 4 }  | f^{(k)} | ), $$
and \eqref{eq:condf} implies \eqref{eq:aae}.

Using a partition of unity with functions supported in intervals
of type $ I_{z_j}^j$, $ j =1,2$, covering $ I $, we only need to
 consider $ f $ 
supported in $ I_{z_j}^j $ and satisfying \eqref{eq:condf}. 

If $ I $ is a sufficiently small neighbourhood of the Dirac point $ z^D 
=  \Delta\vert_{B_k}^{-1}(0)$,
we obtain no Dirichlet contribution in \eqref{tracediff}. (The Dirichlet spectrum is located at the band edges $\Delta(z)= \pm 1.$) We observe further that $\Delta$ has a non-vanishing derivative inside the $k$-th band and
$
 1/{\left\lvert \Im \Delta(z)  \right\rvert} \sim {1}/{\left\lvert \Im z  \right\rvert}.
$
Inserting \eqref{eq:thtrac1}  into \eqref{tracediff} and using a generalized version of the argument principle, as in the proof of Lemma \ref{Dirichletcalc}, we obtain
\begin{equation}
\label{eq:tracefM} 
\begin{split} \widetilde{\operatorname{tr}}(f(H^B)) 
 & = \frac{ h} { | b_1 \wedge b_2 | \pi^2 } 
 \int_{\CC } \partial_{\bar z } \widetilde f ( z ) \Delta' ( z ) 
\sum_{ n \in \ZZ } ( z - \kappa ( h n , n ) )^{-1 } dm ( z ) \\
& \ \ \ \ \ \  \ \ \ + 
\frac 1 \pi \int_{ |\Im z | < h^M } \partial_{\bar z } \widetilde f ( z )
\mathcal O ( 1/|\Im z | ) dm ( z ) 
\\
& =  
 \frac{h}{\pi \left\lvert b_1 \wedge b_2 \right\rvert}  \sum_{ n \in \ZZ } 
f ( z_n ( h  ) ) + \mathcal O ( h^{M-N_0 } ) , \ \   z_n ( h  ) =\Delta\vert_{B_k}^{-1}(\kappa(nh,h)).
\end{split}
 \end{equation}

We now approximate $ f \in C^{\alpha}_{\rm{c}}  (I) $ by 
\[  f_h ( x ) = h^{ -M_0} \int_\RR f ( y ) \psi ( h^{-M_0} ( x - y ) ) dy , \ \ \psi \in C^\infty_{\rm{c}} ( \RR; [ 0 , 1 ] ) , \ \ \int \psi ( y ) dy = 1. \]
The condition \eqref{eq:condf} is then satisfied with $ N_0 = 4 M_0 $. 
Since $ f \in C^{\alpha} $ we also have
\begin{equation}
\label{eq:appf}  \sup_x | f ( x ) - f_ h ( x ) | \leq \| f \|_{ C^\alpha }  h^{ \alpha M_0 } .
\end{equation}
By using \eqref{eq:tracefM} with $ f $ replaced by $ f_h $ and then using
\eqref{eq:appf} 
\[ \widetilde \tr f ( H_B ) = \frac{h}{\pi \left\lvert b_1 \wedge b_2 \right\rvert} \sum_{ n \in \ZZ } 
f ( z_n ( h  ) ) +  \mathcal O (\| f \|_{ C^\alpha } h^{\alpha M_0 -1 }  ) + 
\mathcal O (\| f \|_{ C^\alpha } h^{ M - 4 M_0 } ) . 
\]
By choosing $ M = 5 M_0 $ and then $ M_0 $ arbitrarily large we obtain 
\eqref{eq:tracef}.
\end{proof} 

Things become much simpler when $ f $ is smooth. For completeness we include
\begin{theo}
\label{t:smooth}
Suppose that $ f \in C^\infty_{\rm{c}} ( I ) $ where $ I $ is a small neighbourhood of a Dirac energy $ z^D $. Then for any  $N $
\begin{equation} 
\label{eq:Ajf}   \widetilde \tr \, f ( H_B ) = \sum_{ j=1}^{N} 
A_j ( f ) h^j + \mathcal O ( h^{N +1 } ) , \ \ 
A_0 ( f ) = \rho_0 ( f ) , \ \ A_1 ( f ) = 0 .  \end{equation}
\end{theo}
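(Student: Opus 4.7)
The plan is to invoke Theorem \ref{theorem3} (which covers $f \in C^\infty_{\mathrm c}(I) \subset C^\alpha_{\mathrm c}(I)$) to reduce the statement to expanding the Riemann sum
\[ S(h) := h \sum_{n \in \mathbb{Z}} f(z_n(h)), \qquad z_n(h) = \Delta|_{B_k}^{-1}(\kappa(nh,h)), \]
in powers of $h$. A direct Poisson summation in $\zeta = nh$ would fail because from \eqref{eq:g2F} together with $F_0(0) = 0$ and $F_0'(0) > 0$ one reads $\kappa(\zeta,h) \sim \mathrm{sgn}(\zeta)\sqrt{|\zeta|/F_0'(0)}$ near the origin, so $\zeta \mapsto f\circ Z(\zeta,h)$, with $Z(\zeta,h) := \Delta|_{B_k}^{-1}(\kappa(\zeta,h))$, is only H\"older $\frac12$ at $\zeta = 0$. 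The idea is to exploit the symmetry $\kappa(-\zeta,h) = -\kappa(\zeta,h)$ from Proposition \ref{p:specQ0} to pair the $n$ and $-n$ terms before Euler--Maclaurin.

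Setting $G(\omega) := \tfrac12[f(\Delta^{-1}(\sqrt\omega)) + f(\Delta^{-1}(-\sqrt\omega))]$ (smooth in $\omega$ near $0$ as a symmetric combination in $\pm\sqrt\omega$), I would define
\[ \phi_h(\zeta) := \tfrac12[f(Z(\zeta,h)) + f(Z(-\zeta,h))] = G(F^{-1}(\zeta,h)) \in C^\infty_{\mathrm c}([0,\infty)), \]
uniformly in $h$, with asymptotic expansion $\phi_h \sim \sum_{j \geq 0,\ j \neq 1} h^j \phi_j$ --- the absence of the $h^1$ term here is inherited directly from its absence in $F(\omega,h) \sim F_0(\omega) + \sum_{j \geq 2}h^j F_j(\omega)$ of \eqref{eq:g2F}. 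Writing $S(h) = h f(z^D) + 2h\sum_{n \geq 1}\phi_h(nh)$ and applying Euler--Maclaurin,
\[ 2h\sum_{n \geq 1}\phi_h(nh) = 2\int_0^\infty \phi_h(\zeta)\,d\zeta - h\phi_h(0) - 2\sum_{k=1}^N \frac{B_{2k}}{(2k)!}\, h^{2k}\phi_h^{(2k-1)}(0) + \mathcal O(h^{2N+2}), \]
the key cancellation is $h\phi_h(0) = hG(0) = hf(z^D)$ (using $F_j(0) = 0$ from Proposition \ref{p:specQ0}), which eliminates the sole odd-power boundary term. Combined with the fact that Euler--Maclaurin otherwise produces only even-power boundary contributions and that $\phi_h$ has no $h^1$ term, this yields the expansion of $S(h)/(\pi|b_1 \wedge b_2|)$ in integer powers of $h$ with $A_1(f) = 0$.

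The remaining step, and the main obstacle, is to identify
\[ A_0(f) = \frac{2}{\pi|b_1 \wedge b_2|}\int_0^\infty G(F_0^{-1}(\zeta))\,d\zeta = \frac{2}{\pi|b_1 \wedge b_2|}\int_0^{\omega_{\max}} G(\omega) F_0'(\omega)\,d\omega \]
(after the change of variables $\omega = F_0^{-1}(\zeta)$) with $\rho_0(f)$. Since $F_0(\omega) = (4\pi)^{-1}\int_{\gamma_\omega}\xi\,dx$ is, up to orientation sign, the symplectic area enclosed by $\gamma_\omega$ for $H(x,\xi) = |1+e^{ix}+e^{i\xi}|^2/9$, the coarea formula gives $4\pi F_0'(\omega) = \int_{\gamma_\omega} ds/|\nabla H|$; using $H(k) = \Delta(\lambda^\pm(k))^2$ from \eqref{lambda} and unwinding the coarea back into a $k$-integral over $\mathbb T^2_*$ recognizes $A_0(f)$ as the standard Bloch--Floquet density of states
\[ \rho_0(f) = \frac{1}{(2\pi)^2|b_1 \wedge b_2|}\int_{\mathbb T^2_*}[f(\lambda^+(k)) + f(\lambda^-(k))]\,dk. \]
This prefactor bookkeeping (involving $|W_\Lambda| = |b_1 \wedge b_2|$ and the normalization of the Bloch fiber integral) is the delicate part, though otherwise routine.
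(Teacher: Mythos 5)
Your proof is correct, but it takes a genuinely different route from the paper. The paper proves Theorem \ref{t:smooth} without invoking Theorem \ref{theorem3} at all: it returns to the trace identity \eqref{tracediff}, inserts the Helffer--Sj\"ostrand formula with the Mather almost analytic extension \eqref{eq:mather}, and uses the symbolic expansion of $(Q^{\rm{w}}-\Delta(\lambda))^{-1}=R^{\rm{w}}(\Delta(\lambda);x,hD_x,h)$ from \cite[Proposition 8.6]{D-S} for $|\Im\lambda|\gtrsim h^{\delta}$; the absence of an $h^{1}$ term in the Weyl expansion of $R$ gives $A_{1}(f)=0$ directly, and the $A_{j}(f)$ come out as explicit phase-space integrals of $\partial_{z}^{2j}(q_{j}f)$ evaluated on the Bloch bands $z_{\pm}(x,\xi)$, with $A_{0}(f)=\rho_{0}(f)$ read off immediately. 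You instead implement the Riemann-sum expansion which the introduction attributes to \cite{HS2} but which the paper deliberately does not carry out. Your key points are all sound: the unpaired summand is indeed only H\"older $\tfrac12$ at $n=0$, so a direct Euler--Maclaurin fails; the symmetry $\kappa(-\zeta,h)=-\kappa(\zeta,h)$ makes the paired summand $G\circ F^{-1}(\cdot,h)$ smooth up to $\zeta=0$ uniformly in $h$ (with only an $\mathcal O(h^{\infty})$ discrepancy from \eqref{eq:g2F}, harmless after summing $\mathcal O(1/h)$ terms); the lone odd boundary term $-h\phi_{h}(0)$ cancels against the $n=0$ term $hf(z^{D})$ precisely because $F_{j}(0)=0$; and your $A_{0}$ bookkeeping is consistent --- since $4\pi F_{0}(\omega)$ is the area enclosed by \emph{both} components of $\gamma_{\omega}$, the coarea computation reproduces $\rho_{0}(f)=\tfrac{1}{(2\pi)^{2}|b_{1}\wedge b_{2}|}\int_{\TT^{2}_{*}}\sum_{\pm}f(\lambda^{\pm}(k))\,dk$, which is exactly the paper's formula for $A_{0}$. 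What the paper's route buys is independence from the harder Theorem \ref{theorem3} and closed-form coefficients; what yours buys is a more elementary argument once \eqref{eq:tracef} is granted, and a transparent mechanism for $A_{1}=0$ (no Maslov-type $hF_{1}$ term in the Bohr--Sommerfeld function, plus the half-integer boundary cancellation).
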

\begin{proof}
We use the method of \cite[Chapter 7]{D-S} and consider an almost analytic extension of $ f $ defined by \eqref{eq:mather}. Then, avoiding 
again the Dirichlet eigenvalues by taking $ I $ small enough,
\[  \widetilde \tr \, f ( H_B ) = 
\frac{2}{ ( 2 \pi)^2 3 \sqrt 3 \pi} \int_{ \RR^2 / 2 \pi \ZZ^2 } \left(  \int_\CC 
 \partial_{\bar \lambda } \widetilde f ( \lambda ) \Delta' ( \lambda) 
 \tr_{ \CC^2} \sigma \left( ( Q^{\rm{w}} - \Delta ( \lambda ))^{-1} 
\right) d m ( \lambda ) \right)    dx d \xi ,\]
which follows from Definition \ref{pdotrace} and \eqref{tracediff}. 
From \cite[Proposition 8.6]{D-S} we have, for 
$ z \in D( 0 , C) \setminus \RR $ (and any fixed $ C $),
\begin{gather*}
 ( Q^{\rm{w}} - z )^{-1} = R^{\rm{w}} (z ;x, h D_x , h ) , \\ 
|\partial_x^\alpha \partial_\xi^{\beta} R ( z , x , \xi, h ) |
\leq C_{\alpha \beta} \max ( 1 , h/|\Im z |)^{3} |\Im z|^{ -1 - |\alpha |
 - |\beta| } .
 \end{gather*}
Hence in the formula for $ \widetilde \tr \, f ( H_B ) $ we can replace
$ \sigma ( Q^{\rm{w}} - \Delta ( \lambda ) )^{-1} $ by 
$ R ( \Delta ( \lambda ) , x, \xi , h ) $.  As in \cite[(8.14)]{D-S} 
we see that for $  |\Im \Delta ( \lambda ) | \simeq |\Im \lambda | 
\geq h^\delta $, $ 0 < \delta < \frac12 $, we have an expansion
\[ \begin{split} \tr_{\CC^2}  R ( \Delta ( \lambda ) , x, \xi , h ) & \sim 
 \tr_{\CC^2} ( Q ( x, \xi ) - \Delta ( \lambda ) )^{-1}  + 
h^2 \tr_{ \CC^2}  q_2( \Delta ( \lambda ) , x, \xi )  (Q ( x, \xi ) - \Delta ( \lambda ) )^{-5} \\
& \ \ \ \ \ \ \ + h^3 \tr_{ \CC^2} q_3 ( \Delta ( \lambda ) , x, \xi ) (Q ( x, \xi ) - \Delta ( \lambda ) )^{-7}  + \cdots ,
\end{split}
\]
where $ q_j ( z , x, \xi ) \in \CC^2 \otimes \CC^2 $ are polynomials in $ z $ of degree $ \leq 2j$ and the coefficients are $ (2 \pi \ZZ)^2 $ periodic.

Adapting the calculation in \cite[(8.16)]{D-S}  gives the expansion \eqref{eq:Ajf} with 
\[  A_j ( f ) = \sum_{ \pm } \frac{2}{ ( 2 \pi)^2 3 \sqrt 3 } \frac{1}{(2j)!}
\int_{ \RR^2 / 2 \pi \ZZ^2 } \tr_{\CC^2} 
\partial_z^{2j} ( q_j ( z , x , \xi ) f ( z ) )|_{ z = z_{ \pm } ( x, \xi ) } \, dx d \xi \]
where 
\[ z_{ \pm } ( x, \xi ) :=  \Delta^{-1} ( \pm \tfrac13 | 1 + e^{ix} + e^{i\xi} | )  .\]
In particular,
\[ A_0 ( f ) = \sum_{ \pm } \frac{2}{ ( 2 \pi)^2 3 \sqrt 3 } \frac{1}{(2j)!}
\int_{ \RR^2 / 2 \pi \ZZ^2 }  f ( z_\pm ( x, \xi ) ) d x d \xi, \]
which is $ \rho_0 ( f ) $ for $ f $ supported near $ z_D $.
\end{proof}

\section{Magnetic oscillations}

In this section we show how Theorem \ref{theorem3} can be used to describe low temperature magnetic oscillations in the (smoothed-out) density of states and in magnetization. In the physics literature they are known as 
the \emph{Shubnikov-de Haas} (SdH) and the \emph{de Haas-van Alphen} (dHvA) effects, respectively. 
 
We stress the asymmetry with respect to the Dirac energy levels which comes from semiclassical quantization conditions and the dispersion relations.
 It is not seen when a ``perfect cone" (that is, a harmonic oscillator) approximation is used -- see
\eqref{eq:physde}. We note that an asymmetry is already present in the 
case when there is no magnetic field. An experimental result in the setting
molecular graphene \cite[Figure 4d]{hari} is shown in Figure \ref{Fig:DOS}.
The corrections to the perfect cone approximation are due to the modified linear dispersion relation as energies move away from the Dirac points. The perfectly linear dispersion relation of the $\operatorname{QED}_{2+1}$-model has been a ubiquitous assumption in the physics literature -- see Gusynin--Sharapov \cite{sgb}, \cite{GS05}, \cite{GS} and references therein. The approach presented here leads to modified Landau levels showing the well known $\sqrt{n B}$-scaling only to leading order. 

\begin{figure}
  \centering
  \begin{subfigure}{0.45\textwidth}
 \includegraphics[height=7.5cm]{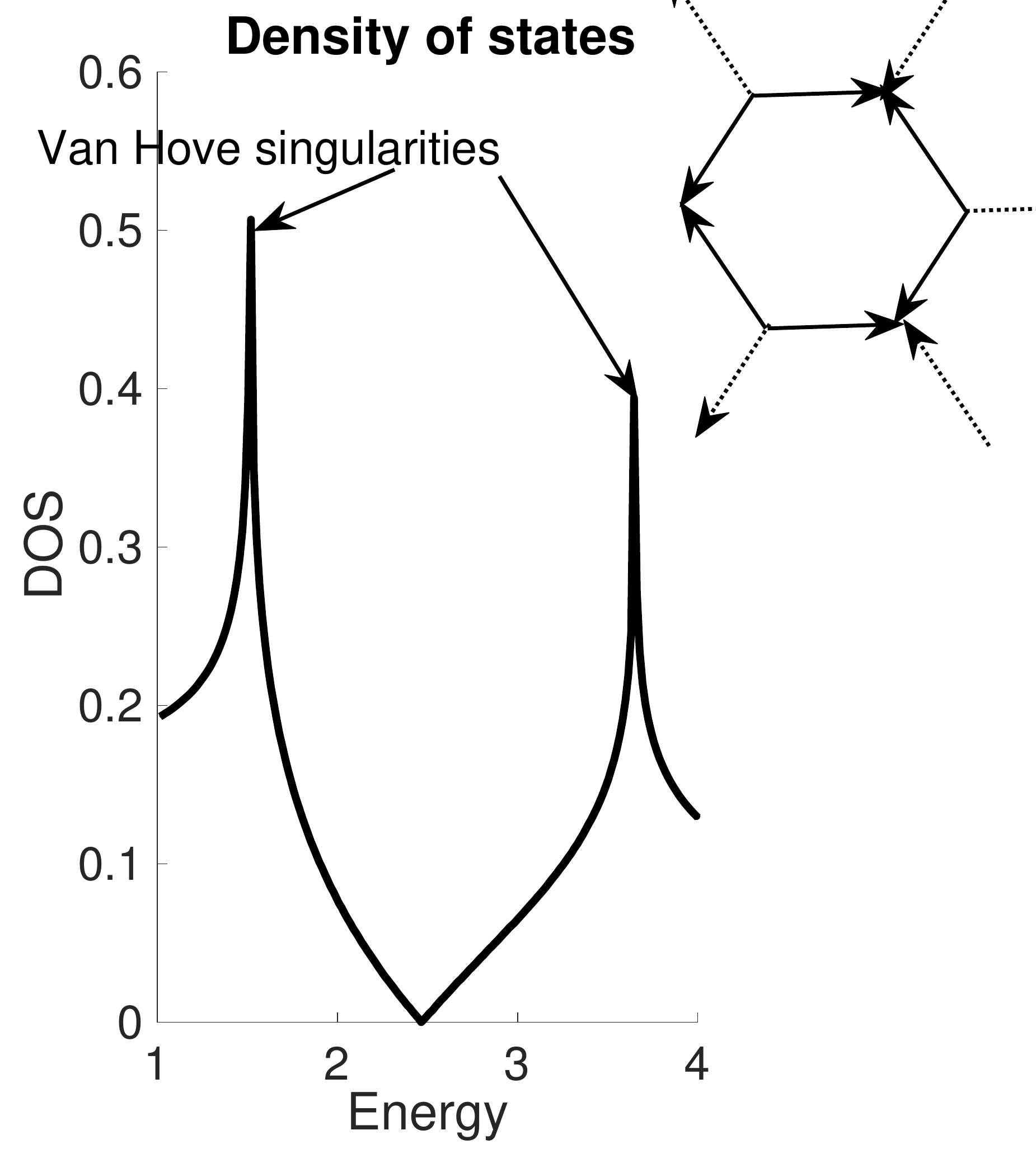}
    \caption{The DOS of the operator $H^{B=0}$ \eqref{magop} per hexagonal cell volume with zero magnetic field potential $(V_e)=0$ on the first Hill band $[0,\pi^2]$ as described in \eqref{Hillbands}.}
  \end{subfigure}
\qquad \
  \begin{subfigure}{0.45\textwidth}
    \includegraphics[height=7.5cm]{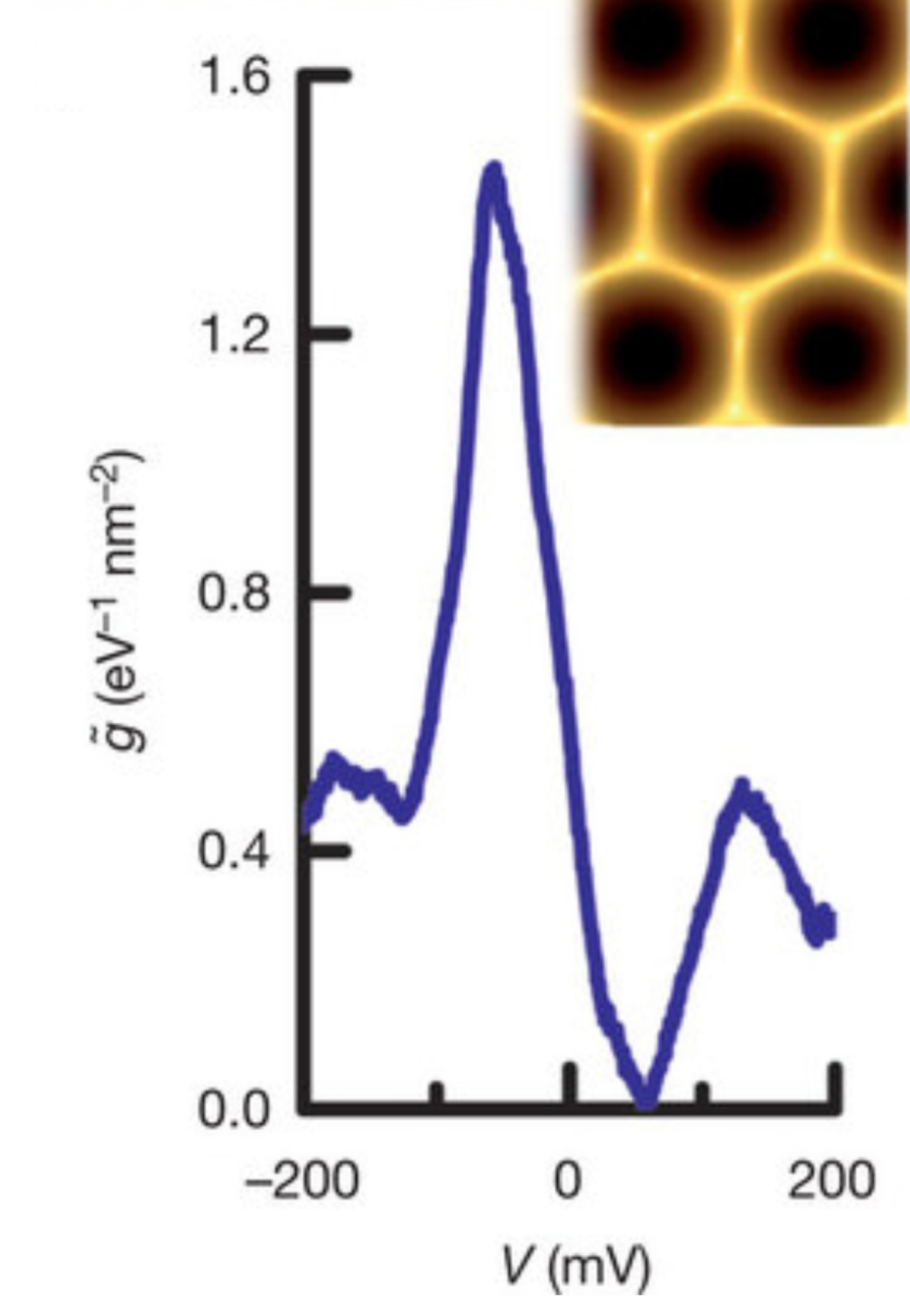}
    \caption{An experimental plot of the density of states for a molecular model of graphene obtained using scanning tunneling microscopy \cite{hari}.}
  \end{subfigure}
  \caption{\label{Fig:DOS}Comparing numerical and experimental no-magnetic field DOS in the 
  quantum graph model and molecular graphene, respectively. }
\end{figure}

\subsection{Shubnikov-de Haas oscillations in DOS}
The Shubnikov-de Haas (SdH) effect is the occurrence of oscillations in the density of states,
with periods proportional to the inverse strength of the magnetic field. These oscillations can be experimentally measured in terms of longitudinal conductivity or resistivity \cite{W11} and \cite{Tan11}. For a theoretical discussion of the relation between oscillations in electric and also thermal conductivities on the one hand and the density of states on the other hand, see also \cite{GS05}. 

We start with an approximation for the semiclassical Landau levels $z_h$ of $H^B$ introduced in Theorem \ref{th:trac1}. For that we consider an 
approxiate Bohr-Sommerfeld condition:
\begin{equation}
\label{BSC}
g(z^{(1)}_n ( h )) =|n|h, \ \ g ( x ) := F_0 \left( \Delta(x)^2\right)\vert_{I_{\delta ,k}} , 
\end{equation}
where $F_0$ is the normalized phase space area of one potential well in the Brouillon zone defined in Proposition \ref{p:specQ0} and $I_{\delta,k}$ as in Theorem \ref{theorem3}. 
Since $ F_0' ( 0 ) \neq 0 $, $ \Delta ( z_D ) = 0 $, 
$ \Delta' ( z_D ) \neq 0 $ (see \eqref{Hillbands}), we have $ g ( z_D ) = g' ( z_D ) 
= 0 $, $ g''( z_D ) > 0 $. This means that we have two branches of the
inverse of $ g $ defined for small $ x \geq 0 $:
 $ \pm ( g^{-1}_\pm ( x ) - z_D ) \geq 0 $. Then
\begin{equation}
\label{eq:appz}
z_{ \pm |n|}^{(1)} ( h ) = g_{\pm}^{-1} ( |n| h ) , \ \  z_0^{(1)} ( h ) = 0 .
\end{equation}

\begin{rem}
\label{rem:asymmetry}
Because of the asymmetry of the cones which are the solutions to $ | Q ( x, \xi ) - \Delta ( z ) |=0$ in a neighbourhood of the Dirac point $\Delta\vert_{B_k}^{-1}(0)$, we observe that although $\kappa(nh,h)=-\kappa(-nh,h)
 $ we have $z^{(1)}_{n}(h)\neq-z^{(1)}_{-n}(h)
+ \mathcal O ( h^\infty ) $ in general. That can already be seen in the simplest case \eqref{eq:Floqf}.
\end{rem}

\begin{figure}
\includegraphics[height=9cm]{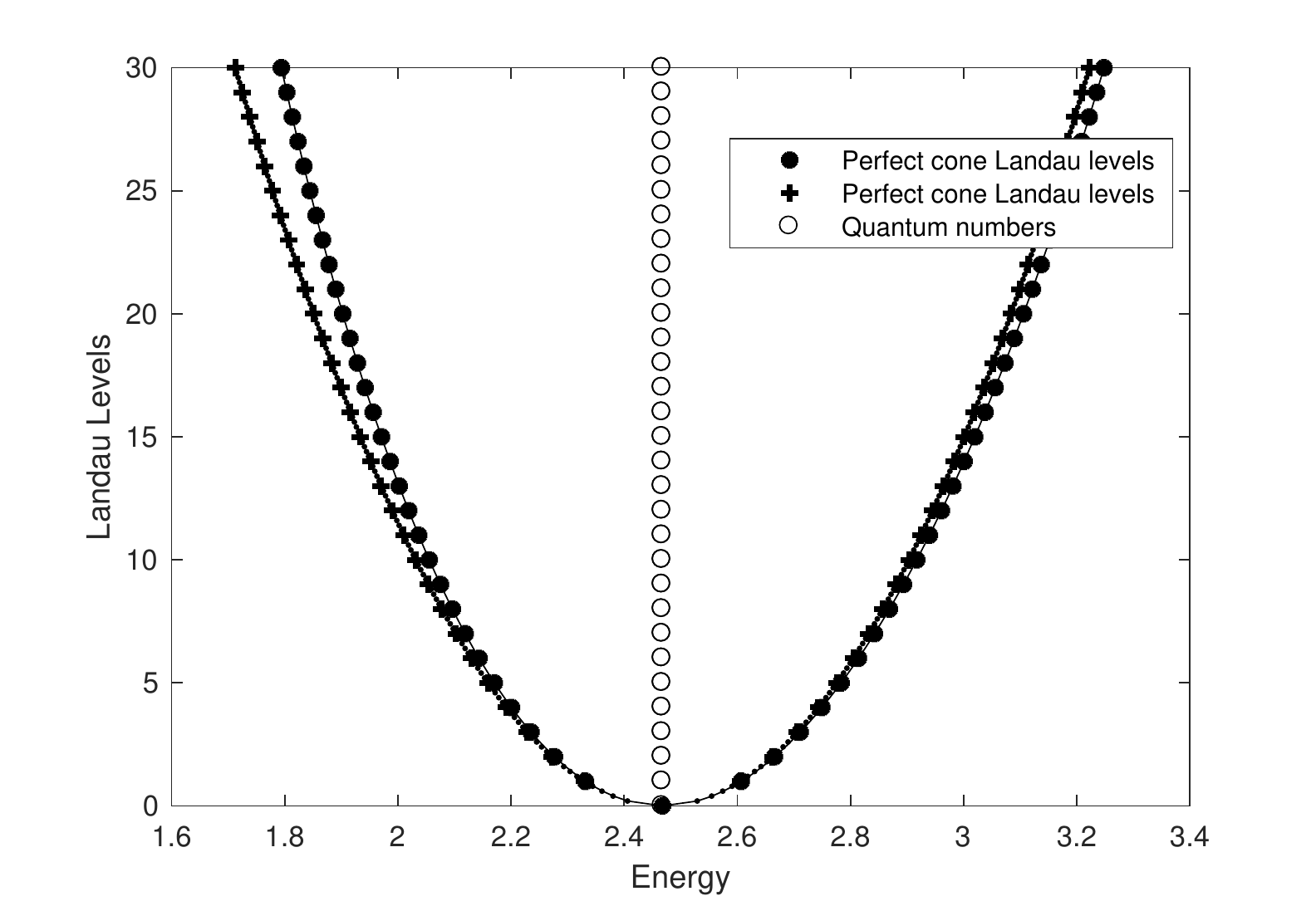} 
\caption{Landau levels \label{Fig:Landau} located on the Dirac cone of the first Hill band with zero potential derived from the Bohr-Sommerfeld condition and normalized phase space area $g$ \eqref{BSC} and its perfect cone approximation \eqref{eq:gc} for magnetic flux $h=0.01.$}
\end{figure}

We recall from  \eqref{eq:g2F} that 
 \[  F ( \Delta ( z_n ( h ) )^2 ) = F_0 ( \Delta ( z_n ( h ) )^2 ) + 
\mathcal O ( h^2 \Delta ( z_n ( h ))^2 ) = | n | h + \mathcal O ( h^\infty ) , \] 
which gives $ \Delta ( z_n ( h ) ) ^2  = \Delta ( z_n^{(1)} ( h))^2 
+ \mathcal O ( |n| h^3) + \mathcal O (h^\infty ) $. 
Hence, 
\begin{equation}
\label{eq:zone}
z_n ( h ) = z_n^{(1) } ( h ) + \mathcal O \left(  {h^{ \frac52} }{ |n|^{\frac12}} \right) , \ \ \ \  n \neq 0 
\end{equation}

For $ f \in C^\alpha ( I ) $, $ 0 < \alpha \leq 1 $, we then have 
\begin{equation}
\label{eq:Diracmeasures}
\rho_{B} ( f ) = \widetilde \rho_B ( f )   + \mathcal O ( \| f \|_{ C^\alpha}  h^{2\alpha } ) , \ \ \
\widetilde \rho_B ( f ) :=  
\frac h{\pi \left\lvert b_1 \wedge b_2 \right\rvert}  \sum_{n \in \mathbb{\ZZ}} f ( z^{(1)}_n ( h) ) .
\end{equation}
The error 
term came from the approximation \eqref{eq:zone} 
 and the fact that the number of terms contributing on the support of 
 $ f $ is bounded by 
$ \mathcal O ( 1/h ) $:
\[  h \sum_{ n \neq 0 } | f ( z_n ( h ) ) - f ( z_n^{(1)} ( h ) ) | 
\leq \| f \|_{ C^\alpha }  h^{1 +  \frac52 \alpha} \sum_{ 0 < n  \leq C/h } n^{  \frac12{\alpha} } = \mathcal O (\| f \|_{ C^\alpha }  h^{ 2\alpha  } )  .\]

The leading term in \eqref{eq:Diracmeasures} provides a refinement 
of \eqref{eq:physde} which is easy to investigate numerically. 
To compare it with \eqref{eq:physde} we calculate $ v_F $ (the value used
here differs by the area factor) by
using the leading term in the Taylor expansion of $ g $ (and \eqref{eq:norfo} to calculate $ F_0' ( 0 ) $):
\[   g ( x ) =  \Delta' ( z_D )^2 F_0' ( 0 ) x^2 + \mathcal O ( x^3 ) , \ \
F_0 ' ( 0 ) = 3^{\frac32} \ \Longrightarrow \ v_F = 3^{-\frac34} \Delta' ( z_D )^{-1}   . \]
In other words, a ``perfect cone" quantization condition reads,
\begin{equation}
\label{eq:gc}
\begin{gathered}
g_{\rm{c}} ( z_n^{\rm{c}} ( h ) ) = |n| h , \ \ g_{\rm{c}} ( x) = 
v_F^{-2} ( x - z_D)^2 ,   \ \  v_F = 3^{-\frac34} \Delta' ( z_D )^{-1}  ,
\\ z_n^{\rm{c}} = z_D + v_F \sgn(n) \sqrt{ |n| h } ,
\end{gathered}
\end{equation}
and the comparison with \eqref{BSC} is shown in Figure \ref{Fig:Landau}.

To plot the density of states we use $ \widetilde \rho_B ( f ) $ 
in \eqref{eq:Diracmeasures} with $ f_\mu ( x ) = e^{ - ( x - \mu )^2 / 2 \sigma^2 } / \sqrt{ 2 \pi} \sigma $ and plot $ \mu \mapsto \widetilde 
\rho_B ( f_\mu ) $. Since $ \| f_\mu \|_{C^1} = \mathcal O ( \sigma^{-2})$
we obtain valid approximation for $ \sigma \gg h $ -- see Figure \ref{fig:DOS4} 

\begin{figure}
\includegraphics[height=11cm]{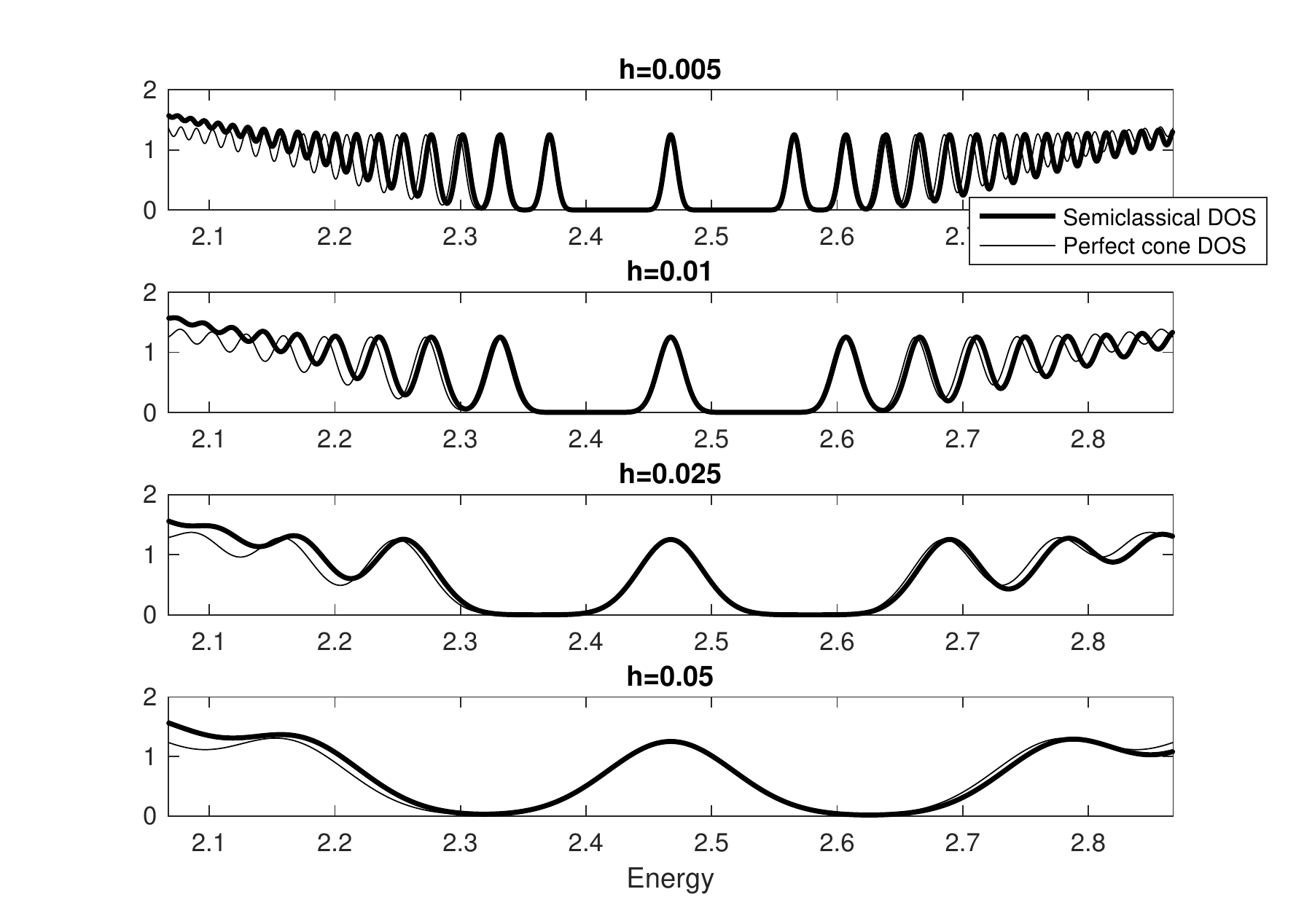} 
\caption{\label{fig:DOS4} The plots of $ \mu \mapsto \widetilde \rho_B ( \exp ( ( \bullet - \mu)^2 /2 \sigma^2 ) / \sqrt{ 2 \pi} \sigma ) $ for different values of $ h $ 
and $ \sigma = h $ (hence pushing the validity of \eqref{eq:Diracmeasures}; see also Figure \ref{fig:DOS1}). We note the asymmetry when compared to the 
density of states obtained using the perfect cone approximation \eqref{eq:gc}.}
\end{figure}

\subsection{De Hass--van Alphen oscillations}\label{dHvA}
As first discovered by de Haas and van Alphen in 1930, magnetization and magnetic susceptibility of
three dimensional metals 
oscillate as functions of $ 1/B $. They were not aware that Landau had just predicted presence of such oscillations. The frequencies are proportional to the areas of the extremal cross sections of the Fermi surface in the direction of the magnetic field. This explanation was provided by Onsager
\cite{O52} and a rigorous mathematical proof was given by Helffer and Sj\"ostrand 
\cite{HS2}. 

In the case of graphene,  the dHvA effect does not seem to be well understood neither experimentally nor theoretically
\cite{L11}. This is partly due to difficulties in accounting for all the parameters of the system: for instance, in the grand-canonical ensemble is frequently used to model the dHvA effect \cite{sgb}, the chemical potentials  are assumed to be independent of the external magnetic field. For a thorough discussion of this assumption, also made in this paper, we refer to \cite{CM01}. (We comment that the assumption of having a constant chemical potential is also assumed in the 3D Lifshitz-Kosevich theory \cite{KF17} for the study of magnetic oscillations in the susceptibility of metals at low temperatures. A 2D analogue of the theory for metals has been developed by Shoenberg \cite{S84} and was discussed in the context of graphene in \cite{L11}.)
 
Compared to previous discussions of magnetic oscillations -- see
for instance \cite{sgb} and \cite{L11} -- where the limit of infinitely many ``perfect cone" Landau levels was considered, we are only going to assume that there are finitely many semiclassically corrected Landau levels.
 
To introduce magnetization, we first define the {\em grand-canonical potential} at temperature $ T = 1/\beta $. Since we are interested in 
chemical potentials (energy) near the Dirac energy, we choose a smooth
function $ \eta \in C_{\rm{c}}^\infty ( I ) $ which is equal to $1 $
in a neighbourhood of the Dirac energy and replace $ \rho_B $ by
$ \eta \rho_B $ we then define
\begin{equation}
\label{eq:Omy}
\Omega_\beta ( \mu , h ) := \rho_B ( \eta ( \bullet ) f_\beta ( \mu - \bullet) ) , \ \  f_\beta ( x )  := - \beta^{-1} \log ( e^{ \beta x } +  1) .
\end{equation}
We note that $ f_\infty ( x ) = - x_+ $ and we define $ \Omega_\infty $ 
using that function. Since $ f_\infty $ is a Lipschitz function, 
Theorem \ref{theorem3} implies that
\begin{equation}
\label{eq:sepot}\begin{gathered}
\Omega_{\beta} ( \mu, h) = 
\frac{h}{\pi \left\lvert b_1 \wedge b_2 \right\rvert}\sum_{n \in \mathbb{Z}} f_{\beta}(\mu-z_n(h)) \eta(z_n(h))+ \mathcal{O}(h^{\infty}), 
\end{gathered}
\end{equation}
which holds true for $ \Omega_\infty $ defined using $ f_\infty = - x_+ $.
The function $ x \mapsto f_\beta ( \mu - x  ) $ is uniformly smooth away from $ x = \mu $. For $ \mu$'s near $ z_D $, changing $ \eta $ gives uniformly smooth 
 contributions (in $ \mu $ and $ h $) -- see Theorem \ref{t:smooth}. 

\begin{rem} 
The grand-canonical potential at non-zero temperatures (finite values of $ \beta$) can be recovered from $ \Omega_\infty $ using 
the Fermi distribution $n_{\beta}$:
\begin{equation}
\label{eq:conv}
\Omega_{\beta} ( \mu, h) = \left(-n_{\beta}'*\Omega_{\infty}(\bullet, h) \right)(\mu), \ \ n_\beta (x ) := ( 1 + e^{\beta x } )^{-1} .
\end{equation}
Indeed, we easily check that 
$ \left(-(\bullet-x)_{+}*n_{\beta}' \right)(\mu) = f_\beta ( x ) $. 
\end{rem}

{\em Magnetization} is defined as 
\begin{equation}
\label{eq:magn} M_{\beta}(\mu, h): =- \left\lvert b_1 \wedge b_2 \right\rvert \frac{\partial }{\partial h} \Omega_{\beta}(\mu, h).
\end{equation}
If we consider the full expansion of the levels $ z_n (h ) $ (obtained from $ F ( \omega, h ) $ in Proposition \ref{p:specQ0}) we could analyse 
$ M_\beta $ for $ \beta < h^{-M_0 } $ for any fixed $ M_0 $ -- see the remarks after \cite[Theorem 10.2]{HS2}. 

To avoid technical complications, 
we will instead, similarly to \cite{HS2}, consider {\em formal} magnetization obtained using leading term DOS, $ \widetilde \rho_B $, from 
\eqref{eq:Diracmeasures}. That already shows the {\em sawtooth} pattern
derived in \cite{sgb} using the ``perfect cone" approximation -- see Theorem \ref{t:sawtooth} and 
Figure \ref{Fig:magnmu}. Remarkably it also agrees with the ``exact" spectral numerical calculation explained in \S \ref{spectral} -- see 
Figure \ref{Fig:cut-off}. 

Let us now consider chemical potentials located on the upper cone of the first Hill band, i.e. $\mu \in \Big[z_D,\Delta\vert_{B_1}^{-1}\left(-\tfrac{1}3\right)\Big].$ {\em Formal} grand-canonical potential and {\em formal} magnetization are obtained from \eqref{eq:sepot} and \eqref{eq:magn} by replacing $(z_n(h))$ with the semiclassical Landau levels $(z_n^{(1)})$ given by the leading order Bohr--Sommerfeld condition \eqref{BSC}, and thus defined as follows
\begin{equation}
\label{eq:formal}
\begin{gathered}
\omega_\beta ( \mu, h ) :=
\frac{h}{\pi \left\lvert b_1 \wedge b_2 \right\rvert}\sum_{n \in \mathbb{Z}} f_{\beta}(\mu-z_n^{(1)} (h)) \eta(z^{1}_n(h)) , \\
m_\beta ( \mu, h ):=  - \left\lvert b_1 \wedge b_2 \right\rvert \frac{\partial }{\partial h} \omega_{\beta}(\mu, h), 
\end{gathered}
\end{equation}
and 
\begin{equation}
\label{eq:Theta}
\eta ( x) = \Theta_{\frac12}(x):= \begin{cases}
0  & x < z_D \\
\tfrac12  & x = z_D \\
1 & z_D < x < \Delta\vert_{B_1}^{-1}\left(-\tfrac{1}3\right)  
\\
0 & x \geq \Delta\vert_{B_1}^{-1}\left(-\tfrac{1}3\right) .
\end{cases}
\end{equation}
(This non-smooth $ \eta $ is convenient for spectral calculations and hence comparing semiclassical and exact numerics. The energy $\Delta\vert_{B_1}^{-1}\left(-\tfrac{1}3\right)$ corresponds to the energetic upper end of the upper cone.)

The construction for chemical potentials on the lower cone of the first Hill band, i.e. $\mu \in \Big[\Delta\vert_{B_1}^{-1}\left(\tfrac{1}3\right),z_D\Big],$ is similar. Using the cut-off function 
\[ \eta=\indic_{\left[\Delta\vert_{B_1}^{-1}\left(\frac{1}{3}\right),z_D \right]}\left(1 - \Theta_{\tfrac{1}{2}}\right), \]
 we obtain the semiclassical approximation from Landau levels located on the lower cone at zero temperature
\begin{equation}
\label{eq:lowercc}
\omega_{\infty}(\mu,h) :=  \tfrac{h}{\pi \left\lvert b_1 \wedge b_2 \right\rvert} \sum_{n \in \mathbb{Z}} (\mu-g^{-1}(nh))_{-} \eta(g^{-1}(nh)).
\end{equation}
We compare the oscillations on the upper \eqref{eq:formal} and lower cone \eqref{eq:lowercc} at zero temperature showing the asymmetry between the two different cones in Figure \ref{Fig:cut-off}.

\begin{figure}
\includegraphics[width=14cm]{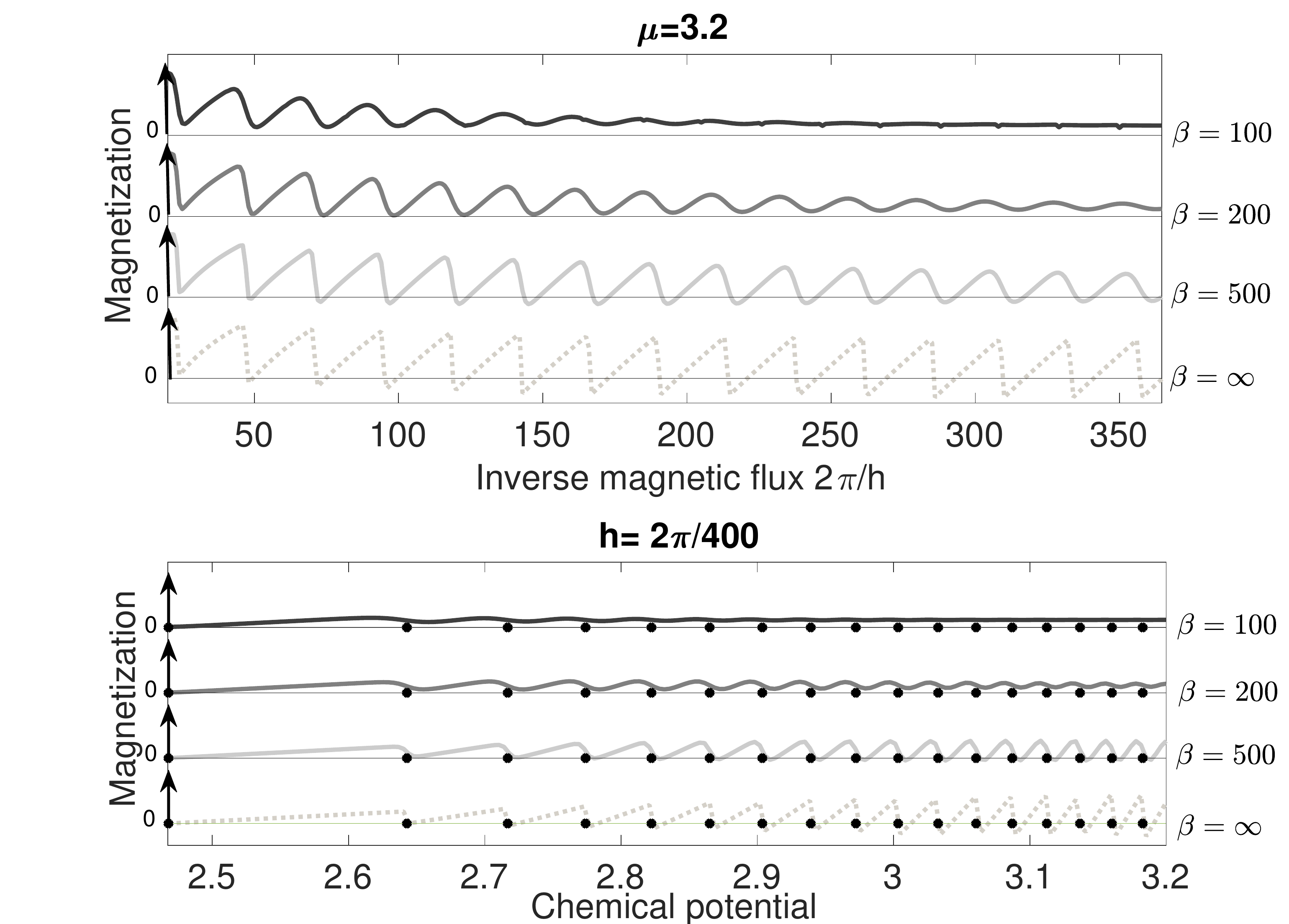}
  \caption{The magnetization \eqref{eq:formal} \label{Fig:magnmu} for different temperatures for a Hamiltonian with zero potential on the first Hill band. The sawtooth profile is clearly visible in the zero temperature limit $\beta= \infty$ and the oscillation period is approximately proportional to the inverse Fermi surface. As temperature increases, the oscillations become more smooth as predicted in \eqref{eq:conv} and the oscillation amplitude decreases. For zero temperature we see that the oscillation period increases linearly in $\mu$. This is no longer true when non-zero temperatures are considered.}
\end{figure}

The following asymptotic result shows the presence of ``sawtooth"
oscillations in magnetization. 

\begin{theo}
\label{t:sawtooth}
The formal magnetization for chemical potentials on the upper cone at zero temperature (defined in \eqref{eq:formal}) satisfies
\begin{equation}
\label{eq:omega3}
\begin{split} 
m_\infty ( \mu, h )  
= \frac{1}{\pi} \sigma \left( \frac{g ( \mu)} h\right) 
\frac{ g ( \mu) }{ g' ( \mu ) } + \mathcal O ( h^{\frac12} ), 
\end{split}  
\end{equation}
where $ g ( x ) = F_0 ( \Delta ( x ) ^2 ) $, with $ F_0 $ given in 
\eqref{eq:g2F},  is the leading term in the 
Bohr--Sommerfeld condition \eqref{BSC} and $ \sigma $ is the 
sawtooth function,
\begin{equation}
\label{eq:sigma} \sigma (y ) :=   y - [y] - \tfrac12 .
\end{equation}
\end{theo}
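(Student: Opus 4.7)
The plan is to reduce the magnetization to an explicit sum over the approximate Landau levels \eqref{BSC}, then apply a sawtooth summation identity and extract the leading oscillatory term. Using \eqref{eq:formal}, \eqref{eq:Theta}, and $f_\infty(x) = -x_+$, I would write, for $\mu$ on the upper cone,
\[
\omega_\infty(\mu,h) = -\frac{h}{\pi|b_1\wedge b_2|}\Big[\tfrac12(\mu-z_D) + \sum_{n=1}^{N(\mu,h)}(\mu - g_+^{-1}(nh))\Big], \ \ N(\mu,h):=\lfloor g(\mu)/h\rfloor.
\]
Since at $h$-thresholds $N(\mu,h)$ jumps up by one precisely when $Nh=g(\mu)$, at which instant the new summand equals $(\mu-\mu)=0$, the sum is continuous in $h$ and its classical derivative gives, after multiplication by $-|b_1\wedge b_2|$,
\[
m_\infty(\mu,h) = \frac{1}{\pi}\Big[\tfrac12(\mu-z_D) + \sum_{n=1}^{N} H(n)\Big], \quad H(t):=(\mu-z(t)) - \frac{th}{g'(z(t))}, \ \ z(t)=g_+^{-1}(th).
\]

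Next, I would apply the distributional identity $\sigma'(t) = 1 - \sum_{n\in\ZZ}\delta(t-n)$ (with $\sigma$ as in \eqref{eq:sigma}) to get, for $T=g(\mu)/h\notin\ZZ$,
\[
\sum_{n=1}^{N} H(n) = \int_0^T H(t)\,dt - \tfrac12 H(0) - \sigma(T)H(T) + \int_0^T \sigma(t)H'(t)\,dt.
\]
A short computation using $g(z_D)=g'(z_D)=0$, $g''(z_D)>0$ and the expansion $g(v)\sim c(v-z_D)^2$ shows $\lim_{t\to 0^+}th/g'(z(t))=0$, so $H(0)=\mu-z_D$ and the boundary term cancels the explicit $\tfrac12(\mu-z_D)$. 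At $t=T$ we have $z(T)=\mu$, so $H(T) = -g(\mu)/g'(\mu)$, producing the announced main term $\tfrac1\pi \sigma(g(\mu)/h) g(\mu)/g'(\mu)$.

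It remains to show that the smooth integral $\int_0^T H(t)\,dt$ vanishes identically and the oscillatory integral $\int_0^T \sigma(t)H'(t)\,dt$ is $\mathcal O(h^{1/2})$. For the first, the substitution $v=z(t)$, $dt=g'(v)/h\,dv$, turns it into
\[
\tfrac{1}{h}\int_{z_D}^\mu\big[(\mu-v)g'(v) - g(v)\big]\,dv,
\]
which vanishes by integration by parts (using $g(z_D)=0$). For the second, the same substitution yields $\int_{z_D}^\mu \tilde H'(v)\sigma(g(v)/h)\,dv$ with $\tilde H(v):=(\mu-v)-g(v)/g'(v)$; integrating by parts against the primitive $\Sigma(y)=\int_0^y\sigma$, which is $1$-periodic and bounded by $1/8$, reduces the task to estimating
\[
h\int_{z_D}^\mu \Sigma(g(v)/h)\,\partial_v\!\left(\frac{\tilde H'(v)}{g'(v)}\right)dv.
\]
The main obstacle is the singularity at $v=z_D$, where $\tilde H'(v)/g'(v)\sim \mathrm{const}/(v-z_D)$ and its derivative is $\sim (v-z_D)^{-2}$, not integrable. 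I would split at $|v-z_D|\sim \sqrt{h}$ and use the refined bound $|\Sigma(g(v)/h)|\lesssim g(v)^2/h^2\lesssim (v-z_D)^4/h^2$ valid when $g(v)<h$ to tame the inner piece, and $|\Sigma|\le 1/8$ together with $\int_{\sqrt h}^{\mu-z_D}(v-z_D)^{-2}\,dv\lesssim h^{-1/2}$ on the outer piece; both pieces produce an $\mathcal O(\sqrt h)$ contribution. Combining these steps yields \eqref{eq:omega3}.
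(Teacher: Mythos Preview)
Your argument is correct and complete, with one inessential slip: you claim the ``refined bound'' $|\Sigma(y)|\lesssim y^2$ near $y=0$, but in fact $\Sigma(y)=\tfrac12 y^2-\tfrac12 y$ on $[0,1)$ so only $|\Sigma(y)|\lesssim y$ holds. This does not hurt you: with $|\Sigma(g(v)/h)|\lesssim (v-z_D)^2/h$ the inner integral is still $h\int_0^{\sqrt h}\frac{w^2}{h}\cdot w^{-2}\,dw=\sqrt h$. You should also record that the boundary term from the integration by parts against $\Sigma$ vanishes at $v=z_D$ (since $\tilde H'(v)/g'(v)\sim c/(v-z_D)$ while $\Sigma(g(v)/h)\sim -c'(v-z_D)^2/h$, the product tends to $0$) and is $\mathcal O(h)$ at $v=\mu$; both are harmless.

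Your route genuinely differs from the paper's. The paper first rewrites the sum inside $\omega_\infty$ via the elementary identity $\lfloor y\rfloor=y-\sigma(y)-\tfrac12$, obtaining
\[
\omega_\infty(\mu,h)=G(\mu)+\frac{h^2}{\pi|b_1\wedge b_2|}\int_0^{g(\mu)/h}\sigma(z)\,(g^{-1})'(zh)\,dz,
\]
and only then differentiates in $h$. The remainder becomes $h^{1/2}\int_0^{g(\mu)/h}\sigma(z)z^{-1/2}a(zh)\,dz$ with $a$ smooth, and is estimated by summing $\int_n^{n+1}\sigma(z)z^{-1/2}a(zh)\,dz=\mathcal O(n^{-3/2})$, using $\int_0^1\sigma=0$. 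By contrast, you differentiate first, apply Euler--Maclaurin to the resulting sum $\sum H(n)$, and then observe that the smooth integral $\int_0^T H$ vanishes \emph{exactly} by integration by parts. What the paper's approach buys is that the $z$-variable absorbs the square-root singularity at the Dirac point into an explicit $z^{-1/2}$ weight against a smooth $a$, so no splitting is needed; what your approach buys is a transparent Euler--Maclaurin structure in which the non-oscillatory part cancels identically rather than being absorbed into an $h$-independent $G(\mu)$.
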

\begin{proof}  Since in \eqref{eq:formal} $ z_n^{(1)} ( h ) = g_+^{-1} (nh) $ (we drop $ + $ in what follows)
and $ \eta = \Theta_{\frac12} $,
\begin{equation}
\label{eq:omega1}
\omega_{\infty} ( \mu, h ) = - \frac{ h}{ \pi | b_1 \wedge b_2 | }
\left( \tfrac 12 (\mu - z_D) + \sum_{ n \geq 1 } ( \mu - g^{-1} ( nh) )_+ \right). 
\end{equation}
We rewrite the sum appearing in \eqref{eq:omega1} as follows:
\begin{equation}
\label{eq:saw1}
\begin{gathered} \sum_{ n\geq 1 } ( \mu - g^{-1} ( n h ) )_+
=  - \tfrac12 h (\mu - z_D) + \int_{ z_D }^\mu \left( \frac{g ( x ) }{h}
-  \sigma \left( \frac{g ( x ) } h \right) \right), 
\end{gathered} \end{equation}
where $ \sigma $ is defined by \eqref{eq:sigma}.
In fact, both sides are $ 0 $ at $ \mu = \Delta\vert_{B_k}^{-1} ( 0 ) $ and the derivative of the left hand side is
\[  \sum_{ n \geq 0 } ( \mu - g ^{-1} ( n h) )_+^0 =  \left[ \frac{g(\mu) }h \right] = \frac{g ( \mu )}h -  \sigma \left( \frac{g ( x ) } h \right) - \tfrac 1 2  . \]
 This gives the following expression for $ \omega$:
\begin{equation}
\label{eq:omega2}
\begin{split} 
\omega_{\infty} ( \mu, h ) & = -  \frac1 { \pi | b_1 \wedge b_2 | } 
\int_{ \Delta\vert_{B_k}^{-1} ( 0 ) }^\mu \left( g ( x ) 
- h \sigma \left( \frac{g ( x ) } h \right) \right) \\
& = G ( \mu ) + \frac {h^2} { \pi | b_1 \wedge b_2 | } \int_0^{ g( \mu)/h} 
\sigma ( z ) ( g^{-1} )' ( zh ) dz , 
\end{split}
\end{equation}
where $ G ( \mu ) $ is independent of $ h $. Hence, 
\[ \begin{split}  m_{\infty}( \mu , h ) -  \frac{1}{\pi} \sigma \left( \frac{g ( \mu)} h\right) 
\frac{ g ( \mu) }{ g' ( \mu ) } & =  h \int_0^{g(\mu)/h} 
\sigma ( z  ) \left( (g^{-1})'' ( zh ) zh + 2 ( g^{-1})' ( z h ) \right) dz
\\
& =  h^{\frac12}   \int_{0}^{{g(\mu)}/{h}}   {\sigma\left({  z}\right) } { z^{- \frac12} } a ( z h )   dz  , \end{split} \]
where 
$ a ( \xi ) :=  (g^{-1})''(\xi ) \xi^{\frac32}+ 2 (g^{-1})'(\xi ) \xi^{\frac12} $. 
The function $ a $ is smooth since $ g ( x ) = (  G^{-1} ( x - \Delta\vert_{B_k}^{-1} (0)) )^2 $ where
$ G ( 0 ) = 0 $, $ G'(0) \neq 0 $. That means that $ g^{-1} ( \xi ) = 
\Delta\vert_{B_k}^{-1} ( 0 ) + \xi^{\frac12} \varphi( \xi) $, $ \varphi \in C^\infty $ so that
$ a ( \xi ) = \frac34 \varphi ( \xi ) + 3\xi \varphi'( \xi) + \xi^2 \varphi''( \xi) \in C^\infty $.
We then write
\begin{equation}
\label{eq:h12}  \int_{0}^{{g(\mu)}/{h}}   {\sigma\left({  z}\right) } { z^{- \frac12} } a ( z h )   dz = h^{\frac12} \sum_{ n=0}^{ [ g ( \mu)/h ] - 1 } 
\int_{n}^{n+1}   {\sigma\left({  z}\right) } { z^{- \frac12} } a ( z h )   dz + \mathcal O ( h^{\frac12} ) .\end{equation}
For $ 1 \leq n \leq c/h $, 
\[\begin{split} \int_{n}^{n+1}   {\sigma\left({  z}\right) } { z^{- \frac12} } a ( z h )   dz & = 
\int_0^1 \sigma (   z ) ( z + n )^{-\frac12} a ( h ( z + n ) ) d z \\
& = n^{-\frac12} \int_0^1 \sigma (   z ) ( 1 + z/n )^{-\frac12} a ( n h ( 1 + z/ n ) ) d z \\
& = n^{- \frac12} a ( n h )  \int_0^1 \sigma (   z ) dz + 
\mathcal O ( n^{-\frac32} ) = \mathcal O ( n^{ -\frac32} ).
\end{split} \]
Hence the sum on the right hand side of \eqref{eq:h12} is bounded and
that concludes the proof of \eqref{eq:omega3}.
\end{proof}

The leading term in \eqref{eq:omega3} encapsulates the classical features of the dHvA effect: the function $\sigma(x)$ is periodic and its jump discontinuities coincide with the location of the Landau levels visible as the valleys in the lower Figure \ref{Fig:magnmu}.  The sawtooth profile shown in Figures \ref{Fig:fullspec} and \ref{Fig:cut-off}  of the oscillations agrees with the results obtained in \cite{sgb} and \cite{CM01} in which a sawtooth shape for magnetic oscillations in graphene was predicted. The quantity $g(\mu)$ is precisely the area enclosed by the Fermi curve as in the description of dHvA effect given by Onsager \cite{O52}.  In particular, this shows that the dHvA effect can be used as a test to study deviations from the perfect cone shape in graphene.  Finally, the scaling factor ${g(\mu)}/{g'(\mu)}$ implies a (at leading order) linear growth of the magnetic oscillations as a function of the chemical potential shown in Figure \ref{Fig:magnmu}.

\subsection{A Spectral approach to magnetic oscillations}
\label{spectral}
It is well known that when the magnetic flux $h$ satisfies $ h/ 2 \pi \in \mathbb Q $, modified Floquet theory can be used to describe the spectrum of 
$ H^B $ and the density of states. In particular, when 
$ h =\frac{2\pi p}{q}$, $ p, q \in \NN $, then 
the Floquet spectrum as a function of quasi-momentum $ k $ can be 
calculated using $ 2q \times 2q $ matrices -- see \cite{BHJ17}. 

More precisely, for $ k \in \TT^2_* $ we follow \cite{BHJ17} and define 
\begin{equation}
\label{eq:Tq}
\begin{gathered} 
T_q(k):=\tfrac{1}{3}\left( \begin{matrix} 0 & \operatorname{id}_{\mathbb{C}^q} + e^{ik_1} J_{p,q}+ e^{ik_2} K_q \\ \operatorname{id}_{\mathbb{C}^q}  + e^{-ik_1} J_{p,q}^*+ e^{-ik_2} K_q^* & 0 \end{matrix} \right) 
\end{gathered}
\end{equation}
where 
\[ 
 (J_{p,q})_{ j \ell } = e^{\frac{ 2 \pi p} q i(\ell-1)} \delta_{j \ell}, \ \    (K_q)_{j\ell}= \left\{ \begin{array}{ll} 1  & \ell  \equiv j+1 \mod q \\
0 & \text{ \ \ \ otherwise}, \end{array} \right.  
\ \ \ 
 1 \leq j, \ell \leq q .\]
  Then $\lambda \in \operatorname{Spec}(H^B) \backslash \operatorname{Spec}(H^D) $ if and only if $\Delta(\lambda) \in \bigcup_{k \in \mathbb{T}_*^2}\operatorname{Spec}(T_q(k))$. 
Thus, on each Hill band $H^B$ has $2q$ non-overlapping bands that touch at the conical point.  In particular, there are $q$ bands above and below the conical point.

The density of state is given in the following
\begin{lemm}
\label{spectralregtra}
Let $h={2\pi p}/{q}$ then for any $f \in C_c(\mathbb{R} \setminus 
\Spec ( H^D ) )$
\begin{equation}
\widetilde{\operatorname{tr}}(f(H^B)) = \frac{1}{q \left\lvert b_1 \wedge b_2 \right\rvert} \int_{\mathbb{T}_*^2} \sum_{\Delta( \lambda ) \in \Spec ( T_q ( k ) ) } f(\lambda) \frac{dk }{\left\lvert \mathbb{T}_{*}^2 \right\rvert}.
\end{equation}
\end{lemm}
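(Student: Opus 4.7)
The plan is to exploit the rationality of the flux: when $h = 2\pi p/q$, the commutation relation \eqref{commrelkl} shows that the magnetic translations $T_\gamma^B$ and $T_\delta^B$ commute whenever $\omega(\gamma,\delta) \in q\mathbb{Z}$. Following \cite{BHJ17}, I would take as commuting sublattice $\Gamma := q\mathbb{Z} b_1 \oplus \mathbb{Z} b_2$, whose fundamental domain $F_\Gamma$ has area $q|b_1 \wedge b_2|$ and contains exactly $q$ copies of $W_\Lambda$, hence $2q$ vertices and $3q$ edges.

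First I would perform the standard Floquet--Bloch decomposition associated with this commuting family of unitaries: one obtains a unitary
\[
\mathcal{F}: L^2(\mathcal{E}) \xrightarrow{\;\cong\;} \int_{\mathbb{T}^2_*}^\oplus \mathcal{H}_k \, \frac{dk}{(2\pi)^2},
\]
where $\mathcal{H}_k$ is the finite-dimensional space of functions on the edges of $F_\Gamma$ satisfying the twisted boundary conditions $T_\gamma^B \psi = e^{i k\cdot \gamma}\psi$ for $\gamma \in \Gamma$, and $\mathcal{F} H^B \mathcal{F}^* = \int_{\mathbb{T}_*^2}^\oplus H^B(k) \, dk/(2\pi)^2$. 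Each $H^B(k)$ is self-adjoint with compact (indeed, finite-rank) resolvent.

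Next I would apply Krein's formula (Proposition \ref{KRF}) fiberwise. The $k$-dependent analogue of \eqref{eq:defM} reduces the spectral problem for $H^B(k)$, modulo the Dirichlet spectrum, to the equation $K_\Gamma(k)\psi = \Delta(\lambda)\psi$ on the finite-dimensional vertex space, where $K_\Gamma(k)$ is the fiber of $K_\Lambda$ at quasi-momentum $k$. A direct calculation, carried out in \cite{BHJ17}, identifies $K_\Gamma(k)$ with the matrix $T_q(k)$ of \eqref{eq:Tq}; the gamma-field $\gamma(\lambda)$ of \eqref{gfield} provides the isomorphism $\ker(K_\Gamma(k) - \Delta(\lambda)) \to \ker(H^B(k) - \lambda)$ preserving multiplicities, so for $\lambda \notin \Spec(H^D)$ one has $\lambda \in \Spec(H^B(k))$ if and only if $\Delta(\lambda) \in \Spec(T_q(k))$.

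Finally, to compute the regularized trace I would combine this decomposition with the averaging argument of Lemma \ref{existencelemma}: approximating $B(R)$ from above and below by unions of $\Gamma$-translates of $F_\Gamma$ and using the Floquet diagonalization together with the cyclicity of the trace on each fiber yields
\[
\widetilde{\operatorname{tr}}\, f(H^B) = \frac{1}{|F_\Gamma|} \int_{\mathbb{T}^2_*} \operatorname{tr}_{\mathcal{H}_k} f(H^B(k)) \, \frac{dk}{(2\pi)^2}.
\]
Since $\operatorname{supp} f \cap \Spec(H^D) = \emptyset$ and each eigenvalue contributes with multiplicity matching the corresponding eigenvalue of $T_q(k)$, the fiber trace equals $\sum_{\Delta(\lambda) \in \Spec(T_q(k))} f(\lambda)$, producing the prefactor $1/(q|b_1 \wedge b_2|)$ as claimed. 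The main obstacle is the identification $K_\Gamma(k) = T_q(k)$ with correct multiplicities: this requires carefully tracking the twisted boundary conditions defining $\mathcal{H}_k$ through the unitary conjugations of Lemma \ref{eq:unit} so that the Floquet reduction over $\Gamma$ produces exactly the $2q \times 2q$ matrix \eqref{eq:Tq}, which is precisely what is verified in \cite{BHJ17}.
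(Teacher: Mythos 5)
Your proposal is correct and follows essentially the same route as the paper: enlarge the period lattice to a commuting sublattice of cell area $q\lvert b_1\wedge b_2\rvert$, reduce the regularized trace to the trace over that cell as in Lemma \ref{existtra}, apply Floquet theory to write it as an integral of fiber traces, and invoke the characterization $\Delta(\lambda)\in\Spec(T_q(k))$ from \cite{BHJ17}. The only difference is cosmetic — you spell out the fiberwise Krein reduction and the multiplicity-preserving role of the gamma-field, which the paper delegates entirely to the citation.
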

\begin{proof}
Since the flux is of the form $h=\frac{2\pi p}{q}$ there is a fundamental cell $W_{\Lambda}^B$ of measure $q \left\lvert b_1 \wedge b_2 \right\rvert$ with respect to which the operator $H^B$ is translational invariant \cite{BHJ17}. Thus, along the lines of Lemma \ref{existtra} we find that
\begin{equation}
\widetilde{\operatorname{tr}}(f(H^B)) = \tfrac{1}{q \left\lvert b_1 \wedge b_2 \right\rvert} \operatorname{tr}(\indic_{W_{\Lambda}^B}f(H^B)).
\end{equation}
By Floquet theory, $f(H^B)$ is unitary equivalent to the bounded decomposable operator $ \int_{\mathbb{T}_*^2}^{\oplus} f(H^B)(k) \frac{dk}{\left\lvert \mathbb{T}_*^2 \right\rvert}$ such that for any orthonormal basis,  $\{ \varphi_n\} _{n \in \mathbb{N}}$,  of $L^2(W_{\Lambda}^B)$
\begin{equation}
\begin{split}
\widetilde{\operatorname{tr}}(f(H^B)) &= \tfrac{1}{q \left\lvert b_1 \wedge b_2 \right\rvert} \operatorname{tr} \indic_{W_{\Lambda}^B} f(H^B) \\
							&=  \tfrac{1}{q \left\lvert b_1 \wedge b_2 \right\rvert} \sum_{n \in \mathbb{N}}\left\langle \varphi_n, f(H^B)(k) \varphi_n \right\rangle_{L^2\left(\mathbb{T}_*^2, \tfrac{dk}{\left\lvert \mathbb{T}_*^2 \right\rvert}\right) \otimes L^2(W_{\Lambda}^B)}\\
							& = \tfrac{1}{q \left\lvert b_1 \wedge b_2 \right\rvert} \int_{\mathbb{T}_*^2} \operatorname{tr}_{L^2(W_{\Lambda}^B)} f(H^B)(k)  \frac{dk}{\left\lvert \mathbb{T}_*^2 \right\rvert}  \\
							& =\tfrac{1}{q \left\lvert b_1 \wedge b_2 \right\rvert} \int_{\mathbb{T}_*^2} \sum_{\lambda \in \operatorname{Spec}(H^B(k))} f(\lambda) \frac{dk }{\left\lvert \mathbb{T}_{*}^2 \right\rvert}.
\end{split}
\end{equation}
Away from $ \Spec ( H^D ) $ the spectrum of $ H^B ( k ) $ is characterized by $ \Delta ( \lambda ) \in \Spec ( T_q ( k ) )$ and \eqref{spectralregtra} follows.
\end{proof}
In the semiclassical regime $h \to 0$,  the location of the energy bands of  $ \Spec ( H^B) $ coincides with the location of the semiclassical Landau levels close to the conical point. 
By using the actual spectrum of $H^B$, the broadening of the Landau levels, known as \emph{Harper broadening} \cite{KH14}, is already part of the model and does not have to be approximated as in \cite{sgb} or \cite{CM01}.
We should stress that Lemma \ref{l:awayspec} shows that the width of the bands is $ O ( h^\infty ) $ and finer analysis of \cite{HS0} could be used to show that the width is in fact $ O ( e^{ - c/h} ) $.

The advantage of the representation of the density of states in Lemma \ref{spectralregtra} is that we can calculate DOS numerically for larger values of $ h $, that is, for {\em strong magnetic fields.}
This approach is similar to the study of magnetic oscillations in the tight-binding model presented in \cite{KH14}.

Let the magnetic flux be of the form $h=2\pi{ p}/{q}$ with $p \in \mathbb{Z}, q \in \mathbb{N}$, then we study the grand-canonical potential localized to the spectrum on the first Hill band which by Lemma \ref{spectralregtra} satisfies
\begin{equation}
\label{eq:gc1}
\begin{split} 
\Omega_{\beta} ( \mu, h) & := (f_{\beta}* \eta\rho_B)(\mu) \\
& =-\tfrac{1}{q \left\lvert b_1 \wedge b_2 \right\rvert} \tfrac{1}{\beta}\int_{\mathbb{T}_*^2} \sum_{ \Delta\vert_{B_1} ( \lambda ) \in \Spec ( T_q ( k ) )} \log \left( \operatorname{exp} \left( \beta (\mu- \lambda ) \right)+1 \right)  \tfrac{dk}{\left\lvert \mathbb{T}_*^2 \right\rvert}.
\end{split} 
\end{equation}
Here $\eta$ is one on $\mathcal A:=\Delta\vert_{B_1}^{-1}( \bigcup_{k \in \mathbb{T}_*^2}  \Spec ( T_q ( k ) )$ and zero on $\Spec ( H^B )\backslash \mathcal A.$
In the zero temperature limit this reduces to
\begin{equation}
\label{eq:gc2}
\Omega_{\infty} ( \mu, h)= -\tfrac{1}{q \left\lvert b_1 \wedge b_2 \right\rvert} \int_{\mathbb{T}_*^2} \sum_{ \Delta\vert_{B_1} ( \lambda ) \in \Spec ( T_q ( k ) )}  \left(\mu- \lambda \right)_{+}  \tfrac{dk}{\left\lvert \mathbb{T}_*^2 \right\rvert}. 
\end{equation}
Remarkably, $\Omega_{\infty}$ satisfies $\Omega_{\infty}(\mu,h)=(f_{\infty}* \rho_B)(\mu)$ without any cut-off for $\mu < \inf \Spec ( H^D ).$
The definition of the grand-canonical potential used here coincides with the expression in \cite{GA03} up to the regularizing pre-factor $ (q \left\lvert b_1 \wedge b_2 \right\rvert)^{-1} .$

\begin{figure}
   \includegraphics[width=13cm]{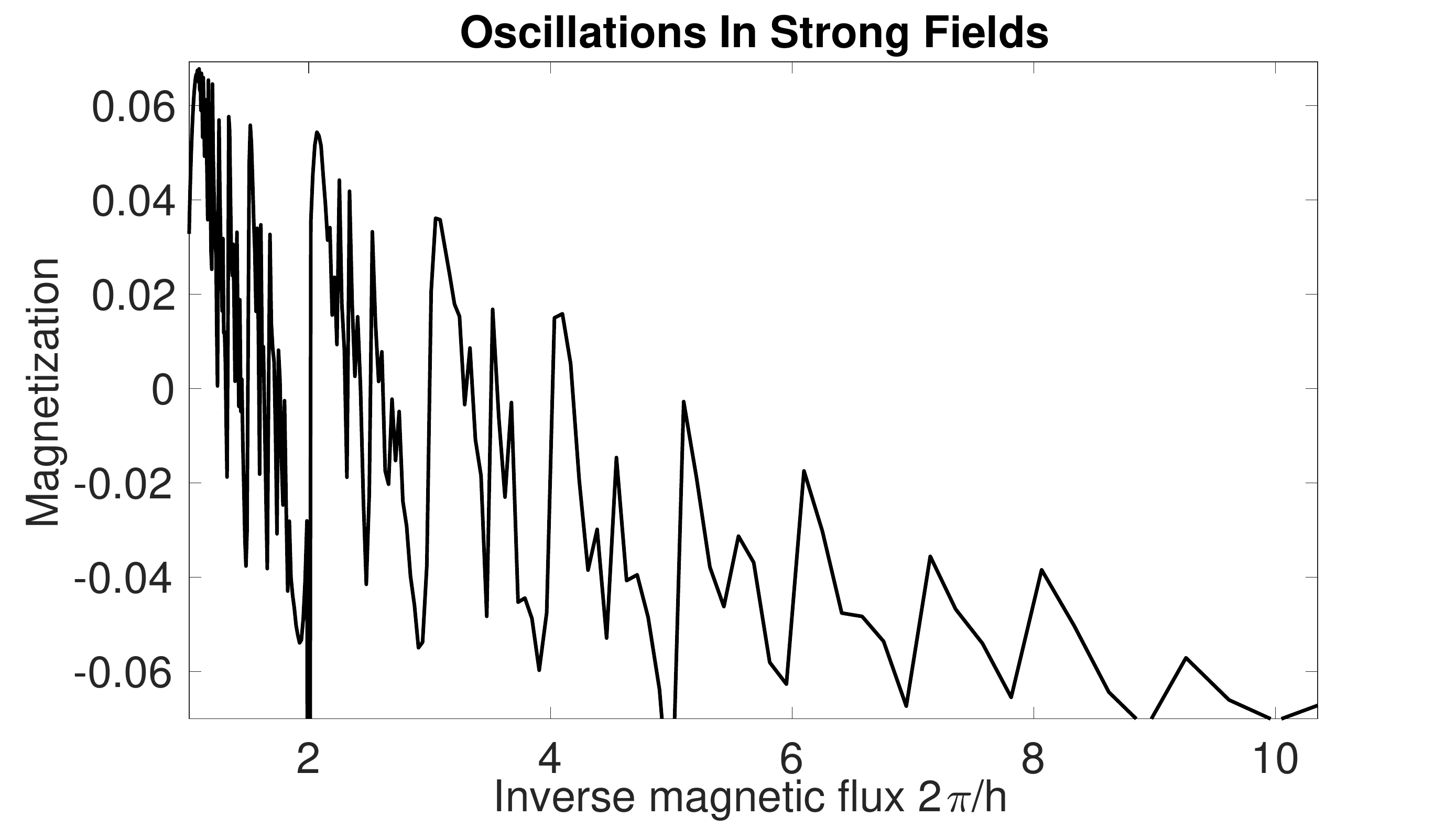} 
  \caption{The magnetization \eqref{eq:magn} \label{Fig:magn} for a Hamiltonian with zero Hill potential at $ \mu=\frac{\pi^2}{4}$ 
  given by the Dirac energy.  The magnetization shows a decaying inverted saw-tooth profile with oscillations in $ 1/h $ and additional high-frequency modulations. As $ 1/h $ increases we move to the semiclassical regime in which no oscillations occur at the Dirac energy -- see Figure \ref{Fig:fullspec}. }
\end{figure}

{\em Magnetization} is defined by \eqref{eq:magn} and we compute it numerically for \eqref{eq:gc2} using finite difference approximation at rational points.
Results for computation using difference quotients for magnetic fluxes $h=2\pi \frac{p}{150}$ and $p \in \left\{1,...,150\right\}$ are shown in
Figure \ref{Fig:magn}.
The results we obtain are in good agreement with the oscillations obtained in \cite{KH14}. The magnetization shows a decaying inverted saw-tooth profile with oscillations in $ 1/h $ and additional high-frequency modulations. These type of magnetic oscillations are an effect of \emph{strong magnetic fields}. Unlike the dHvA oscillations discussed in \S \ref{dHvA}, the magnetization for such strong magnetic fields deviates significantly from the semiclassical approximation. In particular, the characteristic oscillatory profile caused by the strong magnetic field decreases for sufficiently small magnetic fluxes as we see in Figure \ref{Fig:magn}. Moreover, there are no oscillations when the chemical potential agrees with the energy of the Dirac point in the semiclassical limit. 

Figure \ref{Fig:fullspec} shows the magnetization \eqref{eq:magn}
computed using \eqref{eq:gc2} for values of $ h $ in the semiclassical 
regime.

\begin{figure}
  \centering
   \includegraphics[width=14cm]{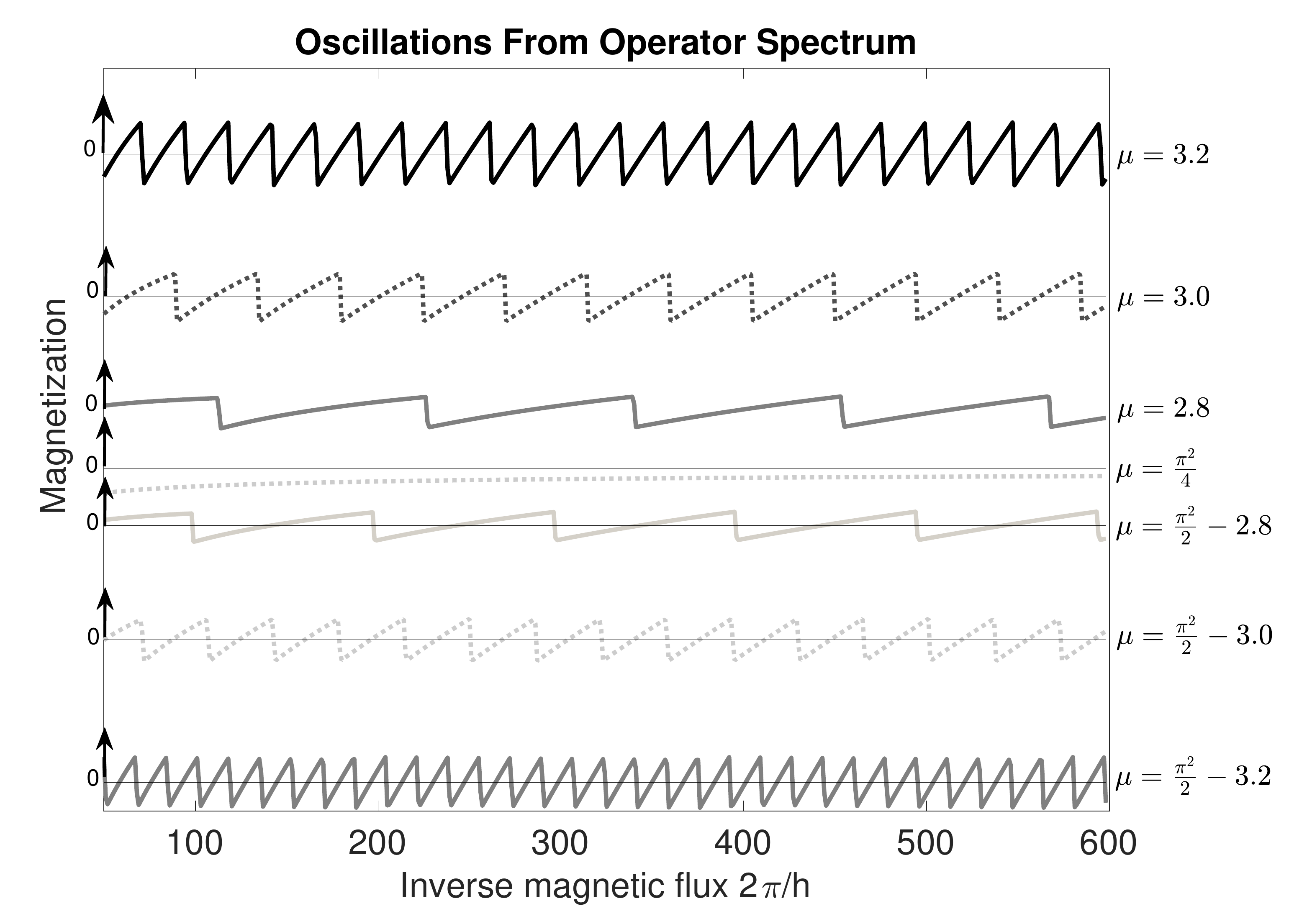} 
\caption{\label{Fig:fullspec}The magnetization \eqref{eq:magn} 
as a function of the inverse flux for specific chemical potentials ($\mu=\frac{\pi^2}{4}$ is the location of the Dirac point) for a magnetic Hamiltonian with zero potential at zero temperature. 
It is computed using the spectral method \eqref{eq:gc2}. 
The magnetization for all chemical potentials is true to scale and calculated at zero temperature from the full operator spectrum (i.e. no cut-off is used). We calculated the magnetization for inverse fluxes $\tfrac{2\pi}{q}$ with $q \in \left\{10,..,600\right\}.$ One clearly sees the antisymmetry between the different magnetic oscillations with respect to the conical point. The figures show (away from the Dirac point) jump discontinuities caused by the crossing of chemical potential and Landau levels.}
\end{figure}

\begin{figure}
  \centering
   \includegraphics[width=13cm]{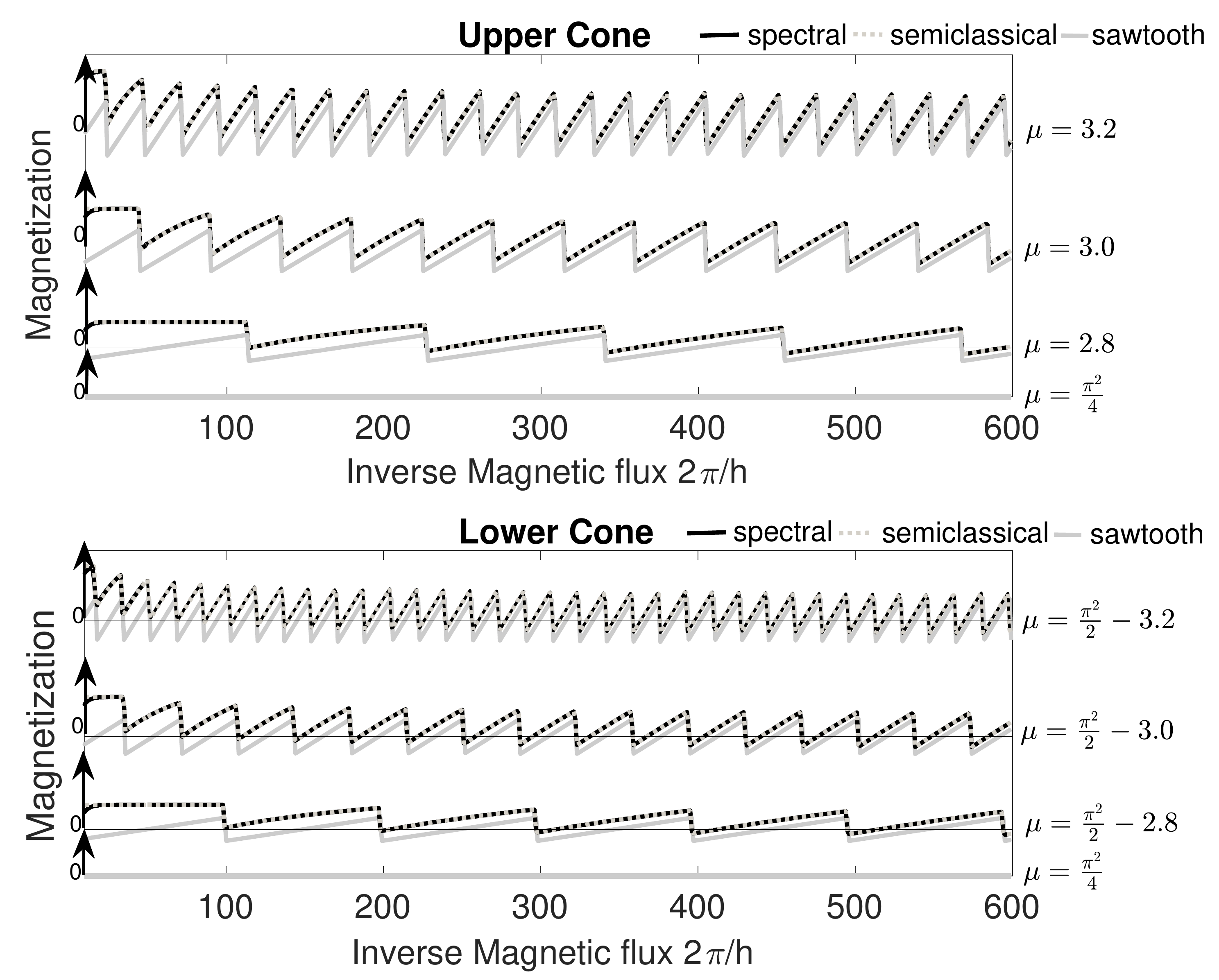} 
\caption{\label{Fig:cut-off} The magnetization  
for four different chemical potentials {\em above/below} the Dirac point located at $\mu=\tfrac{\pi^2}{4}$ on the first Hill band of the magnetic Hamiltonian with zero potential at zero temperature. Continuous lines are computed from the operator spectrum precisely by numerically differentiating \eqref{eq:lowerc} and dotted lines (grey) from numerically differentiating the semiclassical expression  \eqref{eq:lowercc}.  We evaluated those expressions for steps $\tfrac{2\pi}{q}$ with $q \in \left\{10,..,600\right\}.$ The magnetization for all chemical potentials is true to scale. Both the spectral and semiclassical oscillations show (away from the Dirac point) equally spaced jump discontinuities caused by the crossing of chemical potential and Landau levels. We see that both oscillations coincide up to large magnetic fields (small values of 
 $ 1/ h$). }
\end{figure}

\subsection{ Comparing spectral and semiclassical calculations }
\label{compa}

We now compare the exact spectral calculations at 
magnetic fluxes of the form $h= 2\pi {p}/{q}$ \eqref{eq:gc1} with the results obtained from the semiclassical trace formula \eqref{eq:formal} where we approximate $z_h(n)$ in \eqref{eq:sepot} by $z_h^{(1)}(n):=g^{-1}(nh)$. 

As explained in \S \ref{spectral}, the  spectrum of $ H^B $ away from the Dirichlet spectrum of $H^D$ is fully determined by the eigenvalues of $T_q$ as in \eqref{eq:Tq}. This matrix has for every quasi-momentum $k \in \mathbb{T}_{*}^2$ precisely $2q$ eigenvalues $\lambda_1(k)\le...\le \lambda_{2q}(k)$, of which, when pulled back under $\Delta \vert_{B_k}$, precisely half are located below and above the conical point $\Delta \vert_{B_k}^{-1}(0)$. Moreover, it is easy to see that there are always two touching bands at the conical point as discussed in \cite{BHJ17} and \cite{HKL16}. 

For chemical potentials $\mu \in \left[z_D,\Delta\vert_{B_1}^{-1}\left(-\tfrac{1}3\right)\right]$ on the upper cone of the first Hill band we define the grand-canonical potential calculated from DOS of the operator spectrum as in Lemma \ref{spectralregtra}
\begin{equation}
\label{eq:usedgcpot}
\Omega_{\infty} ( \mu, h)= (f_{\infty}*\eta \rho_{B})(\mu)= -\tfrac{1}{q \left\lvert b_1 \wedge b_2 \right\rvert} \int_{\mathbb{T}_*^2} \sum_{ i \in \{ 1, \cdots q \}}  \left(\mu- \Delta|_{B_1}^{-1} (\lambda_i ( k)  ) \right)_{+} \tfrac{dk}{\left\lvert \mathbb{T}_*^2 \right\rvert}. 
\end{equation}
This is the grand-canonical potential calculated from the operator spectrum which corresponds to the semiclassical potential \eqref{eq:Omy}
with $ \beta = \infty $ and $ \eta = \Theta_{\frac12} $ from \eqref{eq:Theta}. 

For $\mu \in \left[\Delta\vert_{B_1}^{-1}\left(\tfrac{1}{3} \right), z_D \right]$ on the lower cone of the first Hill band, we proceed similarly: in this case, the grand-conical potential, which is defined using the spectrum located between the chemical potential and the conical point, reads
\begin{equation}
\label{eq:lowerc} 
\Omega_\infty ( \mu, h) = \tfrac{1}{q \left\lvert b_1 \wedge b_2 \right\rvert} 
\int_{\mathbb{T}_*^2} \sum_{ i \in \{ q+1 , \cdots 2q \}}  \left(\mu- \Delta|_{B_1}^{-1} (\lambda_i ( k) )  \right)_{-} \tfrac{dk}{\left\lvert \mathbb{T}_*^2 \right\rvert}. 
\end{equation}
This potential is the spectral analogue of the semiclassical potential \eqref{eq:lowercc}.

We compare the computation of magnetization \eqref{eq:magn} calculated using
finite difference method from \eqref{eq:usedgcpot} and \eqref{eq:lowerc}  with the formal
semiclassical magnetizations from \eqref{eq:formal} and \eqref{eq:lowercc} on both cones. The results are shown in Figure
\ref{Fig:cut-off} and we see a remarkable agreement of the semiclassical approximation with the spectral computation. The sawtooth approximation given in Theorem \ref{t:sawtooth} is also shown.


\end{document}